\documentclass[a4paper,11pt,onecolumn]{IEEEtran} 

\usepackage{amssymb,amsmath,amsfonts,color,mathrsfs,amsthm,verbatim}
\usepackage{algorithm}
\usepackage{enumerate}
\usepackage{graphicx}
\usepackage[noend]{algpseudocode}

\topmargin -1cm 
\textheight 22cm \textwidth 16cm
\oddsidemargin 0pt
\topmargin -1cm
\makeatletter
 
\newcommand{\dm}{d_{\rm{max}}}
\newcommand{\kr}{\left \lceil \frac{k}{r} \right \rceil}
\newcommand{\lkr}{\left \lceil \frac{k}{r} \right \rceil}

\newcommand{\com}{\hbox{, }}
\newcommand{\hand}{\hbox{ and }}

\newcommand{\apart}{(n,k,d,r,\delta)}

\newcommand{\atms}{\apart \hbox{-matroids}}

\newcommand{\cflat}{\mathcal{Z}}

\newcommand{\cl}{\mathrm{cl}}
\newcommand{\coatom}{\mathrm{coA}_\cflat}

\newtheorem {corollary}{Corollary}[section]
\newtheorem {lemma}{Lemma}[section]
\newtheorem {thm}{Theorem}[section]
\newtheorem {proposition}{Proposition}[section]
\newtheorem {exam}{Example}[section]
\newtheorem {definition}{Definition}[section]
\linespread{1.5}%draft format, draft in document class makes the figures not work, god knows why, or maybe doesn't. 

\begin{document}
\pagestyle{plain}

\title{On the Combinatorics of Locally Repairable Codes via Matroid Theory}
\author{Thomas Westerb\"ack\thanks{T. Westerb\"ack, T.~Ernvall, and C. Hollanti  are with the Department of Mathematics and Systems Analysis, 
%P.O. Box 11100, FI-00076 
Aalto University, Finland.}, Ragnar Freij-Hollanti\thanks{R. Freij-Hollanti is with the Department of Communications and Networking, Aalto University, Finland.}, Toni Ernvall, and  Camilla Hollanti\thanks{E-mails: \{thomas.westerback, ragnar.freij, toni.ernvall, camilla.hollanti\}@aalto.fi.}
\thanks{The research of R. Freij-Hollanti is partially supported by the Finnish Academy of Science and Letters. The research of  C. Hollanti is supported by the Academy of Finland grants \#276031, \#282938, and \#283262, and by Magnus Ehrnrooth Foundation, Finland. The support from the European Science Foundation under the ESF COST Action IC1104 is
also gratefully acknowledged.}\thanks{Preliminary and partial results of this paper were presented at the 2014 IEEE Information Theory Workshop (ITW) in Hobart, Tasmania \cite{tassie}.}
}
\maketitle

%\thanks{The research of T. Ernvall is supported in part  by the Academy of Finland grant \#131745.}
%\thanks{The research of T. Westerb{\"a}ck is funded by ...}
%\thanks{The research of C. Hollanti is funded by...}
%\thanks{Part of this work appeared at ...}

%\author{
%\authorblockN{Thomas Westerb{\"a}ck$^*$}
%\authorblockA{$^*$Department of Mathematics and Systems Analysis\\ Aalto University, P.O. Box 11100\\ FI-00076 Aalto, Finland \\ (e-mails: \{firstname.lastname@aalto.fi\})}
%\and
%\authorblockN{Toni Ernvall$^\dag$ and Camilla Hollanti$^*$}
%\authorblockA{$^\dag$Turku Centre for Computer Science, Turku, Finland\\ \&  Department of Mathematics and Statistics\\ FI-20014 University of Turku, Finland\\ (e-mail: tmernv@utu.fi)}
%}

\maketitle
\begin{abstract}
This paper provides a link between matroid theory and locally repairable codes (LRCs) that are either linear or more generally almost affine. Using this link, new results on both LRCs and matroid theory are derived. The parameters $(n,k,d,r,\delta)$ of LRCs are generalized to matroids, and the matroid analogue of the generalized Singleton bound in [P. Gopalan \emph{et al.}, ``On the locality of codeword symbols,'' \emph{IEEE Trans. Inf. Theory}] for linear LRCs is given for matroids. It is shown that the given bound is not tight for certain classes of parameters,  implying a nonexistence result for the corresponding locally repairable almost affine codes, that are coined \emph{perfect} in this paper. 

Constructions of classes of matroids with a large span of  the parameters $(n,k,d,r,\delta)$ and the corresponding local repair sets are given. Using these matroid constructions, new LRCs are constructed with prescribed parameters. The existence results on linear LRCs and the nonexistence results on almost affine LRCs given in this paper strengthen the nonexistence and existence results on perfect linear LRCs given in [W. Song \emph{et al.}, ``Optimal locally repairable codes,'' \emph{IEEE J. Sel. Areas Comm.}].
\end{abstract}

\section{Introduction}
Due to the ever-growing need for more efficient and scalable  systems for cloud storage and data storage in general, distributed storage has become an increasingly important ingredient in many data systems. In their seminal paper \cite{dimakis}, Dimakis \emph{et al.} introduced network coding techniques for  large-scale  distributed storage systems such as data centers, cloud storage, peer-to-peer storage systems and storage in wireless networks. These techniques can, for example, considerably improve the storage efficiency compared to traditional storage techniques such as replication and erasure coding.  

Failing devices are not uncommon in large-scale distributed storage systems \cite{ghemawat03}. A central problem for this type of storage is therefore to design codes that have good distributed repair properties. Several cost metrics and related tradeoffs \cite{dimakis,me,sasidharan,tian,toni,goparaju} are studied in the literature, for example \emph{repair bandwidth} \cite{dimakis,me}, \emph{disk-I/O} \cite{diskIO}, and \emph{repair locality} \cite{Gopalan,Oggier,Simple}. In this paper repair locality is the subject of interest. 

The notion of a \emph{locally repairable code} (LRC) was introduced in \cite{LRCpapailiopoulos}, and such repair-efficient codes are already used in existing distributed storage systems, \emph{e.g.}, in the Hadoop Distributed File System \emph{RAID} used by Facebook and Windows Azure Storage \cite{LRCmatroid}. There are two notions of \emph{symbol locality} considered in the literature: information locality only requires information symbols to be locally repairable, while all-symbol locality requires this to be true for all code symbols. The subject of interest in this paper is the all-symbol locality.

It is well-known that nonlinear codes often achieve better performance than linear ones, \emph{e.g.}, in the context of coding rates for error-correcting codes and maximal throughput for network codes. Almost affine codes were introduced in \cite{simonis98} as a generalization of linear codes. This class of codes contains codes over arbitrary alphabet size, not necessarily prime power. In this paper, we are studying LRCs in the generality of almost affine codes.

We will consider five key invariants $(n,k,d,r,\delta)$ of locally repairable codes. The technical definitions are given in Section~\ref{sec:LRC}, but in short, a good code should have large rate $k/n$ as well as high global and local failure tolerance $d$ and $\delta$, respectively. In addition, it is desirable to have small $r$, which will determine the maximum number of nodes that have to be contacted for repair within a ``local'' repair set.

In this paper, our main tools for analyzing LRCs come from matroid theory. This is a branch of algebraic combinatorics with natural links to a great number of different topics, \emph{e.g.}, to coding theory, graph theory, matching theory and combinatorial optimization. Matroids were introduced in \cite{whitney35} in order to abstractly capture properties analogous to linear independence in vector spaces and independence in graphs. Since its introduction, matroid theory has been successfully used to solve problems in many areas of mathematics and computer science.  Matroid theory and the theory of linear codes are closely related since every matrix over a field defines a matroid. Despite this fact, until rather recently matroid theory has only played a minor part in the development of coding theory. One pioneering work in this area is the paper by Greene from 1976 \cite{greene76}. In this paper he describes how the weight enumerator of a linear code $C$ is determined by the Tutte polynomial of the associated matroid of $C$. Using this result, Greene gives an elegant proof of the MacWilliams identity \cite{macwilliams63}. Generalizations of these results have then been presented in several papers, for example in \cite{barg97, britz10}. Another important instance of matroidal methods in coding theory is the development of a decomposition theory of binary linear codes \cite{kashyap08}.   Today, matroid theory also plays an important role in information theory and coding theory, for example in the areas of network coding, secret sharing, index coding, and information inequalities \cite{dougherty07, marti07, rouayheb10}. In this paper, while our main goal is investigating almost affine LRCs with the aid of matroid theory, ideas from the theory of LRCs will also be utilized to acquire new results in matroid theory.

\subsection{Related work}

One of the most classical theorems in coding theory is the Singleton bound, discussed in Section~\ref{Sec:Singleton}~\cite{Singleton}. Its classical version bounds the minimum distance $d$  of a code from above in terms of the length $n$ and dimension $k$. Recent work  sharpens the bound in terms of the local parameters $(r,\delta)$~\cite{Gopalan, prakash12, prakash14, wang15}, as well as in terms of other parameters~\cite{ LRCpapailiopoulos, cadambe13, rawat14}.

There are different constructions of LRCs that are optimal in the sense that they achieve a generalized Singleton bound, \emph{e.g.} \cite{LRCmatroid, prakash12, silberstein13, song14, tamo14}. Song \emph{et al.} \cite{song14}  investigate for which  parameters $(n,k,r,\delta)$ there exists a linear  LRC with all-symbol locality and minimum distance $d$ achieving the generalized Singleton bound from~\cite{prakash12}. The parameter set $(n,k,r,\delta)$ is divided into eight different classes. In four of these classes it is proven that there are linear LRCs achieving the bound, in two of these classes it is proven that there are no linear LRCs achieving the bound, and the existence of linear LRCs achieving the bound in the remaining two cases is an open question.  Independently to the research in this paper, Wang and Zhang used linear programming approaches to strengthen these results when $\delta=2$~\cite{wang15}.

%A matroid representable by a code can in general be represented by many different codes. Further, there are matroids which can be represented by almost affine codes but not by any linear code \cite{simonis98}. It has been proven for linear codes that many of their properties are \emph{matroid-invariant}, that is, only depend on their associated matroid.  Examples of matroid invariants  for linear codes are the dimension, the length, and the distributions of supports, weights, higher supports and higher weights \cite{barg97, britz10}. Other properties, such as the covering radius, are not matroid-invariant for linear codes \cite{britz05}. 
It was shown in \cite{LRCmatroid}, that the $r$-locality of a linear LRC is a matroid invariant. This was used in \cite{LRCmatroid} to prove that the minimum distance of a class of linear LRCs achieves a generalized Singleton bound. Moreover, there are several instances of results in the theory of linear codes that have been generalized to all matroids. Examples on how these results can be interpreted for other objects that can represent a matroid, such as graphs, transversals and certain designs can be found in~\cite{britz12}.   

Recently, the present authors have studied locally repairable codes with all-symbol locality  \cite{eka-lrc}. Methods to modify already existing
codes were presented and it was shown that with high probability, a certain random matrix will be a generator matrix for a locally repairable code with a good minimum distance. Constructions were given for three
infinite classes of optimal vector-linear locally repairable codes over an alphabet of small size. The present paper extends and deviates from   this work by studying the combinatorics of LRCs in general and relating LRCs to matroid theory. This allows for the derivation of fundamental bounds for matroids and linear and almost affine LRCs, as well as for the characterization of the matroids achieving this bound.

In this paper, we have chosen to call the codes and matroids achieving the generalized Singleton bound \emph{perfect} instead of optimal, reserving the term optimal for the best existing solution, \emph{i.e.}, for codes achieving a tight bound instead of the (in some cases loose) Singleton bound. See Definition \ref{Def:perfect} and the follow-up footnote for more details. 

\subsection{Contributions and organization}

%In this paper we investigate the combinatorics of LRCs. The main focus is on the connections between matroid theory and LRCs with all-symbol locality $(r,\delta)$. In particular, we are interested in locally repairable codes that are almost affine. This class contains all linear LRCs over finite fields. 

The first contribution of this paper is to extend the definitions of the parameters $(n,k,d,r,\delta)$ in \cite{prakash12} from linear codes to the much larger class of almost affine codes, and to show that these parameters are matroid invariant for all almost affine LRCs. We then proceed to prove the main results of this paper, which can be summarized as follows:
\begin{enumerate}[(i)]
\item A matroid analogue of the generalized Singleton bound in~\cite{prakash12} is given for $(n,k,d,r,\delta)$-matroids, and in particular to all almost affine codes in Theorem \ref{th:bound}. 

\item In Theorem \ref{th:structure-optimal-matroid}, some necessary structural properties are given for an $(n,k,d,r,\delta)$-matroid meeting the generalized Singleton bound.

\item In Theorem~\ref{thm:construction}, a class of matroids is given with different values of the parameters $(n,k,d,r,\delta)$. Simple and explicit constructions of matroids in this class are given in  Theorem \ref{thm:construction} , Theorem \ref{theorem:graph_1}, and Corollary \ref{corollary:graph_2}, and in Examples \ref{exam:graph_construction_1}, \ref{exam:graph_construction_2a} and \ref{exam:graph_construction_2b}. 
%The matroids are constructed via certain ranked subsets of the storage nodes, or via weighted graphs.
% inducing lattices of cyclic flats, using the structural properties mentioned in (ii). 

\item In Section \ref{sec:representability}, we prove that the matroids from Theorem~\ref{thm:construction} are representable over finite fields of large enough size.  Hence we obtain four explicit constructions of linear LRCs with given parameters. The representability is derived by constructing a graph supporting a gammoid isomorphic to the matroid in Theorem~\ref{thm:construction}, and using results on representability of gammoids~\cite{lindstrom73}.
%One benefit with the construction is that the field size $q$ of our construction is comparable with $n$ instead of being a exponential function of $n$ as in the constructions given in \cite{song14}.
%\begin{color}{red}These results are dependent on results on the smallest field size for different MDS-codes, from~\cite{dau14?}

\item Theorem~\ref{thm:max_d} characterizes values of $(n,k,r,\delta)$ for which there exist $(n,k,d,r,\delta)$-matroids meeting the bound (i). In particular, the nonexistence results for linear LRCs in \cite{song14} are extended to the nonexistence of almost affine codes and matroids.
Moreover, in Theorem~\ref{thm:non_existence_LRC} and Theorem~\ref{thm:max_d_LRC}, we settle the existence  in one of the regimes left open in~\cite{song14}, leaving open only a minor subregime of $b>a \geq \lceil \frac{k}{r} \rceil - 1$, where $a=r\lkr -k$ and $b=(r+\delta-1)\lceil\frac{n}{r+\delta-1}\rceil-n$. This complements recent and independent research by Wang and Zhang~\cite{wang15}, where they settle the existence in the subregime $\lceil\frac{n}{r+1}\rceil > b$ and $\delta=2$ using integer programming techniques.

\end{enumerate}

%The method in this paper of using matroid theory, especially the lattice of cyclic flat, in connection with LRCs introduces new tools to use in the research area of linear LRCs and more generally almost affine LRCs. Furthermore, the matroid results in these paper can also give new results on other objects than almost affine code that can be used to represent matroids.  

%In this paper, for certain values of $(d,r)$ we give new explicit constructions of optimal LRCs which are almost affine. These codes have the good property of being over the small field $\mathbb{F}_4$ and not depending on the size of $n$. 

%Section \ref{sec:matroids-codes} gives the basic definitions and properties of $(n,k,d,r,\delta)-$LRCs, almost affine codes and  matroids. In section \ref{sec:(n,k,d,r,delta)-matroid}, we first prove that all the parameters $(n,k,d,r,\delta)$ of an almost affine code are matroid invariants. By using this result, we define $(n,k,d,r,\delta)$-matroids. Moreover, the parameters $(n,k,d,r,\delta)$ of a matroid is connected to its cyclic flats. Using this connection, results on $(n,k,d,r,\delta)$-matroids are given. 
%\begin{color}{red}
%REWRITE\\
%In Section \ref{sec:applications} we use our results on $(n,k,d,r,\delta)$-matroids in order to get results on linear LRCs and more generally on almost affine LRCs. 
%
The proofs of some of the longer theorems and the explicit constructions of matroids with prescribed parameters are given in the Appendix.

\section{Preliminaries} \label{sec:matroids-codes}

\subsection{Parameters $(n,k,d,r,\delta)$ of locally repairable codes} \label{sec:LRC}

In this subsection, we introduce the parameters $(n,k,d,r,\delta)$ defined in \cite{prakash12} for linear locally repairable codes. We extend this definition to the much wider class of almost affine codes, to be introduced in~\ref{Sec:AlmostAffine}. Figure~\ref{fig:cloud}  serves as a visual aid for the technical definitions. The information symbols $(a,b,c,d,e,f)$ are stored on twelve nodes as in the figure. Equivalently, we think of the content of the twelve nodes as a codeword, and of the content of an individual node as a code symbol. Within each of the local clouds (or locality sets), three symbols are enough to determine the other two. Thus, Figure~\ref{fig:cloud} depicts a  $(12,6,3,3,3)$-LRC,  according to the following definitions.

\begin{figure}[htb]
    \centering
    \includegraphics[width=0.6\textwidth]{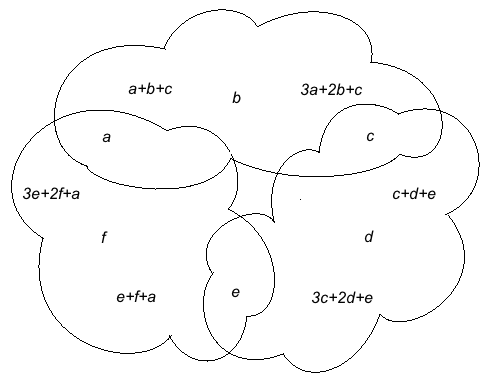}
    \caption{A storage system from a $(12,6,3,3,3)$-LRC.}
    \label{fig:cloud}
\end{figure}

%The $i$th coded symbol $c_i$ in an $\lbrack n, k \rbrack$ linear code is said to have \emph{repair locality} $r$, ($1 \leq r \leq k$), if the code symbol $c_i$ can be recovered by accessing at most $r$  other symbols in the code.  By a \emph{linear} code we mean more precisely a linear code \emph{over a finite field} $\mathbb{F}_q$, with length $n$, dimension $k$, and field size $q$. The concept of $r$-locality for linear codes was generalized to $(r,\delta)$-locality in \cite{prakash12}.

Let  $C \subseteq \mathbb{A}^n$ be a code such that $\lvert C \rvert = \lvert \mathbb{A} \rvert^k$, where $\mathbb{A}$ is a finite set, also referred to as the \emph{alphabet}. For any subset $X = \{i_1,\ldots,i_m\} \subseteq \lbrack n \rbrack = \{1,2,\ldots,n\}$, let $C_X$ denote the \emph{projection} of the code into $\mathbb{A}^{\lvert X \rvert}$, that is
\begin{equation} \label{eq:projection}
C_X = \{(c_{i_1},\ldots,c_{i_m}) : \boldsymbol{c} = (c_1,\ldots,c_n) \in C\}.
\end{equation}
 The code $C_X$ is also called a punctured code in the coding theory literature. The minimum (Hamming) distance $d$ of $C$ can be defined in terms of projections as
\begin{equation} \label{eq:d_via_projection}
d = \min \{|X| : X \subseteq [n] \hbox{ and } |C_{[n]\setminus X}| < |C|  \}. 
\end{equation} 

For $1 \leq r \leq k$ and $\delta \geq 2$, an $(r,\delta)$\emph{-locality set} of $C$ is a subset $S \subseteq \lbrack n \rbrack$ such that
\begin{enumerate}[(i)]
%(i) & j \in S_j,\\
\item $\lvert S\rvert \leq r+ \delta - 1$
\item For every $l \in S$,  $L = \{i_1,\ldots,i_{\lvert L \rvert}\} \subseteq S \setminus \{l\} $ and $\lvert L \rvert = \lvert S\rvert - (\delta - 1)$, $c_l$ is a function of $(c_{i_1}, \ldots, c_{i_{\lvert L \rvert}}))$, where ${\bf c}=(c_1,\ldots , c_n)\in C$.
\end{enumerate}
We say that $C$ is a  \emph{locally repairable code (LRC)}  with \emph{all-symbol locality} $(r,\delta)$ if all the $n$ symbols of the code are contained in an $(r,\delta)$-locality set. The locality sets can be also referred to as the local repair sets. 

We remark that the symbols in a locality set $S$ can be used to recover up to $\delta -1$ lost symbols in the same locality set.  Further, we note that each of the following statements are equivalent to statement (ii) above:
\begin{enumerate}
\item[(ii$'$)] For any
$l \in S$, $L= \{i_1,\ldots,i_{\lvert L \rvert}\} \subseteq S \setminus \{l\}$, and $\lvert L \rvert = \lvert S \rvert - (\delta - 1) $, we have 
$\lvert C_{L \cup \{l\}} \rvert = \lvert C_{L}\rvert$,

\item[(ii$''$)] For any $L \subseteq S$ with $|L| \geq |S| - (\delta - 1)$, we have $|C_L| = |C_S|$,
\item[(ii$'''$)] $d(C_{S}) \geq \delta$ , where $d(C_{S})$ is the minimum  distance of $C_{S}$.
%\end{array}
\end{enumerate}

An LRC with parameters $(n,k)$, minimum distance $d$, and all-symbol locality $(r,\delta)$ is an $(n,k,d,r,\delta)$-LRC. Since we focus only on all-symbol locality in this paper, we will henceforth use the term LRC to mean a locally repairable code with all-symbol locality. 

\subsection{The Singleton bound}\label{Sec:Singleton}
For any $\lbrack n,k \rbrack$-linear code with minimum distance $d$, the Singleton bound is given by
\begin{equation} \label{eq:singleton_bound}
d \leq n - k + 1.
\end{equation}
This bound was generalized for locally repairable codes in \cite{Gopalan} (the case $\delta=2$) and \cite{prakash12} (general $\delta$) as follows. A linear LRC with parameters $(n,k,d,r,\delta)$ satisfies
\begin{equation} \label{eq:singleton_bound_r_delta}
d \leq n - k + 1 - \left (\left \lceil \frac{k}{r} \right \rceil - 1 \right ) (\delta - 1).
\end{equation}
While the bounds in \cite{Gopalan} and \cite{prakash12} are stated assuming only information locality, so are of course in particular still valid under the stronger assumption of all-symbol locality. Other generalizations of the Singleton bound for linear and nonlinear LRCs can be found in \cite{LRCpapailiopoulos, cadambe13,  rawat14}.

\subsection{Graphs, $G=(V,E)$}
Let us fix some standard graph-theoretic notation that will be used at two stages in the constructions. A (finite) directed graph $G=(V,E)$ is a pair of a finite \emph{vertex set} $V$, whose elements are called nodes or vertices, and an \emph{edge set} $E\subseteq V\times V$ of pairs called arcs or edges. Graphs are often drawn with the vertices as points and arcs $(v,u)$ as arrows $v\to u$. We call $v$ the tail of $(v,u)$, and $u$ the head. A path from $S\subseteq V$ to $T\subseteq V$ is a sequence $v_0, v_1, \ldots, v_n$, where $v_0\in S$, $v_n\in T$, and $(v_i,v_{i+1})\in E$ for each  $i=0,\ldots ,n-1$.  If $v_0=v_n$, then the path is called a (directed) cycle. 

An important case of graphs is when $E$ is \emph{symmetric}, \emph{i.e.}, $(u,v)\in E$ if and only if $ (v,u)\in E$. In such case, it is customary to identify the two pairs $(u,v)$ and $(v,u)$ with the set $\{u,v\}$, and erase all the heads of the arrows in the drawing. When talking about a graph without specifying that it is directed, the symmetric situation is assumed. Observe that this definition allows for loops edges (where the tail and the head is the same), but not multiple edges.
In this paper, we will assume that all graphs, both symmetric and directed, are without multiple edges and loops.

\subsection{Posets and lattices, $(\mathcal{P},\subseteq)$}
Before studying matroids, we need a minimum of background on poset and lattice theory. We refer the reader to~\cite{stanley11} for more information on posets and lattices. The material in this section is used only in the technical work with the lattice of cyclic flats of matroids.

A collection of sets $\mathcal{P} \subseteq 2^E$ ordered by inclusion $\subseteq$ defines a (finite) poset $(\mathcal{P},\subseteq)$. A \emph{chain} $C$ of $(\mathcal{P},\subseteq)$ is a set of elements $X_0, \ldots,X_m \in \mathcal{P}$ such that $X_0 \subsetneq X_1 \subsetneq \ldots \subsetneq X_m$. The \emph{length} of a chain $C$ is defined as the integer $l(C) = \lvert C \rvert - 1 = m$. For $X,Y \in \mathcal{P}$, let 
$$
\begin{array}{l}
L_{X,Y} = \{Z \in \mathcal{P} : Z \subseteq X \hbox{ and } Z \subseteq Y\},\\
U_{X,Y} = \{Z \in \mathcal{P} : X \subseteq Z  \hbox{ and } Y \subseteq Z\}.
\end{array}
$$
An element $Z \in L_{X,Y}$ is the \emph{meet} of $X$ and $Y$, denoted by $X \wedge Y$, if it contains every $V \in L_{X,Y}$. Dually, $Z \in U_{X,Y}$ is the \emph{join} of $X$ and $Y$, denoted by $X \vee Y$, if it is contained in every $V \in U_{X,Y}$.  A poset $(\mathcal{P}, \subseteq)$ is a \emph{lattice} if every pair of elements of $\mathcal{P}$ has a meet and a join. If $(\mathcal{P},\subseteq)$ is a (finite) lattice, then there are two elements $0_\mathcal{P}, 1_\mathcal{P} \in \mathcal{P}$ such that $0_{\mathcal{P}} \subseteq X$ and $X \subseteq 1_\mathcal{P}$ for all $X \in \mathcal{P}$. The \emph{atoms} and \emph{coatoms} of a lattice $(\mathcal{L},\subseteq)$ are defined as
$$
\begin{array}{ll}
A_\mathcal{L} &= \{X \in \mathcal{L} \setminus 0_{\mathcal{L}} : \nexists Y \in \mathcal{L} \hbox{ such that } 0_\mathcal{L} \subsetneq Y \subsetneq X\},\\
coA_\mathcal{L} &= \{X \in \mathcal{L} \setminus 1_{\mathcal{L}} : \nexists Y \in \mathcal{L} \hbox{ such that } X \subsetneq Y \subsetneq 1_\mathcal{L}\},
\end{array}
$$
respectively.

\subsection{Matroids, $M = (\rho,E)$}
Matroids can be defined in many equivalent ways, for example by their rank function, nullity function, independent sets, circuits and more \cite{oxley92}. For our purpose, the following definition will be the most useful. Let $2^E$ denote the set of all subsets of $E$. A \emph{matroid} $M$ on a finite set $E$ is defined by a \emph{rank function} $\rho: 2^E \rightarrow \mathbb{Z}$ satisfying the following axioms:
\begin{equation} \label{eq:rank_matroid}
\begin{array}{rl}
(R1) & 0 \leq \rho(X) \leq \lvert X \rvert \hbox{ for } X \subseteq E,\\
(R2) & X \subseteq Y \subseteq E \Rightarrow  \rho(X) \leq \rho(Y),\\
(R3) & X, Y \subseteq E \Rightarrow \rho(X) + \rho(Y) \geq \rho(X \cup Y) + \rho(X \cap Y).
\end{array}
\end{equation}
The \emph{nullity function} $\eta:2^E \rightarrow \mathbb{Z}$ of the matroid $M=(E,\rho)$ is defined by
$$
\eta(X) = \lvert X \rvert - \rho(X), \hbox{ for } X \subseteq E.
$$
Let $X$ be any subset of $E$. The subset $X$ is \emph{independent} if $\rho(X) = \lvert X \rvert$, otherwise it is \emph{dependent}. A dependent set $X$ is a $\emph{circuit}$ if all proper subsets of $X$ are independent, \emph{i.e.}, $\rho(X) = \lvert X \rvert - 1$ and $\rho(Y) = \lvert Y \rvert$ for all subsets $Y \subsetneq X$. The \emph{closure} of $X$ is defined as
$$
\mathrm{cl}(X) = \{x \in E : \rho(X \cup x) = \rho(X)\}. 
$$
The subset $X$ is a \emph{flat} if $\mathrm{cl}(X)= X$. It is \emph{cyclic} if it is a (possible empty) union of circuits. 
%A flat that is cyclic is called a \emph{cyclic flat}. 
The sets of circuits, independent sets,  cyclic sets and cyclic flats of a matroid $M$ is denoted by $\mathcal{C}(M)$, $\mathcal{I}(M)$, $\mathcal{U}(M)$ and $\mathcal{Z}(M)$, respectively. We omit the subscript $M$ when the matroid is clear and write $\mathcal{C}$, $\mathcal{I}$, $\mathcal{U}$ and $\mathcal{Z}$, respectively. The set of cyclic flats together with inclusion defines the \emph{lattice of cyclic flats} $(\mathcal{Z},\subseteq)$ of the matroid. The \emph{restriction} of $M$ to $X$ is the matroid $M \vert X = (\rho_{\vert X}, X)$ where
\begin{equation} \label{eq:rank_restriction}
\rho_{\vert X}(Y) = \rho(Y) \hbox{, for all subsets } Y \subseteq X.
\end{equation}
%The \emph{dual} of the matroid $M = (\rho,E)$ is the matroid $M^* = (\rho^*,E)$, where
%\begin{equation} \label{eq:dual-rank}
%\rho^*(X) = \rho(E \setminus X) + \lvert X \rvert - \rho(E), \hbox{ for } X \subseteq E.
%\end{equation}
%If we omit the argument $M^*$, then we denote the sets of circuits and cyclic flats of the dual matroid by $\mathcal{C}^*$ and $\mathcal{Z}^*$, respectively.

\subsection{Almost affine codes and their associated matroids}~\label{Sec:AlmostAffine}

A code $C \subseteq \mathbb{A}^n$, where $\mathbb{A}$ is a finite set of size $s \geq 2$, is \emph{almost affine} if 
$$
\log_{s}(\lvert C_X \rvert) \in \mathbb{Z}
$$
for each $X \subseteq \lbrack n \rbrack$. Note that if $C$ is an almost affine code, then all projections $C_X$ of $C$ are also almost affine. 

%Two special cases of almost affine codes are linear codes and $m$-multilinear codes over finite fields. A code $C \subseteq \mathbb{F}_q^{mn}$ is an \emph{m-multilinear code} over $\mathbb{F}_q$ if $C$ is a linear code over $\mathbb{F}_q $ and, when considering $C$ as a code in $(\mathbb{F}_q^m)^n$, the dimension of the vector space $C_X$ over $\mathbb{F}_q$ is divisible by $m$ for all $X \subseteq \lbrack n \rbrack$. We remark that a linear code is a 1-multilinear code and that an $m$-multilinear code is a special case of vector-linear codes (linear codes where each code symbol is a vector). Not all vector-linear codes are almost affine and vice versa. 

In \cite{simonis98} it is proven that every almost affine code 

$C \subseteq \mathbb{A}^n$ 

induces a matroid $M_C = (\rho_C, \lbrack n \rbrack)$, where 
\begin{equation} \label{eq:rank_almost_affine}
\rho_C(X) = \log_s(\lvert C_X \rvert).
\end{equation}
Examples of matroids which cannot be represented by any almost affine code are given in 
%FIX NUMBER
\cite{matus99}. Moreover, an example of a matroid which can be represented by an almost affine code over a three letter alphabet, but not by any linear code is given in \cite{simonis98}. This example is the so-called non-Pappus matroid.

\begin{exam} \label{exam:matroid}
An example of a matroid $M_G = (\rho,E)$ is defined by the matrix \begin{equation}\label{example}G=
\bordermatrix{
 &{\bf 1}& {\bf 2}&{\bf 3}&{\bf 4} &{\bf 5}&{\bf 6}&{\bf 7}& {\bf 8}& {\bf 9}&{\bf 10}&{\bf 11}&{\bf 12} \cr  
a&1& & & & & &1& &1&3& &1 \cr
b& &1& & & & &1& & &2& &  \cr
c& & &1& & & &1&1& &1&3&  \cr
d& & & &1& & & &1& & &2&  \cr
e& & & & &1& & &1&1& &1&3 \cr
f& & & & & &1& & &1& & &2
},
\end{equation} which we think of as a generator matrix of a linear code $C$ over the field $\mathbb{F}_5$. The code $C$ is the row span of $G$, $E=\{{\bf 1},\ldots,{\bf 12}\}$ is the set of columns, and the rank of a subset of $E$ is the rank of the corresponding submatrix, \emph{i.e.}, 
$$
\rho(I) = rank(G_I) \hbox{ for } I \subseteq E,
$$
where $G_I$ is the submatrix of $G$ whose columns are the columns indexed by $I$. Below are some independent sets, circuits, cyclic flats and rank functions of some subsets of $E$ for the matroid $M$.
$$
\begin{array}{l}
\mathcal{I} = \{\emptyset, \{2,3,7\}, \{3,4,5\}, \{7,8,9\}, [6],  \ldots\},\\
\mathcal{C} = \{\{1,2,3,7\}, \{4,5,8,11\}, \ldots\},\\
\mathcal{Z} = \{\{1,2,3,7,10\}, \{3,4,5,8,11\}, \{1,2,3,4,5,7,8,10,11\},[12], ...\},\\
\rho(\emptyset) = 0 \hbox{, }\rho(\{3,4,5\}) = \rho(\{4,5,8,11\}) = \rho(\{3,4,5,8,11\}) = 3 \hbox{, } \rho([6]) = \rho([12]) = 6.
\end{array}
$$ 
The reader can verify that the code generated by this matrix corresponds to the storage system in Figure~\ref{fig:cloud}, when the rows are the information symbols.
\end{exam}

%\begin{exam}
%In \cite[Ex. 10]{rouayheb10} we have the following 2-multilinear representation over $\mathbb{F}_3$ of the non-Pappus matroid. The non-Pappus matroid is an example of a matroid which cannot be represented by any linear code $C$ as $M = M_C$. For the matrix
%$$
%G = 
%\left (
%\begin{array} {ccccccccc}
%10 & 10 & 00 & 10 & 00 & 10 & 10 & 10 & 00\\
%01 & 01 & 00 & 01 & 00 & 01 & 01 & 01 & 00\\
%00 & 00 & 00 & 10 & 10 & 21 & 01 & 10 & 10\\
%00 & 00 & 00 & 02 & 01 & 20 & 12 & 02 & 01\\
%00 & 10 & 10 & 01 & 00 & 01 & 00 & 11 & 10\\
%00 & 01 & 01 & 21 & 00 & 21 & 00 & 10 & 01
%\end{array}
%\right ) .
%$$
%over $\mathbb{F}_3^2$, let $C = \{\boldsymbol{x}G : \boldsymbol{x} \in \mathbb{F}_3^6\} \subseteq (\mathbb{F}_3^2)^9$. The code $C$ induces a matroid $M_C$ on $\lbrack 9 \rbrack$ with the following rank function
%$$
%\rho_C(X) =
%\left \{
%\begin{array}{lcl}
%2 & \hbox{if} & X \in Y,\\
%\min(\lvert X \rvert, 3) & \hbox{if} & X \in 2^{[9]} \setminus Y,
%\end{array}
%\right .
%$$
%where 
%$$
%\begin{array}{rcl}
%Y & = & \{ \{1,2,3\},\{1,5,7\},\{1,6,8\},\{2,4,7\},\\
%  &   & \hbox{ }\{2,6,9\},\{3,4,8\},\{3,5,9\},\{4,5,6\}\}.
%\end{array}
%$$ 
%Given below is the lattice of cyclic flats of $M_C$ and the ranks of the cyclic flats denoted by {\color{blue}???}

%{\color{blue}INSERT A PICTURE}\\ 
%From the lattice of cyclic flat above and Lemma \ref{lemma:(n,k,d,r)-Z} we get that $n = 9$, $k = 3$, $d = n - k + 1 - (3 - 2) = 6$ and $r = 2$. Since, $n - k - \lceil \frac{k}{r} \rceil + 2 = 6$ we get that $C$ is an %optimal almost affine LRC.
%\end{exam}

\subsection{Basic properties of matroids and the lattice of cyclic flats}

For the applications in this paper, the most important matroid attribute is its lattice of cyclic flats.  This is because the minimal cyclic flats of matroids will correspond to local repair sets of the LRC. In this subsection, we present basic properties of the lattice of cyclic flats, that will be needed in later parts of the paper. 

\begin{proposition} [see \cite{bonin08}] \label{pro:basic-Z}
Let $M = (\rho,E)$ be a matroid. Then

\begin{enumerate}[(i)]
\item $\rho(X) = \min \{\rho(F) + \lvert X \setminus F \rvert : F \in \mathcal{Z}\}$, for $X \subseteq E$,
\item Define $ \mathcal{D} = \{ X: \mbox{there is } F \in \mathcal{Z} \mbox{ with } X\subseteq F \mbox{ and }  \lvert X \rvert = \rho(F) + 1 \}$. \\
Then $\mathcal{C}$ is the set of minimal elements in $\mathcal{D}$, ordered by inclusion.
% = \{X \in \mathcal{D} : \nexists Y \in  \mathcal{D} \hbox{ such that } Y \subsetneq X\}$,
\item $(\mathcal{Z}, \subseteq)$ is a lattice with the following meet and join for $X,Y \in \mathcal{Z}$,\\
 $X \wedge Y = \bigcup_{\{C \in \mathcal{C} : C \subseteq X \cap Y\}} C$ and $X \vee Y = \mathrm{cl}(X \cup Y)$.
\end{enumerate}

\end{proposition}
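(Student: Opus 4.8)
The plan is to prove the three statements in order, using standard matroid facts that can be extracted from the rank axioms in~\eqref{eq:rank_matroid}, together with the characterization of cyclic flats as intersections of "nullity-closed" sets. The key observation underlying all three parts is the following: a set $Z$ is a cyclic flat if and only if for every $x\in Z$ we have $\rho(Z\setminus x)=\rho(Z)$ (this is the "cyclic" part, i.e.\ every element lies on a circuit inside $Z$, equivalently $Z$ is a union of circuits), and for every $x\in E\setminus Z$ we have $\rho(Z\cup x)=\rho(Z)+1$ (this is the "flat" part, i.e.\ $\cl(Z)=Z$).

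For~(i): I would first show $\rho(X)\le \rho(F)+\lvert X\setminus F\rvert$ for every flat $F$ (indeed for every set $F$) by applying submodularity and monotonicity repeatedly, adding the elements of $X\setminus F$ one at a time, each of which raises the rank by at most $1$; so the right-hand side is an upper bound on $\rho(X)$ for all $F\in\mathcal{Z}$. For the reverse inequality, I would exhibit a specific cyclic flat achieving equality. The natural candidate is obtained from $F_0=\cl(X)$ by repeatedly deleting coloops of the restriction $M\vert F_0$ (elements $y$ with $\rho(F_0\setminus y)<\rho(F_0)$); call the result $F$. One checks that $F$ is cyclic (no coloops remain in $M\vert F$, so $F$ is a union of circuits) and a flat (deleting coloops does not change the closure, and $\cl(X)$ is already a flat), and that $\rho(F)=\rho(\cl(X))-\lvert \cl(X)\setminus F\rvert$ while $X\setminus F\subseteq \cl(X)\setminus F$ together with $X\subseteq \cl(X)$ forces $\rho(F)+\lvert X\setminus F\rvert=\rho(X)$. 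This step — verifying that the coloop-stripping procedure lands on a cyclic flat with the right rank — is where a little care is needed, but it is elementary.

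For~(ii): given $F_0=\cl(X)$ as above, the set $\mathcal{D}$ collects the "one-over-rank" subsets of cyclic flats; these are exactly the dependent sets that are "witnessed" inside a cyclic flat, and by~(i) every dependent set has its rank certified by a cyclic flat, so every circuit lies in $\mathcal{D}$. Conversely a minimal element of $\mathcal{D}$ is dependent (it has size $\rho(F)+1>\rho(F)\ge\rho$ of it... wait, $\rho(X)\le\rho(F)<\lvert X\rvert$, so $X$ is dependent) and, being minimal, every proper subset must be independent — otherwise a proper subset would itself contain a member of $\mathcal{D}$ by applying~(i) to that subset — hence it is a circuit. So $\mathcal{C}$ equals the minimal elements of $\mathcal{D}$.

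For~(iii): I would verify that $(\mathcal{Z},\subseteq)$ is closed under the proposed meet and join. For the join: $\cl(X\cup Y)$ is a flat by construction; one must check it is also cyclic when $X,Y$ are. Every element of $X$ lies on a circuit inside $X\subseteq X\cup Y\subseteq\cl(X\cup Y)$, and similarly for $Y$, so $X\cup Y$ is a union of circuits; the closure of a cyclic set is cyclic because adding a closure element $z$ to a circuit-spanning set keeps $z$ on a circuit (since $\rho$ does not increase). Hence $\cl(X\cup Y)\in\mathcal{Z}$, and it is clearly the least cyclic flat containing both $X$ and $Y$ since any cyclic flat containing $X\cup Y$ contains $\cl(X\cup Y)$. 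For the meet: $Z:=\bigcup\{C\in\mathcal{C}:C\subseteq X\cap Y\}$ is a union of circuits, hence cyclic; to see it is a flat, suppose $\rho(Z\cup z)=\rho(Z)$ with $z\notin Z$, then $z\in\cl(Z)$, and since $Z\subseteq X$ and $X$ is a flat we get $z\in X$, similarly $z\in Y$, so $z\in X\cap Y$; but then $z$ together with a basis of $Z$ inside $X\cap Y$ contains a circuit through $z$ lying in $X\cap Y$, so $z\in Z$, a contradiction. Thus $Z\in\mathcal{Z}$, and it is the greatest cyclic flat contained in $X\cap Y$ because any circuit of a cyclic flat $W\subseteq X\cap Y$ is among those in the union defining $Z$. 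The main obstacle throughout is keeping straight the interplay "flat $=$ closure-fixed" versus "cyclic $=$ union of circuits" and making sure each closure/circuit manipulation only uses monotonicity, submodularity, and the definitions; none of the individual steps is deep, but the bookkeeping for~(iii) is the most delicate.
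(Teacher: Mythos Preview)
Your argument is essentially correct, and there is nothing in the paper to compare it against: Proposition~\ref{pro:basic-Z} is quoted from \cite{bonin08} without proof, so the paper supplies no strategy of its own here.

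A couple of points in your sketch deserve tightening. In part~(i), the phrase ``deleting coloops does not change the closure'' is not literally true (deleting a coloop $y$ from $F_0$ does change the closure, since $y$ leaves it); what you actually need, and what holds, is that $F_0\setminus y$ is again a flat, because $\cl(F_0\setminus y)\subseteq\cl(F_0)=F_0$ and $y\notin\cl(F_0\setminus y)$. Your rank bookkeeping then gives $\rho(F)+|X\setminus F|\le\rho(X)$, which is the direction you want. In part~(iii), your justification that the closure of a cyclic set is cyclic is a bit telegraphic; the clean statement is that if $z\in\cl(U)\setminus U$ then any basis $B$ of $U$ has $B\cup z$ dependent, and the resulting circuit must contain $z$, so $z$ lies on a circuit inside $\cl(U)$. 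With these two clarifications your proof goes through; it is in fact the standard argument one finds in \cite{bonin08}.
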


The assertion (i) in Proposition \ref{pro:basic-Z} shows that a matroid is determined by its cyclic flats and their ranks. Conversely, the following theorem gives an axiomatic scheme for a collection of subsets on $E$ and a function on these sets to define the cyclic flats of a matroid and their ranks.  This will allow us to construct matroids with prescribed parameters in Section~\ref{sec:(n,k,d,r,delta)-matroid}.

\begin{thm} [see \cite{bonin08} Th. 3.2] \label{th:Z-axiom}
Let $\mathcal{Z} \subseteq 2^E$ and let $\rho$ be a function $\rho: \mathcal{Z} \rightarrow \mathbb{Z}$. There is a matroid $M$ on $E$ for which $\mathcal{Z}$ is the set of cyclic flats and $\rho$ is the rank function restricted to the sets in $\mathcal{Z}$ if and only if
$$
\begin{array}{rl}
(Z0) & \mathcal{Z} \hbox{ is a lattice under inclusion},\\
(Z1) & \rho(0_{\mathcal{Z}}) = 0,\\
(Z2) & X,Y \in \mathcal{Z} \hbox{ and } X \subsetneq Y \Rightarrow \\
     & 0 < \rho(Y) - \rho(X) < \lvert Y \rvert - \lvert X \rvert,\\
(Z3) & X,Y \in \mathcal{Z} \Rightarrow \rho(X) + \rho(Y) \geq \\
     & \rho(X \vee Y) + \rho(X \wedge Y) + \lvert (X \cap Y) \setminus (X \wedge Y)  \rvert.
\end{array}
$$
\end{thm}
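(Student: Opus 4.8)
The plan is to prove the two directions separately, using Proposition~\ref{pro:basic-Z} as the bridge between a matroid and the data $(\mathcal{Z},\rho)$.

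For the forward direction, suppose $M=(\rho,E)$ is a matroid, $\mathcal{Z}=\mathcal{Z}(M)$ is its set of cyclic flats, and $\rho$ is restricted to $\mathcal{Z}$. Then $(Z0)$ is exactly Proposition~\ref{pro:basic-Z}(iii). For $(Z1)$, note that $0_\mathcal{Z}=\cl(\emptyset)$ is the set of loops, which is cyclic (an empty union of circuits) and a flat, and has rank $0$. For $(Z2)$, take $X\subsetneq Y$ in $\mathcal{Z}$. Since $X$ is a flat and $Y\supsetneq X$, any element of $Y\setminus X$ increases the rank, so $\rho(X)<\rho(Y)$; and since $Y$ is cyclic, every element of $Y$ lies on a circuit inside $Y$, which forces $\rho(Y)<|Y|$ on the relevant part, giving $\rho(Y)-\rho(X)<|Y|-|X|$ after a short argument using that $Y\setminus X$ is contained in a cyclic set (one can argue that $Y$ restricted to $Y\setminus X$, or more precisely the contraction, still has positive nullity because $Y$ is a union of circuits not all contained in $X$). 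For $(Z3)$, start from submodularity $(R3)$ applied to $X,Y$: $\rho(X)+\rho(Y)\geq \rho(X\cup Y)+\rho(X\cap Y)$. Then use $X\vee Y=\cl(X\cup Y)$, so $\rho(X\vee Y)=\rho(X\cup Y)$, and relate $\rho(X\cap Y)$ to $\rho(X\wedge Y)$: since $X\wedge Y=\bigcup\{C\in\mathcal{C}:C\subseteq X\cap Y\}$ is the largest cyclic subset of $X\cap Y$, the elements of $(X\cap Y)\setminus(X\wedge Y)$ are all coloops in $M|(X\cap Y)$ relative to $X\wedge Y$, hence $\rho(X\cap Y)=\rho(X\wedge Y)+|(X\cap Y)\setminus(X\wedge Y)|$. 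Substituting yields $(Z3)$.

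For the converse, assume $(Z0)$--$(Z3)$ and \emph{define} $\rho$ on all of $2^E$ by the formula in Proposition~\ref{pro:basic-Z}(i), namely $\rho(X)=\min\{\rho(F)+|X\setminus F|:F\in\mathcal{Z}\}$ (this is well-defined since $1_\mathcal{Z}\in\mathcal{Z}$ always lies in the family, by $(Z0)$, so the minimum is over a nonempty finite set, and $0_\mathcal{Z}\in\mathcal{Z}$ too). One must then verify the rank axioms $(R1),(R2),(R3)$ for this extended $\rho$. Axiom $(R2)$ (monotonicity) is immediate from the formula since enlarging $X$ only enlarges each $|X\setminus F|$. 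The upper bound $\rho(X)\leq|X|$ in $(R1)$ follows by taking $F=0_\mathcal{Z}$ and using $(Z1)$: $\rho(X)\leq 0+|X\setminus 0_\mathcal{Z}|\leq|X|$; the lower bound $\rho(X)\geq 0$ uses $(Z2)$ to propagate $\rho(F)\geq 0$ from $\rho(0_\mathcal{Z})=0$ up the lattice, together with the minimizing $F$. The main work is $(R3)$, submodularity of the extended $\rho$: given $X,Y\subseteq E$, pick cyclic flats $F_X,F_Y$ achieving the minima for $X$ and $Y$ respectively, and then use $F_X\vee F_Y$ and $F_X\wedge F_Y$ as test sets for $X\cup Y$ and $X\cap Y$; the inequality $(Z3)$ is exactly what is needed to absorb the cross terms $|(F_X\cap F_Y)\setminus(F_X\wedge F_Y)|$ that arise when comparing $|X\cap Y \setminus (F_X\wedge F_Y)|$ with $|X\setminus F_X|+|Y\setminus F_Y|$ via inclusion--exclusion on the index sets. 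Finally, once $\rho$ is shown to be a matroid rank function, one checks that the cyclic flats of the resulting matroid $M$ are exactly the members of $\mathcal{Z}$ and that $\rho$ agrees with the given function on them: that every $F\in\mathcal{Z}$ is a cyclic flat follows because the minimum in the formula is attained at $F$ itself (forced by $(Z2)$, which prevents any smaller cyclic flat from doing better) so $\rho(F)$ equals the given value, $F$ is a flat (adding any $e\notin F$ strictly increases rank, again by $(Z2)$ applied to the cyclic flat generated above $F$), and $F$ is cyclic; conversely one shows no extra cyclic flats appear, e.g.\ via Proposition~\ref{pro:basic-Z}(ii) characterizing circuits and hence cyclic sets.

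The main obstacle, and where I expect the bulk of the technical effort to go, is verifying submodularity $(R3)$ for the extended rank function in the converse direction: one has to carefully choose the optimal cyclic flats $F_X,F_Y$, handle the combinatorics of the four index sets $X\setminus F_X$, $Y\setminus F_Y$, $(X\cup Y)\setminus(F_X\vee F_Y)$, $(X\cap Y)\setminus(F_X\wedge F_Y)$ via inclusion--exclusion, and see precisely that the extra term $|(F_X\cap F_Y)\setminus(F_X\wedge F_Y)|$ in $(Z3)$ is exactly the correction needed. This is essentially the content of Bonin--de Mier's Theorem~3.2, so I would either reproduce their argument in detail or cite it; the forward direction and the identification of cyclic flats in the converse are comparatively routine applications of Proposition~\ref{pro:basic-Z} and axiom $(Z2)$.
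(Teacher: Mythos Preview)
The paper does not give its own proof of Theorem~\ref{th:Z-axiom}; it is stated with the attribution ``see \cite{bonin08} Th.~3.2'' and used as a black box throughout. So there is no in-paper argument to compare against.

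Your outline is essentially the Bonin--de~Mier proof, and the forward direction is correct as sketched: $(Z0)$ and $(Z1)$ are immediate, your $(Z3)$ argument (submodularity plus the identity $\rho(X\cap Y)=\rho(X\wedge Y)+|(X\cap Y)\setminus(X\wedge Y)|$, which holds because every element of $(X\cap Y)\setminus(X\wedge Y)$ is a coloop of $M|(X\cap Y)$) is clean, and $(Z2)$ can be made rigorous by noting that if $\eta(X)=\eta(Y)$ one takes a circuit $C\subseteq Y$ with $C\not\subseteq X$ and applies Proposition~\ref{prop:basic_facts}(ii) to $X$ and $C$ to derive a contradiction. For the converse you correctly identify that verifying $(R3)$ for the extended rank is where the real content lies, and you are honest that you would ultimately cite \cite{bonin08}; that is exactly what the paper does too. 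One minor caution: in your sketch of $(R1)$ you write $\rho(X)\le |X\setminus 0_{\mathcal Z}|\le |X|$, but you should also note that elements of $0_{\mathcal Z}$ are loops, so $|X\setminus 0_{\mathcal Z}|$ already equals the relevant bound; and for the identification of cyclic flats at the end, showing that \emph{no new} cyclic flats appear requires a short but non-trivial argument that you have only gestured at.
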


 The results in the proposition below are basic matroid results that will be needed several times in the proofs of other results given later in this paper. We give a proof for the results in  Proposition \ref{prop:basic_facts} that we have not been able to find in the literature. For the other results we only give a reference. 

\begin{proposition} \label{prop:basic_facts}
Let $M = (\rho,E)$ be a matroid and let $X,Y$ be subsets of $E$, then
\begin{enumerate}[(i)]
\item If $X \subseteq Y$, then $\eta(X) \leq \eta(Y)$,
\item $\eta(X \cup Y) \geq \eta(X) + \eta(Y) - \eta(X \cap Y)$,
\item If $\rho(X) < \rho(E)$ and $1_{\mathcal{Z}} = E$, then $\eta (X) \leq  \max \{ \eta (Z) : Z \in coA_{\mathcal{Z}} \}$,
\item $\mathrm{cl}(U) \in \mathcal{Z}(M)$ for $U \in \mathcal{U}(M)$,
\item $\mathcal{U}(M \vert X) = \{U \subseteq X : U \in \mathcal{U}(M)\}$,
\item $\mathcal{C}(M \vert X) = \{C \subseteq X : C \in \mathcal{C}(M)\}$,
\item $\mathcal{Z}(M \vert X) = \{Z \in \mathcal{Z}(M) : Z \subseteq X\}$ if $X \in \mathcal{F}(M)$,
\item  $X \notin \mathcal{U}(M)$ if and only if $\exists x \in X$ such that $\rho(X-x) < \rho(X)$, 
\item $\rho(\mathrm{cl}(X)) = \rho(X)$,
\item If $X \subseteq Y$, then $\mathrm{cl}(X) \subseteq \mathrm{cl}(Y)$.
\end{enumerate}
\end{proposition}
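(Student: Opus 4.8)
The plan is to establish the ten items by bootstrapping from the rank axioms (R1)--(R3) in \eqref{eq:rank_matroid} and the definitions of nullity, closure, and cyclic set, using the already-cited results (Proposition~\ref{pro:basic-Z}) where convenient. I would order the items so that later parts may invoke earlier ones. First I would prove (i) and (ii), which are purely about the nullity function: for (i), note that monotonicity $\rho(X)\leq\rho(Y)$ from (R2) together with $|X|\leq|Y|$ does \emph{not} immediately give $\eta(X)\leq\eta(Y)$, so the right approach is to use submodularity (R3) applied to $X$ and $Y\setminus(\text{one element at a time})$, or more cleanly to observe that for $X\subseteq Y$ and any $y\in Y\setminus X$ we have $\rho(X\cup y)-\rho(X)\in\{0,1\}$, hence $\eta(X\cup y)-\eta(X)=1-(\rho(X\cup y)-\rho(X))\geq 0$; induct on $|Y\setminus X|$. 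For (ii), rewrite both sides in terms of $\rho$: the claim $\eta(X\cup Y)\geq\eta(X)+\eta(Y)-\eta(X\cap Y)$ becomes $|X\cup Y|-\rho(X\cup Y)\geq |X|+|Y|-|X\cap Y|-\rho(X)-\rho(Y)+\rho(X\cap Y)$, and since $|X\cup Y|=|X|+|Y|-|X\cap Y|$ this is exactly the submodular inequality (R3) rearranged. These two are short.

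Next I would handle the closure-related items (viii), (ix), (x), (iv), which feed into (v), (vi), (vii). For (ix) $\rho(\mathrm{cl}(X))=\rho(X)$: show by induction that adjoining elements of $\mathrm{cl}(X)$ one at a time does not change the rank, using submodularity to see that if $\rho(X\cup x)=\rho(X)$ and $\rho(X\cup y)=\rho(X)$ then $\rho(X\cup\{x,y\})=\rho(X)$ (apply (R3) to $X\cup x$ and $X\cup y$). For (x), if $X\subseteq Y$ and $x\in\mathrm{cl}(X)$, then $\rho(X\cup x)=\rho(X)$; adding $Y$ and using submodularity plus (ix) gives $\rho(Y\cup x)=\rho(Y)$. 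For (viii), $X\notin\mathcal{I}$ means $\rho(X)<|X|$; if some deletion drops the rank we are done, and conversely if $\rho(X-x)=\rho(X)$ for all $x\in X$ one shows inductively $\rho(\emptyset)=\rho(X)$, forcing $\rho(X)=0<|X|$ — wait, that only covers a special case, so instead argue: if $\rho(X-x)=\rho(X)$ for every $x$, then every $x\in\mathrm{cl}(X-x)$; but if $X$ were dependent there is a circuit $C\subseteq X$, and for $x\in C$ deleting $x$ from $C$ (hence from $X$) must drop $\rho$ of $C$, and one needs to propagate this to $X$ — this is the step I expect to need the most care. The cleanest route for (viii) is: $X\notin\mathcal{I}$ iff $X$ contains a circuit $C$; pick $x\in C$; then $\rho(C-x)=|C|-1=\rho(C)$, and I claim $\rho(X-x)<\rho(X)$, for otherwise $x\in\mathrm{cl}(X-x)\supseteq\mathrm{cl}(C-x)$, which is consistent, so actually the honest statement is the contrapositive direction and I should phrase (viii) as: $X\in\mathcal{I}$ iff $\rho(X-x)=|X|-1$ for all $x$. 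I would reconcile this with the exact wording by using that $X\in\mathcal{I}$ implies every subset is independent so $\rho(X-x)=|X|-1<|X|=\rho(X)$ fails to be ``$<\rho(X)$''... In fact $\rho(X-x)=|X|-1=\rho(X)-0$? No: if $X$ independent $\rho(X)=|X|$ and $\rho(X-x)=|X|-1<\rho(X)$. So the $\Leftarrow$ direction is the substantive one and follows because if $X\notin\mathcal{I}$, take a maximal independent $B\subsetneq X$ (a basis of $X$), pick $x\in X\setminus B$; then $B\subseteq X-x$ so $\rho(X-x)\geq\rho(B)=\rho(X)$, and combined with (R2) gives equality, i.e. \emph{not} $<\rho(X)$ — this shows $X\notin\mathcal{I}\Rightarrow$ NOT ($\forall x:\rho(X-x)<\rho(X)$), which is the wrong direction. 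I will therefore read the intended statement carefully and prove: $X\notin\mathcal{U}$? The item as printed says $X\notin\mathcal{U}(M)$ (cyclic), not $\mathcal{I}$; then the claim is that $X$ is \emph{not} a union of circuits iff some element can be removed without dropping the rank, equivalently $X$ is cyclic iff removing any single element drops the rank. This I prove via Proposition~\ref{pro:basic-Z}(ii) or directly: $X$ cyclic $\Leftrightarrow$ every $x\in X$ lies in a circuit contained in $X$ $\Leftrightarrow$ for every $x$, $\rho(X-x)<\rho(X)$ (since $x$ in a circuit $C\subseteq X$ forces $\rho(C-x)=\rho(C)$ and hence, using submodularity with a basis, $\rho(X-x)<\rho(X)$; conversely $\rho(X-x)<\rho(X)$ means $x\notin\mathrm{cl}(X-x)$, so the fundamental circuit of $x$ w.r.t.\ a basis of $X-x$ lies in $X$ and contains $x$).

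For (iv), $\mathrm{cl}(U)\in\mathcal{Z}$ for $U$ cyclic: $\mathrm{cl}(U)$ is a flat by idempotency of closure (which follows from (ix) and (x)), and it is cyclic because $U=\bigcup C_i$ with each $C_i$ a circuit, each $C_i\subseteq\mathrm{cl}(U)$, and by (ix) adding the closure does not change rank so every element of $\mathrm{cl}(U)\setminus U$ lies on a circuit together with elements of $U$ — here I would invoke Proposition~\ref{pro:basic-Z}(ii) to exhibit circuits covering the new points, or cite the standard fact that the closure of a cyclic set is cyclic. Then (v): $U\subseteq X$ is cyclic in $M|X$ iff it is a union of circuits of $M|X$ iff (by the restriction rank formula \eqref{eq:rank_restriction}) a union of circuits of $M$ contained in $X$ iff cyclic in $M$ and contained in $X$; this needs (vi). (vi): $C\subseteq X$ is a circuit of $M|X$ iff $\rho_{|X}(C)=|C|-1$ and all proper subsets independent, which by \eqref{eq:rank_restriction} is identical to $C$ being a circuit of $M$ with $C\subseteq X$ — essentially immediate. (vii): for $X$ a flat, $\mathcal{Z}(M|X)=\{Z\in\mathcal{Z}(M):Z\subseteq X\}$; one inclusion: if $Z\in\mathcal{Z}(M)$, $Z\subseteq X$, then $Z$ is cyclic in $M|X$ by (v) and a flat in $M|X$ because $\mathrm{cl}_{M|X}(Z)=\mathrm{cl}_M(Z)\cap X=Z\cap X=Z$ using that $X$ is a flat so $\mathrm{cl}_M(Z)\subseteq\mathrm{cl}_M(X)=X$; the reverse inclusion similarly, noting a cyclic flat of $M|X$ is cyclic in $M$ by (v) and its $M$-closure is contained in $X$ (again since $X$ is a flat) hence equals itself.

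Finally (iii): assuming $1_\mathcal{Z}=E$ (i.e.\ $E$ is cyclic) and $\rho(X)<\rho(E)$, show $\eta(X)\leq\max\{\eta(Z):Z\in\mathrm{coA}_\mathcal{Z}\}$. The plan: by Proposition~\ref{pro:basic-Z}(i), $\rho(X)=\min\{\rho(F)+|X\setminus F|:F\in\mathcal{Z}\}$; fix an optimal $F$. Then $\eta(X)=|X|-\rho(X)=|X|-\rho(F)-|X\setminus F|=|X\cap F|-\rho(F)=\eta_{M}(X\cap F)\leq\eta(F)$ by (i) of the present proposition since $X\cap F\subseteq F$. If $F\neq E$ then $F$ lies below some coatom $Z$ of $\mathcal{Z}$, and $\eta(F)\leq\eta(Z)$ by (i) (monotonicity of nullity on the lattice, or just on subsets — need $F\subseteq Z$, which holds as $Z$ is a coatom above $F$). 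If $F=E$ then $\rho(X)=\rho(E)+|X\setminus E|=\rho(E)$, contradicting $\rho(X)<\rho(E)$, so $F\neq E$, and we are done. The main obstacle I anticipate is getting item (viii)/(the cyclic characterization) and the closure-is-idempotent chain stated cleanly without circular dependence, and making sure the restriction identities (v)--(vii) are proved in the right logical order; the rest is routine submodularity bookkeeping.
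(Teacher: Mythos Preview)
Your overall strategy is sound and in several places more thorough than the paper's own proof, which largely dispatches (i), (ii), (v), (vii), (viii), (iv), (x) by citation and only argues (iii), (vi), (ix) in any detail. Your argument for (iii) is genuinely different from the paper's: you invoke Proposition~\ref{pro:basic-Z}(i) to pick an optimal cyclic flat $F$ with $\rho(X)=\rho(F)+|X\setminus F|$ and compute $\eta(X)=|X\cap F|-\rho(F)\leq\eta(F)$, then push $F$ up to a coatom; the paper instead passes to $\mathrm{cl}(X)$, takes its largest cyclic subset $U$ (the ``cyclic core''), and uses the cited facts $\eta(\mathrm{cl}(X))=\eta(U)$ and $U\in\mathcal{Z}$. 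Both routes are correct; yours is arguably cleaner and avoids the extra lemmas about cyclic cores.

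There is, however, a real error in your treatment of (viii). You end with ``$X$ is cyclic iff removing any single element drops the rank'', but this is exactly backwards. The statement $X\notin\mathcal{U}(M)\Leftrightarrow\exists x\in X:\rho(X-x)<\rho(X)$ is, in contrapositive, $X\in\mathcal{U}(M)\Leftrightarrow\rho(X-x)=\rho(X)$ for every $x\in X$: a set is cyclic precisely when \emph{no} single deletion lowers the rank. Your supporting argument inherits the reversal: if $x$ lies in a circuit $C\subseteq X$, then $x\in\mathrm{cl}(C-x)\subseteq\mathrm{cl}(X-x)$, so $\rho(X-x)=\rho(X)$, not $<$; and conversely, if $\rho(X-x)<\rho(X)$ then $x\notin\mathrm{cl}(X-x)$, so adjoining $x$ to a basis of $X-x$ stays independent and there is \emph{no} circuit through $x$ inside $X$. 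Swap the inequality throughout and your sketch becomes correct: for the forward direction, $X$ not cyclic means some $x$ lies in no circuit of $X$, hence $x\notin\mathrm{cl}(X-x)$, hence $\rho(X-x)<\rho(X)$; for the reverse, if $\rho(X-x)<\rho(X)$ then $x$ is a coloop of $M|X$ and lies in no circuit of $X$.
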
 

\begin{proof}
 Properties (i), (ii), (v), (vii) and (viii) can be found in \cite[Lemma 2.2.4, Lemma 2.3.1, the paragraph under Lemma 2.4.5]{shoda12}.  Property (iv) is a consequence of \cite[Proposition 1.4.10 (ii)]{oxley92}. For (iii), assume that $\rho(X) < \rho(E)$ and $1_\mathcal{Z} = E$. Thus, $\mathrm{cl}(X) \neq E$ and $\eta(X) \leq \eta(\mathrm{cl}(X))$. Let $U$ be the largest cyclic set such that $U \subseteq \mathrm{cl}(X)$. From \cite[Lemma 2.4.8, Lemma 2.5.2]{shoda12}, we have that  $\eta(\mathrm{cl}(X)) = \eta(U)$ and that $U$ is a cyclic flat. Property (iv) now follows from the fact that 
$$
\rho(U) \leq \rho(\mathrm{cl}(X)) < \rho(E) = \rho(1_\mathcal{Z}).
$$ 
Property (vi) is a direct consequence of (v). Property (ix) is a consequence of property (x) which can be found in \cite[Lemma 1.4.2]{oxley92} 
\end{proof}

\begin{exam} \label{exam:Z}
Continuing with Example \ref{exam:matroid}, and remembering that the elements of $M_G$ are the columns of $G$, we see that the cyclic flats of $M_G$ are the submatrices in Figure~\ref{fig:cyclic}. The atomic cyclic flats are thus the submatrices corresponding to column sets $\{1,2,3,7,10\}$, $\{3,4,5,8,11\}$ and $\{1,5,6,9,12\}$. Remembering from~\eqref{example} that the rows are indexed by the information symbols $(a,b,c,d,e,f)$, these atomic cyclic flats agree exactly with the local clouds in Figure~\ref{fig:cloud}.

\begin{figure}[!htb]
    \centering
    \includegraphics[height = 9cm, width=0.68\textwidth]{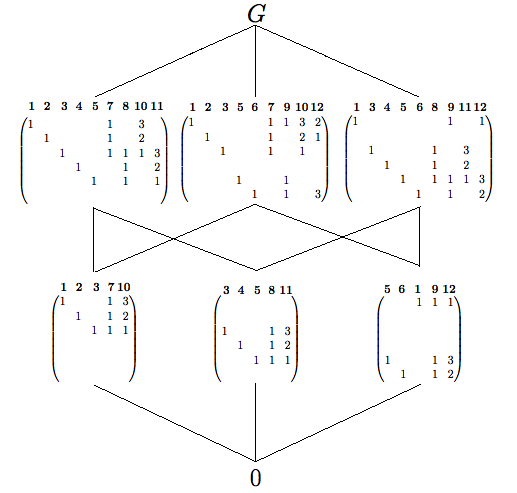}
    \caption{The lattice $\mathcal{Z}(M_G)$ of cyclic flats of the matroid $M(G)$ in Example~\ref{exam:Z}.}
    \label{fig:cyclic}
\end{figure}
\end{exam}

\section{Locally Repairable Matroids} \label{sec:(n,k,d,r,delta)-matroid}

\subsection{The parameters $(n,k,d,r,\delta)$ for matroids} \label{sec:(n,k,d,r,delta)-matroids}

In this subsection we show that the parameters $(n,k,d,r,\delta)$ are matroid invariants for an almost affine LRC. This will allow us to extend the definition of these parameters to matroids in general. 

 Let $C$ be an almost affine $(n,k,d,r,\delta)$-LRC over some finite alphabet $\mathbb{A}$. By the definition given in Eq. \eqref{eq:rank_almost_affine}, we know that $
\lvert C_X \rvert = \lvert \mathbb{A} \rvert^{\rho_C(X)}$, which specializes to $k = \rho_C(\lbrack n \rbrack)$ when $X=[n]$. 
In \cite{simonis98} it is proven that $M_{C_X} = M_C \vert X$ 
for $X \subseteq \lbrack n \rbrack$. Consequently, since the projection $C_X$ is also almost affine,  \eqref{eq:d_via_projection} implies that 
$$
d(C_X) = \min \{|Y | : Y \subseteq X \hbox{ and } \rho_C(X \setminus Y) < \rho_C(X) \},
$$
where $d(C_X)$ denotes the minimum distance of $C_X$. 

Using the observations above and the definition of an $(n,k,d,r,\delta)$-LRC  given in Section \ref{sec:LRC}, we conclude the following theorem. 

\begin{thm} \label{thm:matroid_invariant}
Let $C$ be an almost affine LRC with the associated matroid $M_C = (\rho_C, \lbrack n \rbrack)$. Then, the parameters $(n,k,d,r,\delta)$ of $C$ are matroid invariants, where
\begin{enumerate}[(i)]
\item $k = \rho_C(\lbrack n \rbrack)$,
 \item $d = \min \{ \lvert X \rvert : X \subseteq [n] \hbox{ and } \rho_C([n] \setminus X) < k\}$,
\item $C$ has all-symbol locality $(r,\delta)$ if and only if, for every $j \in \lbrack n \rbrack$  there exists a subset $S_j \subseteq \lbrack n \rbrack$ such that 
\begin{enumerate}
\item $j \in S_j$,
\item $\lvert S_j \rvert \leq r + \delta - 1$,
\item $d(C_{S_j}) = \min \{ \lvert X \rvert : X \subseteq S_j \hbox{ and } \rho_C(X) < \rho_C(S_j) \} \geq \delta$.
\end{enumerate} 
\end{enumerate}
\end{thm}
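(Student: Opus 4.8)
The plan is to translate each of the three defining conditions of an almost affine $(n,k,d,r,\delta)$-LRC directly into the language of the rank function $\rho_C$ of the associated matroid $M_C$, using the two facts established just before the theorem statement: that $\lvert C_X\rvert = \lvert\mathbb{A}\rvert^{\rho_C(X)}$ for all $X\subseteq[n]$, and that $M_{C_X}=M_C\vert X$, so that $\rho_{C_X} = \rho_{C}\vert X$. Since the right-hand sides of (i)--(iii) are expressed purely in terms of $\rho_C$, once each equivalence is verified the matroid-invariance is immediate: two almost affine codes with the same associated matroid have the same $\rho_C$, hence the same $n$ (the ground set size), $k$, $d$, and the same family of admissible locality sets, so the same $(r,\delta)$.

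For part (i), I would simply substitute $X=[n]$ into $\lvert C_X\rvert=\lvert\mathbb{A}\rvert^{\rho_C(X)}$ and compare with $\lvert C\rvert = \lvert\mathbb{A}\rvert^k$. For part (ii), I would start from the definition of minimum distance via projections in~\eqref{eq:d_via_projection}, namely $d=\min\{\lvert X\rvert : \lvert C_{[n]\setminus X}\rvert < \lvert C\rvert\}$, and rewrite the strict inequality $\lvert C_{[n]\setminus X}\rvert<\lvert C\rvert$ as $\rho_C([n]\setminus X)<\rho_C([n])=k$, which is legitimate because $\lvert\mathbb{A}\rvert\geq 2$ makes $t\mapsto\lvert\mathbb{A}\rvert^t$ strictly increasing; this already appears in the displayed computation of $d(C_X)$ preceding the theorem, so part (ii) is the special case $X=[n]$.

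For part (iii), the content is to show that a subset $S\subseteq[n]$ with $\lvert S\rvert\leq r+\delta-1$ is an $(r,\delta)$-locality set of $C$ (conditions (i),(ii) of the definition in Section~\ref{sec:LRC}) if and only if $d(C_S)\geq\delta$, and then to read off the ``all-symbol'' quantifier. The size constraint matches verbatim. For the recovery condition, I would invoke the chain of equivalences (ii$'$)--(ii$'''$) already recorded in Section~\ref{sec:LRC}: condition (ii$'''$) states precisely that the recovery property of an $(r,\delta)$-locality set is equivalent to $d(C_S)\geq\delta$. Combining this with the projection formula $d(C_S)=\min\{\lvert X\rvert : X\subseteq S,\ \rho_C(X)<\rho_C(S)\}$ (again the displayed identity before the theorem, specialized to $X=S$) yields condition (c) of part (iii). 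Finally, ``$C$ has all-symbol locality $(r,\delta)$'' means by definition that every symbol $j\in[n]$ lies in some $(r,\delta)$-locality set $S_j$, which is exactly the existential statement (a)--(c).

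The argument is essentially a bookkeeping exercise, so there is no deep obstacle; the only point requiring a little care is justifying that the equivalences (ii$'$)--(ii$'''$) and the projection identities are legitimate to apply to a \emph{projection} $C_S$ rather than to $C$ itself. This is handled by the remark in Section~\ref{Sec:AlmostAffine} that any projection of an almost affine code is again almost affine, together with $M_{C_S}=M_C\vert S$ from~\cite{simonis98}; after that, everything is a direct substitution, and the invariance follows because all quantities on the right are functions of $\rho_C$ alone.
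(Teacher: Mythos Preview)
Your proposal is correct and follows exactly the approach the paper takes: the paper does not give a separate proof but simply writes ``Using the observations above and the definition of an $(n,k,d,r,\delta)$-LRC given in Section~\ref{sec:LRC}, we conclude the following theorem,'' and your write-up spells out precisely those observations (the identity $|C_X|=|\mathbb{A}|^{\rho_C(X)}$, the fact $M_{C_X}=M_C|X$, and the equivalences (ii$'$)--(ii$'''$)). One minor slip: when you specialize the displayed formula for $d(C_X)$ to $X=S$, it should read $\rho_C(S\setminus X)<\rho_C(S)$, not $\rho_C(X)<\rho_C(S)$; this matches a typo in the theorem statement itself (compare with Definition~\ref{def:(n,k,d,r,delta)-matroid}(iv)(c)), but your reasoning is unaffected.
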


These results can now be taken as the definition of the parameters $(n,k,d,r,\delta)$ for an arbitrary matroid.
 \begin{definition} \label{def:(n,k,d,r,delta)-matroid}
Let $M = (\rho,E)$ be a matroid. Then we call $M$ an $(n,k,d,r,\delta)$-matroid, where
\begin{enumerate}[(i)]
\item $n = \lvert E \rvert$,
\item $k = \rho(E)$,
\item $d = \min \{ | X | : X \subseteq E \hbox{ and } \rho(E \setminus X) < k \}$,
\item %for  $0 < r \leq \rho(E)$ and $\delta \geq 2$, $M$  has all-symbol locality $(r,\delta)$ if and only if f
The parameters $0<r\leq \rho(E)$ and $\delta\geq 2$ are such that for all $x \in  E$, there exists a subset $S_x \subseteq E$ with
\begin{enumerate}
\item $x \in S_x$,
\item $\lvert S_x \rvert \leq r + \delta - 1$,
\item $d(M|S_x) = \min \{ \lvert X \rvert : X \subseteq S_x \hbox{ and } \rho(S_x \setminus X) < \rho(S_x) \} \geq \delta$.
\end{enumerate} 
\end{enumerate}  
\end{definition} 

 A subset $S \subseteq E$ is called a $(r,\delta)$-\emph{locality set} of the elements $x \in S$ if the statements b)--c)  above are satisfied by $S$. The parameters $n$ and $k$ are obviously defined for all matroids. We note that the parameter $d$ is finite if and only $k > 0$. Furthermore, we notice that every element $x \in E$ is contained in some cyclic set $S_x$ if and only if $1_\mathcal{Z} = E$. If this is the case, and $r = \max \{\lvert S_x \rvert - 1 : x \in X\}$, then $M$ has $(r,2)$-locality. 
As a consequence of the observations above, we get the following proposition. 

\begin{proposition} \label{proposition:parameters_defined}
A matroid $M=(\rho,E)$ is an $(n,k,d,r,\delta)$-matroid with finite values of $(n,k,d,r,\delta)$ if and only if $0 < \rho(E)$ and $1_{\mathcal{Z}} = E$.
\end{proposition}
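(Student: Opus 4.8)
The plan is to prove both directions by unpacking Definition~\ref{def:(n,k,d,r,delta)-matroid}(iv) and matching it against the structural characterization of locality sets in terms of cyclic flats. For the forward direction, suppose $M$ is an $(n,k,d,r,\delta)$-matroid with all parameters finite. Finiteness of $d$ forces $k = \rho(E) > 0$, as already observed in the text (if $k = 0$ then no $X$ satisfies $\rho(E \setminus X) < k$, so the minimum is over the empty set and $d = \infty$). It remains to show $1_{\mathcal{Z}} = E$, \emph{i.e.}, that $E$ is a cyclic flat; since $E$ is automatically a flat, this amounts to showing $E$ is cyclic, equivalently that every $x \in E$ lies in some circuit. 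Given $x \in E$, take the locality set $S_x$ from the definition; condition (c) says $d(M|S_x) \geq \delta \geq 2$, so $\rho(S_x \setminus \{x\}) = \rho(S_x)$ would already give a word of weight $1$, contradicting $d(M|S_x) \geq 2$ only if $x$ is not a coloop of $M|S_x$. More carefully: $d(M|S_x) \geq 2$ means removing any single element does not drop the rank of $S_x$, so $\rho(S_x \setminus \{x\}) = \rho(S_x)$, hence $x \in \mathrm{cl}(S_x \setminus \{x\})$; this is exactly the statement that $x$ lies in a circuit contained in $S_x$ (take a minimal dependent subset of $(S_x \setminus \{x\}) \cup \{x\}$ containing $x$, which exists since $x$ is in the closure of the rest). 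Therefore every $x$ is in a circuit, so $E$ is a union of circuits, hence cyclic, and being a flat it equals $1_{\mathcal{Z}}$.

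For the converse, assume $0 < \rho(E)$ and $1_{\mathcal{Z}} = E$. Then $k = \rho(E) > 0$, so the set $\{X : \rho(E \setminus X) < k\}$ is nonempty (it contains $E$ itself, and more economically any $X$ whose complement has rank $< k$, which exists since $k \leq n$ is achievable), hence $d$ is finite with $1 \leq d \leq n$. For the locality parameters: since $1_{\mathcal{Z}} = E$, every element $x \in E$ lies in some cyclic set, and in particular lies in some circuit $C_x \in \mathcal{C}(M)$. Then $S_x := C_x$ is a candidate locality set: $x \in S_x$, and $d(M|S_x) = d(M|C_x) = |C_x| - \rho(C_x) + 1 = 2 \geq 2$ since a circuit has exactly one dependency, so $\delta := 2$ works, and setting $r := \max\{|C_x| - 1 : x \in E\} = \max\{|S_x| - 1\}$ we get $|S_x| = r' + 1 \leq r + \delta - 1$ with $\delta = 2$. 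Also $1 \leq r \leq \rho(E) = k$ because every circuit has size $\rho(C_x) + 1 \leq \rho(E) + 1$, so $|C_x| - 1 \leq \rho(E)$. Thus $M$ is an $(n,k,d,r,2)$-matroid with all parameters finite.

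The main obstacle — really the only delicate point — is verifying carefully in the forward direction that condition (c) of the locality set, which is phrased via the minimum-distance formula $d(M|S_x) \geq \delta$, genuinely forces $x$ into a \emph{circuit} (and not merely into a dependent set), and conversely that the choice $\delta = 2$, $S_x$ a circuit in the converse direction is legitimate against constraint (b), $|S_x| \leq r + \delta - 1$. Both hinge on the elementary fact that $d(M|C) = 2$ for any circuit $C$ and that $d(M|S) \geq 2$ is equivalent to $S$ containing no coloops of $M|S$, \emph{i.e.}, every element of $S$ being in a circuit within $S$; I would cite Proposition~\ref{prop:basic_facts} or standard matroid facts for $\rho(\mathrm{cl}(X)) = \rho(X)$ and the circuit-closure relationship, and otherwise the argument is bookkeeping. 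One should also note the edge case $d = n$ can occur (e.g. when $k = n$, a free matroid—but then there are no circuits, contradicting $1_{\mathcal{Z}} = E$ unless $E = \emptyset$; so in fact $k < n$ here), but this does not affect finiteness and need not be belabored.
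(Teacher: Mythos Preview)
Your proposal is correct and follows essentially the same approach as the paper. The paper does not give a formal proof of this proposition; it is stated as ``a consequence of the observations above,'' namely that $d$ is finite if and only if $k>0$, that every $x\in E$ lies in some cyclic set if and only if $1_{\mathcal{Z}}=E$, and that in this case one can take $\delta=2$ and $r=\max\{|S_x|-1\}$. Your argument fleshes out exactly these observations, choosing circuits as the locality sets for the converse direction and unpacking why $d(M|S_x)\geq 2$ forces each $x$ into a circuit in the forward direction.
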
 

% Hence, by an \emph{$(n,k,d,r,\delta)$-matroid} $M = (\rho,E)$ we mean an $(n,k,d)$-matroid with $k > 0$, $1_{\mathcal{Z}} = E$ and all-symbol locality $(r,\delta)$.  Further, we o
Observe that if $M$ has $(r,\delta)$-locality, then  by Definition \ref{def:(n,k,d,r,delta)-matroid} (iv), $M$ has $(r',\delta')$-locality for $r \leq r' \leq k$ and $2 \leq \delta' \leq \delta$ with $r' + \delta' \geq r+\delta$. So neither the values of $(r,\delta)$ nor the locality sets $S_x$ are in general uniquely determined for a 

matroid

 $M$.
%\begin{proposition} \label{prop:conditions_(n,k,d,r,delta)}
%Let $M = (\rho,E)$ be a matroid. If the parameters $(n,k,d,r,\delta)$ is defined for $M$, then $0 < r \leq k \leq n - (\delta - 1)\lceil \frac{k}{r} \rceil$, $1_{\mathcal{Z}} = E$ and $2 \leq \delta \leq d$. The parameters $(n,k,d,r,\delta)$ for $M$ is defined if $0 < \rho(E)$ and $1_{\mathcal{Z}} = E$. If $r' = \min\{r : M \hbox{ has } (r,\delta)\hbox{-locality}\}$ \hbox{ and } $\delta' = \max \{\delta : M \hbox{ has } (r,\delta) \hbox{-locality}\}$ \hbox{, then $M$ has $(r',\delta')$-locality}. 
%\end{proposition} 

\subsection{A generalized Singleton bound for $(n,k,d,r, \delta)$-matroids}

%In this subsection we will first relate the parameters $(n,k,d,r,\delta)$ of a matroid to its lattice of cyclic flats, in Lemmas~\ref{lemma:(n,k,d,r,delta)-Z}--\ref{lemma:fundamental_chain_bound_d} below. Using these lemmas, we prove Theorem~\ref{th:bound}, which is a generalized Singleton bound for matroids. 

The main result of this subsection is Theorem \ref{th:bound} which gives a Singleton-type bound on the parameters $(n,k,d,r,\delta)$ for matroids. 
%The corresponding bound \eqref{eq:singleton_bound_r} for L
In the case of linear LRCs with information locality and trivial failure tolerance $\delta =2$, \emph{i.e.}, only tolerating one failure, the bound was given in \cite{Gopalan}. 
%To prove this bound in \cite{Gopalan}, the authors show the existence of a large enough set $S$ with rank less that $k-1$. This set $S$ is constructed by an algorithm which uses some smaller sets associated to the informations symbols. The bound now follows by the use of $S$ and a result on how $d$ corresponds to the largest size of a set with rank $k-1$. 

The core ingredients of the proof of Theorem \ref{th:bound} are the same as in \cite{Gopalan}, interpreted for matroids. First, we relate the parameters $(n,k,d,r,\delta)$ of a matroid to its lattice of cyclic flats in Theorem \ref{theorem:(n,k,d,r,delta)-Z}. Then in Lemma \ref{lemma:fundamental_chain},  we obtain a large cyclic flat $Y_{m-1}$ of rank less than $k$.  In Theorem \ref{th:bound} we relate $Y_{m-1}$ to $d$, thereby proving the theorem.

\begin{thm} \label{theorem:(n,k,d,r,delta)-Z} 
Let $M = (\rho,E)$ be an $(n,k,d,r,\delta)$-matroid with $0 < \rho(E)$ and $1_\mathcal{Z} = E$. Then 
\begin{enumerate}[(i)]
\item $d = n - k + 1 - \max \{ \eta(Z) : Z \in coA_{\mathcal{Z}} \}$,
\item For each $x \in E$, there is a cyclic set
$S_x \in \mathcal{U}(M)$ such that
\begin{enumerate}
\item $x \in S_x$,
\item $\lvert S_x \rvert \leq r + \delta - 1$,
\item $d(M \vert S_x) = \eta(S_x) + 1 - \mathrm{max} \{ \eta(Z) : Z \in coA_{\mathcal{Z}(M \vert S_x)}\} \geq \delta$.
\end{enumerate}
\end{enumerate}
\end{thm}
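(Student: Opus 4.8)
The plan is to prove the two assertions separately, each by reducing the parameters $(n,k,d,r,\delta)$ to their defining descriptions in terms of rank and nullity and then invoking Proposition~\ref{prop:basic_facts} and Proposition~\ref{pro:basic-Z}.

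For part (i), I would start from the matroid definition of $d$, namely $d = \min\{|X| : \rho(E\setminus X) < k\}$. Equivalently, $n-d$ is the maximum size of a subset $Y\subseteq E$ with $\rho(Y) < k$; writing $Y = E\setminus X$ and $|X| = n - |Y|$, we want to maximize $|Y|$ subject to $\rho(Y)<\rho(E)$. Since enlarging $Y$ to its closure $\mathrm{cl}(Y)$ preserves the rank (Proposition~\ref{prop:basic_facts}(ix)) and can only increase $|Y|$, the optimal $Y$ may be taken to be a flat, and in fact (as in the proof of Proposition~\ref{prop:basic_facts}(iii)) replacing it by the largest cyclic set inside it preserves both rank and nullity, so we may assume $Y$ is a cyclic flat of rank less than $k=\rho(E)=\rho(1_\mathcal{Z})$. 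Among cyclic flats of rank $<k$, the nullity $\eta(Z) = |Z|-\rho(Z)$ is maximized at a coatom of $\mathcal{Z}$: any cyclic flat $Z$ with $\rho(Z)<\rho(E)$ lies below some coatom $Z'\in coA_\mathcal{Z}$, and by Proposition~\ref{prop:basic_facts}(i) $\eta(Z)\le\eta(Z')$. Maximizing $|Y|=\rho(Y)+\eta(Y)$ over such $Y$ then gives, since $\rho(Y)\le k-1$ with equality achievable, $n-d = \max\{|Z| : Z\in coA_\mathcal{Z}\}$... more carefully: $n-d = (k-1) + \max\{\eta(Z):Z\in coA_\mathcal{Z}\}$, because one can always pad a coatom up to rank exactly $k-1$ by adding independent elements without changing nullity. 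Rearranging yields $d = n-k+1-\max\{\eta(Z):Z\in coA_\mathcal{Z}\}$.

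For part (ii), I would take any $x\in E$ and the locality set $S_x$ guaranteed by Definition~\ref{def:(n,k,d,r,delta)-matroid}(iv), which already satisfies (a) and (b). The first new claim is that $S_x$ can be chosen to be a cyclic set: the condition $d(M|S_x)\ge\delta\ge 2$ forces $S_x$ to contain no coloops of $M|S_x$, hence by Proposition~\ref{prop:basic_facts}(viii)-type reasoning every element of $S_x$ lies in a circuit of $M|S_x$, so $S_x$ is a union of circuits, i.e.\ cyclic; alternatively one replaces $S_x$ by its largest cyclic subset, which by the argument used for Proposition~\ref{prop:basic_facts}(iii) keeps the same rank, keeps $x$ (since $x$ is in some circuit), and only shrinks the size while $d(M|S_x)\ge\delta$ is preserved or improved. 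Then I apply part (i) to the restricted matroid $M|S_x$, which is an $(|S_x|,\rho(S_x),d(M|S_x),\cdot,\cdot)$-matroid with $1_{\mathcal{Z}(M|S_x)} = S_x$ precisely because $S_x$ is cyclic: this gives $d(M|S_x) = |S_x| - \rho(S_x) + 1 - \max\{\eta(Z):Z\in coA_{\mathcal{Z}(M|S_x)}\} = \eta(S_x)+1-\max\{\eta(Z):Z\in coA_{\mathcal{Z}(M|S_x)}\}$, which is (c).

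The main obstacle I anticipate is the careful bookkeeping in part (i) around the claim that the extremal set $Y$ of rank $<k$ and maximum size can be taken to be a \emph{cyclic flat that is a coatom} — one must simultaneously argue that passing to the closure does not decrease size, that passing to the largest cyclic subset of that closure preserves both rank and nullity (this is exactly the content used inside the proof of Proposition~\ref{prop:basic_facts}(iii), citing \cite[Lemma 2.4.8, Lemma 2.5.2]{shoda12}), and that among cyclic flats of rank $<k$ the nullity is maximized on $coA_\mathcal{Z}$ while the rank can independently be pushed up to $k-1$. A secondary subtlety is verifying in part (ii) that $1_{\mathcal{Z}(M|S_x)} = S_x$, which is equivalent to $S_x$ being cyclic, and hence that part (i) is legitimately applicable to $M|S_x$; this is why establishing the cyclicity of (a suitable choice of) $S_x$ is the crucial first move there. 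Both obstacles are essentially matching the structure theory of cyclic flats to the optimization defining $d$, and I expect them to go through cleanly given Propositions~\ref{pro:basic-Z} and~\ref{prop:basic_facts}.
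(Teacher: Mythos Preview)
Your proposal is correct and follows essentially the same route as the paper's proof: for (i) you rewrite $d$ as $n$ minus the maximum size of a set of rank $<k$, pass to flats via closure, then to cyclic flats of maximal nullity via Proposition~\ref{prop:basic_facts}(iii), exactly as the paper does; for (ii) you establish that $S_x$ is cyclic and then apply (i) to $M|S_x$, again matching the paper. Your observation that $d(M|S_x)\ge\delta\ge 2$ directly forces $S_x$ to have no coloops (hence to be cyclic) is in fact slightly crisper than the paper's phrasing, which invokes Proposition~\ref{prop:basic_facts}(viii) to ``find a smaller set with the same nullity'' --- but the content is identical.
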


\begin{proof}
The proof is given in the Appendix.
\end{proof}

As $\eta(Z)$ is non-negative for every $Z$, Theorem~\ref{theorem:(n,k,d,r,delta)-Z}~(ii)~c) gives $\delta + \rho(S_x) - 1 \leq \lvert S_x\rvert$, which together with Theorem~\ref{theorem:(n,k,d,r,delta)-Z}~(ii)~b) shows that
\begin{equation} \label{eq:rank_S_x}
\rho(S_x) \leq r
\end{equation}
for any $(r,\delta)$-locality $S_x$. Moreover, we observe that for any atom $S$ in a lattice of cyclic flats with $0_\mathcal{Z} = \emptyset$, we can use any subset $S' \subseteq S$ as a locality set when $|S'| > \rho(S)$. However, different choices of locality sets may give different values on the parameters $(r,\delta)$.

\begin{exam} \label{exam:Z_(n,k,d,r,delta)}
Representing the cyclic flats associated to the matroid $M_G$ from Example \ref{exam:Z} just by their corresponding sets and ranks in Figure~\ref{fig:cyclicsets},
we use Theorem \ref{theorem:(n,k,d,r,delta)-Z} to get the parameters $(n,k,d,r,\delta)$ of the linear LRC that is generated by the matrix $G$ given in Example \ref{exam:matroid}.  

\begin{figure}[!htb]
    \centering
    \includegraphics[height = 5cm, width=0.5\textwidth]{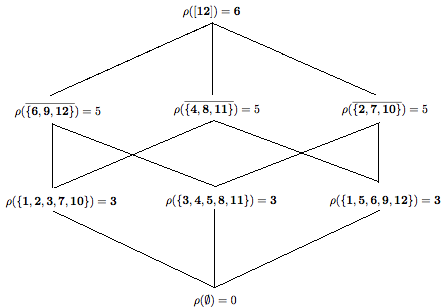}
    \caption{The lattice $\mathcal{Z}(M_G)$ of cyclic flats of the matroid $M(G)$ in Example~\ref{exam:Z}, without reference to the matrix $G$.}
    \label{fig:cyclicsets}
\end{figure}

The values for $(n,k,d)$ are
$$
\begin{array}{l}
n = 12,\\
k = 6,\\
d = 12 - 6 + 1 - 4 = 3.
\end{array}
$$
Using $S_1 = \{1,2,3,7,10\}$, $S_2 = \{3,4,5,8,11\}$ and $S_3 = \{1,5,6,9,12\}$ as the locality sets, we get the parameters $(r,\delta) = (3,3)$. 
%If we choose some other locality sets, e.g. $S'_1 = \{2,3,7,10\}$, $S'_2 = \{4,5,8,11\}$, $S'_3 = \{1,5,6,9\}$ and $S'_4 = \{5,6,9,12\}$, we get the parameters $(r,\delta) = (3,2)$.
\end{exam}

From Theorem \ref{theorem:(n,k,d,r,delta)-Z}, we derive a chain of cyclic flats, from which we will extract a large cyclic flat, to be used in the proof of Theorem~\ref{th:bound}.

 \begin{lemma} \label{lemma:fundamental_chain}
Let $M = (\rho,E)$ be an $(n,k,d,r,\delta)$-matroid. %and let  $\{S_x\}_{x \in E}$ be a collection of cyclic sets of $M$ for which the statements $a) - c)$ in Theorem \ref{theorem:(n,k,d,r,delta)-Z}~(ii) are satisfied. 
Then there is a chain $$0_\mathcal{Z} = Y_0 \subsetneq Y_1 \subsetneq \ldots \subsetneq Y_m = E$$ in $(\mathcal{Z}(M),\subseteq)$  such that for $j = 1,\ldots,m$ we have
\begin{enumerate}[(i)]
%\item $C: 0_\mathcal{Z} = Y_0 \subsetneq Y_1 \subsetneq \ldots \subsetneq Y_m = E$ is a chain in $(\mathcal{Z}(M),\subseteq)$, 
\item $\rho(Y_j) %\leq \rho(Y_{j-1}) + \rho(S_j) - \rho(Y_{j-1} \cap S_j) 
\leq \rho(Y_{j-1}) + r$,
\item $\eta(Y_j) %\geq \eta(Y_{j-1}) + \eta(S_j) - \eta(Y_{j-1} \cap S_j) 
\geq \eta(Y_{j-1}) + (\delta - 1)$.
\end{enumerate}
\end{lemma}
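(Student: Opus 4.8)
The plan is to build the chain greedily from the bottom, at each step adjoining a locality set to the flat constructed so far and then taking the closure. Concretely, set $Y_0 = 0_\mathcal{Z}$. Given $Y_{j-1} \in \mathcal{Z}(M)$ with $\rho(Y_{j-1}) < k$, pick any element $x \in E \setminus \mathrm{cl}(Y_{j-1})$ (such $x$ exists since $\rho(Y_{j-1}) < k = \rho(E)$), take an $(r,\delta)$-locality set $S_x$ for $x$ as guaranteed by Theorem~\ref{theorem:(n,k,d,r,delta)-Z}~(ii), and define $Y_j = \mathrm{cl}(Y_{j-1} \cup S_x)$. By Proposition~\ref{prop:basic_facts}~(iv) and the fact that $Y_{j-1} \cup S_x$ is a union of cyclic sets (both $Y_{j-1}$ and $S_x$ are cyclic, $Y_{j-1}$ by hypothesis and $S_x$ by Theorem~\ref{theorem:(n,k,d,r,delta)-Z}~(ii)), the closure $Y_j$ is again a cyclic flat, so $Y_j \in \mathcal{Z}(M)$. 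Moreover $x \in Y_j \setminus Y_{j-1}$, so the containment $Y_{j-1} \subsetneq Y_j$ is strict, and since $E$ is finite the process terminates with $Y_m = E$ for some $m$ (we need $\mathrm{cl}(Y_{m-1} \cup S_x) = E$ once $\rho(Y_{m-1} \cup S_x) = k$; here one may have to allow the last step to use an arbitrary locality set, or simply note that the chain can always be refined/extended to reach $E$ as its top element since $1_\mathcal{Z} = E$).

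For property (i): $\rho(Y_j) = \rho(Y_{j-1} \cup S_x)$ by Proposition~\ref{prop:basic_facts}~(ix), and submodularity (R3) gives $\rho(Y_{j-1} \cup S_x) \le \rho(Y_{j-1}) + \rho(S_x) - \rho(Y_{j-1} \cap S_x) \le \rho(Y_{j-1}) + \rho(S_x)$. Combining with $\rho(S_x) \le r$ from Eq.~\eqref{eq:rank_S_x} yields $\rho(Y_j) \le \rho(Y_{j-1}) + r$.

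For property (ii): since $\eta$ is monotone (Proposition~\ref{prop:basic_facts}~(i)) and $\eta(\mathrm{cl}(X)) \ge \eta(X)$, it suffices to bound $\eta(Y_{j-1} \cup S_x)$ from below. By the supermodularity of nullity (Proposition~\ref{prop:basic_facts}~(ii)),
\begin{equation}
\eta(Y_{j-1} \cup S_x) \ge \eta(Y_{j-1}) + \eta(S_x) - \eta(Y_{j-1} \cap S_x).
\end{equation}
The main obstacle is exactly this last term: we need $\eta(S_x) - \eta(Y_{j-1} \cap S_x) \ge \delta - 1$. This is where the locality condition Theorem~\ref{theorem:(n,k,d,r,delta)-Z}~(ii)~c) must be used more carefully than just through Eq.~\eqref{eq:rank_S_x}. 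The intended argument is that $Y_{j-1} \cap S_x$ is a proper subset of $S_x$ of rank strictly less than $\rho(S_x)$ — indeed $x \in S_x \setminus \mathrm{cl}(Y_{j-1})$ forces $x \notin \mathrm{cl}(Y_{j-1} \cap S_x)$, so $\rho(Y_{j-1} \cap S_x) < \rho(S_x)$ — and then the bound $d(M|S_x) \ge \delta$ read through Theorem~\ref{theorem:(n,k,d,r,delta)-Z}~(ii)~c) controls how fast nullity can grow inside $S_x$. The cleanest route is likely: $Y_{j-1} \cap S_x \subseteq \mathrm{cl}_{M|S_x}(Y_{j-1}\cap S_x) =: Z'$, which is a flat of $M|S_x$ of rank $< \rho(S_x)$, hence contained in some coatom-level flat, and $\eta(S_x) - \eta(Z') \ge d(M|S_x) - 1 \ge \delta - 1$ by the distance formula; combined with $\eta(Y_{j-1} \cap S_x) \le \eta(Z')$ from monotonicity this gives the claim. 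One must double-check the edge case where $\rho(S_x) = \rho(Y_{j-1} \cap S_x) + $ something small, but the strict-rank-drop observation handles it. I expect verifying this nullity gap — reconciling the coatom-based distance formula of Theorem~\ref{theorem:(n,k,d,r,delta)-Z} with the subset $Y_{j-1}\cap S_x$ — to be the only non-routine part of the proof.
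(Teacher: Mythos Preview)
Your proposal is correct and follows essentially the same approach as the paper's proof: build the chain greedily via $Y_j = \mathrm{cl}(Y_{j-1}\cup S_x)$, bound the rank increment by submodularity, and bound the nullity increment by supermodularity together with the local distance formula. The paper dispatches the ``only non-routine part'' you identified by citing Proposition~\ref{prop:basic_facts}~(iii) directly in $M|S_x$ (your observation that $\rho(Y_{j-1}\cap S_x) < \rho(S_x)$, because $x\notin Y_{j-1}=\mathrm{cl}(Y_{j-1})$, is exactly the hypothesis needed there), and your worry about the last step is unfounded: once $\rho(Y_{j-1}\cup S_x)=k$ the closure is automatically $E$, and the same nullity argument applies verbatim since $x\in S_x\setminus Y_{m-1}$ still forces the strict rank drop on the intersection.
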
 

\begin{proof}
The proof is given in the Appendix.
\end{proof}

%The following lemma is now a consequence of Theorem \ref{theorem:(n,k,d,r,delta)-Z} and Lemma \ref{lemma:fundamental_chain}.

%\begin{lemma} \label{lemma:fundamental_chain_bound_d}
%Let $M = (\rho,E)$ be an $(n,k,d,r,\delta)$-matroid with parameters $(n,k,d,r,\delta)$, and let  
%$$
%C: 0_\mathcal{Z} = Y_0 \subsetneq Y_1 \subsetneq \ldots \subsetneq Y_m = E
%$$
%be any chain of $(\mathcal{Z},\subseteq)$ given in Lemma \ref{lemma:fundamental_chain}. Then
%$$
%d \leq n - k + 1 - \eta(Y_{m-1}) \quad \hbox{and} \quad m \geq \left \lceil \frac{k}{r} \right \rceil.
%$$
%\end{lemma}

%\begin{proof}
%The proof is given in the Appendix. 
%\end{proof}

%The following generalized Singleton bound for matroids, now follows almost immediately from Lemmas~\ref{lemma:fundamental_chain}--\ref{lemma:fundamental_chain_bound_d}.
We are now ready to prove the generalized Singleton bound for matroids.

\begin{thm} \label{th:bound}
Let $M= (\rho,E)$ be an $(n,k,d,r,\delta)$-matroid. Then
$$
d \leq n - k + 1 - \left (\left \lceil \frac{k}{r} \right \rceil - 1 \right )(\delta - 1).
$$
\end{thm}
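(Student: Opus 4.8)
The plan is to extract from Lemma~\ref{lemma:fundamental_chain} a single cyclic flat $Y$ whose rank is strictly less than $k$ but whose nullity $\eta(Y)$ is as large as possible, and then feed this $Y$ into the formula for $d$ provided by Theorem~\ref{theorem:(n,k,d,r,delta)-Z}~(i). Concretely, let $0_\mathcal{Z} = Y_0 \subsetneq Y_1 \subsetneq \cdots \subsetneq Y_m = E$ be the chain guaranteed by Lemma~\ref{lemma:fundamental_chain}. Property (i) of that lemma gives $\rho(Y_j) \leq jr$ by telescoping (using $\rho(Y_0)=0$), and property (ii) gives $\eta(Y_j) \geq j(\delta-1)$ by telescoping (using $\eta(Y_0)=0$). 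First I would pin down the right index: since $\rho(Y_m) = k$ and each step increases the rank by at most $r$, we have $k \leq mr$, so $m \geq \lceil k/r \rceil$. I then set $j^\ast = \lceil k/r \rceil - 1$ and consider $Y_{j^\ast}$.

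The first key step is to verify $\rho(Y_{j^\ast}) < k$. From $\rho(Y_{j^\ast}) \leq j^\ast r = (\lceil k/r\rceil - 1) r$, it suffices to check $(\lceil k/r\rceil - 1) r < k$, which is immediate from $\lceil k/r\rceil < k/r + 1$. (One should also note $j^\ast \le m$, i.e. that the index $Y_{j^\ast}$ exists in the chain; this holds because $m \ge \lceil k/r\rceil > j^\ast$, unless $\lceil k/r\rceil = 1$, in which case $j^\ast = 0$ and $Y_0 = 0_\mathcal{Z}$ has rank $0 < k$ trivially, and the bound reduces to the ordinary Singleton bound $d \le n-k+1$.) The second key step is the nullity estimate: property (ii) of Lemma~\ref{lemma:fundamental_chain} telescoped gives $\eta(Y_{j^\ast}) \geq j^\ast(\delta-1) = (\lceil k/r\rceil - 1)(\delta - 1)$.

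Now I invoke Theorem~\ref{theorem:(n,k,d,r,delta)-Z}~(i), which states $d = n - k + 1 - \max\{\eta(Z) : Z \in coA_\mathcal{Z}\}$. Since $Y_{j^\ast}$ is a cyclic flat with $\rho(Y_{j^\ast}) < k = \rho(E)$, Proposition~\ref{prop:basic_facts}~(iii) applies (using $1_\mathcal{Z} = E$, which holds by Proposition~\ref{proposition:parameters_defined} since $M$ is an $(n,k,d,r,\delta)$-matroid), giving $\eta(Y_{j^\ast}) \leq \max\{\eta(Z) : Z \in coA_\mathcal{Z}\}$. Combining,
$$
d = n - k + 1 - \max\{\eta(Z) : Z \in coA_\mathcal{Z}\} \leq n - k + 1 - \eta(Y_{j^\ast}) \leq n - k + 1 - \left(\left\lceil \frac{k}{r}\right\rceil - 1\right)(\delta - 1),
$$
which is the claimed bound.

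I expect the main obstacle to be purely bookkeeping around edge cases and the telescoping: one must be careful that the chain from Lemma~\ref{lemma:fundamental_chain} is long enough to contain $Y_{j^\ast}$, handle the degenerate case $\lceil k/r\rceil = 1$ separately (or observe the argument still goes through with $Y_0$), and correctly use that $\eta$ is nonnegative so that a shorter-than-expected chain only helps. The substantive content — that $Y_{j^\ast}$ has small rank yet large nullity and that its nullity is dominated by a coatom's nullity — is already packaged in Lemma~\ref{lemma:fundamental_chain}, Proposition~\ref{prop:basic_facts}~(iii), and Theorem~\ref{theorem:(n,k,d,r,delta)-Z}~(i), so the proof itself is short once those are in hand.
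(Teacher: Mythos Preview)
Your proposal is correct and follows essentially the same approach as the paper: both use the chain from Lemma~\ref{lemma:fundamental_chain}, extract a cyclic flat of rank below $k$ with nullity at least $(\lceil k/r\rceil -1)(\delta-1)$, and then invoke Theorem~\ref{theorem:(n,k,d,r,delta)-Z}~(i). The only cosmetic difference is that the paper picks the penultimate element $Y_{m-1}$ of the chain (using $m\geq \lceil k/r\rceil$ to bound its nullity from below), whereas you pick $Y_{\lceil k/r\rceil -1}$ directly; the paper's choice avoids your separate verification that $\rho(Y_{j^\ast})<k$ and the edge case $\lceil k/r\rceil=1$, but the substance is identical.
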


\begin{proof}
 Let 
$$
C: 0_{\mathcal{Z}} = Y_0 \subsetneq Y_1 \subsetneq \ldots \subsetneq Y_m = E
$$
be a chain of $(\mathcal{Z},\subseteq)$ given in Lemma \ref{lemma:fundamental_chain}. 
Then $\eta(Y_{m-1}) \geq (m-1) (\delta - 1)$, by Lemma \ref{lemma:fundamental_chain} (ii). On the other hand, by Lemma \ref{lemma:fundamental_chain} (i) we have that $k=\rho(Y_m)\leq mr$, so $m\geq \lceil\frac{k}{r}\rceil$.

Combining these results we get $$\eta(Y_{m-1}) \geq (m-1) (\delta - 1)\geq (\left\lceil\frac{k}{r}\right\rceil-1)(\delta-1).$$

Since $Y_{m-1}\in\mathcal{Z}\setminus \{1_\mathcal{Z}\}$, we have 
$$
\max \{ \eta(Z) : \eta(Z) \in coA_\mathcal{Z} \}  \geq  \eta(Y_{m-1}),
$$
so Theorem~\ref{theorem:(n,k,d,r,delta)-Z}~(i) yields
$$
d   \leq  n - k + 1 - \eta(Y_{m-1})\leq n - k + 1 - \left (\left \lceil \frac{k}{r} \right \rceil - 1 \right )(\delta - 1).
$$
\end{proof}

We also give three additional bounds on the parameters of a matroid.

\begin{proposition} \label{proposition:bound_delta_k_rate}
Let $M = (\rho,E)$ be an $(n,k,d,r,\delta)$-matroid. Then 
\begin{enumerate}[(i)]
\item $\delta \leq d$,
\item $k \leq n - \left \lceil \frac{k}{r} \right \rceil (\delta - 1)$,
\item $\frac{k}{n} \leq \frac{r}{r + \delta - 1}$.
\end{enumerate}
\end{proposition}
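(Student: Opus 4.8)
\textbf{Proof proposal for Proposition \ref{proposition:bound_delta_k_rate}.}

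The plan is to derive all three inequalities from the results already established, chiefly Theorem \ref{th:bound} and Theorem \ref{theorem:(n,k,d,r,delta)-Z}, together with the elementary monotonicity facts in Proposition \ref{prop:basic_facts}. For part (i), I would argue that $d \geq \delta$ directly from the locality structure. Fix any element $x \in E$ and its $(r,\delta)$-locality set $S_x$. By Theorem \ref{theorem:(n,k,d,r,delta)-Z}~(ii)~c) we have $d(M|S_x) \geq \delta$, which means $\rho(S_x \setminus X) = \rho(S_x)$ for every $X \subseteq S_x$ with $|X| \leq \delta - 1$. The point is to promote this ``local'' non-decrease of rank into a global one: if $X \subseteq E$ has $|X| \leq \delta - 1$, one wants $\rho(E \setminus X) = k$. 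Choosing an $x \in X$ (if $X$ is nonempty) and its locality set $S_x$, the set $X \cap S_x$ has size at most $\delta - 1$, so $\rho(S_x \setminus X) = \rho(S_x)$, i.e.\ $S_x \subseteq \cl(S_x \setminus X) \subseteq \cl(E \setminus X)$ by Proposition \ref{prop:basic_facts}~(x), so in particular $x \in \cl(E \setminus X)$. Running this over all $x \in X$ gives $X \subseteq \cl(E \setminus X)$, hence $\cl(E \setminus X) = E$ and $\rho(E \setminus X) = \rho(\cl(E \setminus X)) = k$ by Proposition \ref{prop:basic_facts}~(ix). Therefore no such $X$ can be a ``distance witness,'' which by Definition \ref{def:(n,k,d,r,delta)-matroid}~(iii) forces $d \geq \delta$.

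For part (ii), I would simply combine part (i) with Theorem \ref{th:bound}: since $d \geq \delta$ (hence $d \geq \delta \geq 1$, and in fact $\delta - 1 \geq 1 > 0$), the bound $d \leq n - k + 1 - (\kr - 1)(\delta - 1)$ gives
$$
\delta \leq n - k + 1 - \left(\kr - 1\right)(\delta - 1),
$$
and rearranging yields $k \leq n - 1 - (\kr - 1)(\delta - 1) + 1 - \delta + (\delta - 1) = n - \kr(\delta-1)$; more directly, moving terms, $k + \kr(\delta-1) \leq n - 1 + 1 - \delta + (\delta-1) = n$, which is exactly (ii). Part (iii) then follows from (ii) by a short manipulation: from $k \leq n - \kr(\delta - 1)$ and the trivial bound $\kr \geq k/r$ we get $k \leq n - \frac{k}{r}(\delta - 1)$, hence $k\left(1 + \frac{\delta - 1}{r}\right) \leq n$, i.e.\ $k \cdot \frac{r + \delta - 1}{r} \leq n$, which rearranges to $\frac{k}{n} \leq \frac{r}{r + \delta - 1}$.

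The only genuinely substantive step is part (i)---establishing $d \geq \delta$---and within it the main obstacle is the ``globalization'' argument: one must be careful that the closure operator behaves well enough that local rank-stability under deleting up to $\delta - 1$ elements of a locality set propagates to global rank-stability under deleting up to $\delta - 1$ elements of $E$. The monotonicity of closure (Proposition \ref{prop:basic_facts}~(x)) and the identity $\rho(\cl(X)) = \rho(X)$ (Proposition \ref{prop:basic_facts}~(ix)) are exactly what make this work, and since every $X$ with $|X| = \delta - 1$ is contained in the locality set of any of its elements, the intersection $X \cap S_x$ never exceeds $\delta - 1$ in size, so the hypothesis of local rank-stability always applies. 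Parts (ii) and (iii) are then routine algebra on top of Theorem \ref{th:bound}.
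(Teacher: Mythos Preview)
Your proposal is correct and follows essentially the same approach as the paper. For (i) the paper likewise shows that deleting fewer than $\delta$ elements cannot drop the global rank, arguing via the locality sets and closure (phrased there as ``every flat containing $S_x\setminus Y$ must contain $S_x$''); your version, restricting attention to $x\in X$ rather than all $x\in E$, is a minor and harmless economy. Parts (ii) and (iii) are derived identically from (i) and Theorem~\ref{th:bound}.
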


\begin{proof}
The proof is given in the appendix.
\end{proof}

In the case of codes, Proposition~\ref{proposition:bound_delta_k_rate} (i) and (iii) have natural interpretations. Indeed, (i) says that the local minimum distance is bounded from above by the global minimum distance, and (iii) says that the global code rate is bounded from above by the local code rate.

\subsection{A structure theorem for matroids achieving the generalized Singleton bound}

\begin{definition}\label{Def:perfect} We will call an $(n,k,d,r,\delta)$-matroid \emph{perfect} if it meets the generalized Singleton bound of Theorem \ref{th:bound} with equality, \emph{i.e.} if \begin{equation}\label{perfect}
d = n-k + 1 -\left ( \left \lceil \frac{k}{r} \right \rceil - 1 \right ) (\delta - 1).
\end{equation} In analogy, we will call a LRC satisfying \eqref{perfect} a \emph{perfect} LRC\footnote{We point out that, typically, codes achieving these kind of bounds have been called optimal in the literature. However, we feel that the notion \emph{optimal} should be saved for the code that is the best we can do. Thus, saying that an optimal code does not exist when the bound cannot be reached with equality feels wrong, since we can still find a code with minimum distance only slightly smaller than the bound, and this code is the best possible solution in this case and thus deserves to be called optimal. Therefore, we have opted to call the codes achieving the bound \emph{perfect}. This is say that, even though perfect codes do not exist for all parameters, optimal solutions can still be found.}. 
\end{definition}

These notions should not be confused with those of a perfect matroid design or a perfect code in classical coding theory literature.  
Theorem \ref{th:structure-optimal-matroid} gives some necessary structural properties for perfect $(n,k,r,\delta)$-matroids with $r < k$.  We will use this structure theorem to prove that for certain values of $(n,k,r,\delta)$, there are no perfect $(n,k,r,\delta)$-matroids, and consequently no perfect LRCs.
 The degenerate case when $r=k$ is easier, and is considered in Section \ref{sec:max_d_r=k}.

A collection of sets $X_1 \ldots, X_j$ is said to have a \emph{non trivial union} if 
$$
X_l \nsubseteq \bigcup_{i \in \lbrack j \rbrack \setminus \{l\}} X_i \hbox{, for } l = 1,\ldots,j.
$$

\begin{thm} \label{th:structure-optimal-matroid}
Let $M=(\rho,E)$ be an $(n,k,d,r,\delta)$-matroid with $r < k$ and
$$
d = n-k + 1 -\left ( \left \lceil \frac{k}{r} \right \rceil - 1 \right ) (\delta - 1).
$$
Let then $\{S_x : x \in E\} \subseteq \mathcal{U}(M)$ be a collection of cyclic sets for which the statements a) -- c) in Theorem \ref{theorem:(n,k,d,r,delta)-Z}~(iv) are satisfied. Then 
\begin{enumerate}[(i)]
\item $0_\mathcal{Z} = \emptyset$,
\item for each $x \in E$,
	\begin{enumerate}
	\item[$a$)] $\eta(S_x) = (\delta - 1)$,
	\item [$b$)]$S_x$ is an atom in $ \mathcal{Z}(M)$, and in particular a cyclic flat. 
	\end{enumerate}
\item For each collection $F_1, \ldots F_j$ of cyclic flats in $\{S_x : x \in E\}$ that has a non trivial union,
		 \begin{enumerate}
		 \item[$c$)]$\eta(\bigvee_{i=1}^j F_i) = 
\left \{
\begin{array}{lcl}
j(\delta - 1) & \hbox{if} & j <  \lceil \frac{k}{r} \rceil,\\
n-k \geq  \lceil \frac{k}{r} \rceil (\delta - 1) & \hbox{if} & j \geq 	 \lceil \frac{k}{r}  \rceil,
\end{array}
\right .$
		\item[$d$)] $\bigvee_{i=1}^j F_i = 
\left \{
\begin{array}{lcl}
\bigcup_{i = 1}^j F_i & \hbox{if} & j <  \lceil \frac{k}{r} \rceil,\\
E & \hbox{if} & j \geq 	 \lceil \frac{k}{r}  \rceil,
\end{array}
\right .$
	\item[$e$)] $\rho(\bigvee_{i=1}^j F_i) = 
\left \{
\begin{array}{lcl}
\lvert \bigcup_{i = 1}^j F_i \rvert - j(\delta - 1) & \hbox{if} & j <  \lceil \frac{k}{r} \rceil,\\
k & \hbox{if} & j \geq 	 \lceil \frac{k}{r}  \rceil.
\end{array}
\right .$
	\item[$f$)] $|F_j \cap (\bigcup_{i = 1}^{j-1} F_i) | \leq |F_j| - \delta$ if $j \leq  \lceil \frac{k}{r}  \rceil$.
\end{enumerate}
\end{enumerate} 
\end{thm}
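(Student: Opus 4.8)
The plan is to work inside a perfect $(n,k,d,r,\delta)$-matroid $M$ and exploit the chain of cyclic flats from Lemma~\ref{lemma:fundamental_chain} together with the submodular inequality $(Z3)$ of Theorem~\ref{th:Z-axiom}, applied to the cyclic flats $S_x$. First I would prove (i) and (ii): since $d$ meets the bound of Theorem~\ref{th:bound}, the two inequalities in the proof of Theorem~\ref{th:bound} must be equalities, so the chain $Y_0\subsetneq\dots\subsetneq Y_m$ has $m=\lceil k/r\rceil$ and $\eta(Y_j)-\eta(Y_{j-1})=\delta-1$ at each step; moreover $\max\{\eta(Z):Z\in coA_\mathcal{Z}\}=\eta(Y_{m-1})=(\lceil k/r\rceil-1)(\delta-1)$. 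Tracking equality back through the construction in Lemma~\ref{lemma:fundamental_chain} forces each $S_x$ used there to satisfy $\eta(S_x)=\delta-1$ and $\rho(S_x)=r$ (using \eqref{eq:rank_S_x}), which is $a$). For $b$), I would argue that if $S_x$ were not a cyclic flat, then $\mathrm{cl}(S_x)$ (a cyclic flat by Proposition~\ref{prop:basic_facts}(iv) applied to the cyclic set $S_x\in\mathcal{U}$) has $\eta(\mathrm{cl}(S_x))\ge\eta(S_x)+1$ while $\rho$ is unchanged, so $\mathrm{cl}(S_x)$ would be a cyclic flat properly between $\emptyset$ and $E$ with nullity $\ge\delta$; pushing this through the Lemma~\ref{lemma:fundamental_chain} chain contradicts the tightness just established. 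The same style of argument, now for $0_\mathcal{Z}$: if $0_\mathcal{Z}\ne\emptyset$, then $0_\mathcal{Z}$ is a cyclic flat of positive nullity (by $(Z2)$) sitting below every $S_x$, which when inserted into the chain again breaks the count; hence (i).

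Next, for part (iii) I would induct on $j$, the number of cyclic flats $F_1,\dots,F_j$ in the nontrivial union. The base case $j=1$ is $a$). For the inductive step with $j<\lceil k/r\rceil$, I would apply the cyclic-flat submodular law $(Z3)$ to $X=\bigvee_{i=1}^{j-1}F_i$ and $Y=F_j$. By induction $\rho(X)=\bigl|\bigcup_{i<j}F_i\bigr|-(j-1)(\delta-1)$ and $X=\bigcup_{i<j}F_i$. The meet $X\wedge F_j$ is the union of circuits contained in $\bigl(\bigcup_{i<j}F_i\bigr)\cap F_j$; the join $X\vee F_j=\mathrm{cl}\bigl(\bigcup_{i\le j}F_i\bigr)$. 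I would then combine the $(Z3)$ inequality with the lower bound on $\eta(X\vee F_j)$ coming from the Lemma~\ref{lemma:fundamental_chain} chain — specifically, one can extend $X\vee F_j$ to a maximal chain of cyclic flats respecting the $+r$ / $+(\delta-1)$ increments, and tightness of the Singleton bound forces $\eta(X\vee F_j)\le j(\delta-1)$; together with $\eta(X\vee F_j)\ge\eta(X)+\eta(F_j)-\eta(X\wedge F_j)=j(\delta-1)-\eta(X\wedge F_j)$ and $\eta\ge 0$, this squeezes $\eta(X\wedge F_j)=0$, hence $X\wedge F_j=\emptyset$ (its only circuit-free option), and $\eta(X\vee F_j)=j(\delta-1)$, giving $c$). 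From $\eta(X\wedge F_j)=0$ and the nontrivial-union hypothesis one gets $X\vee F_j=X\cup F_j=\bigcup_{i\le j}F_i$ (a cyclic flat must contain the cyclic set $\bigcup_{i\le j}F_i$, and equality of nullities via Proposition~\ref{prop:basic_facts}(i) forces no extra elements), which is $d$); then $e$) is immediate from $\rho=|\cdot|-\eta$, and $f$) follows since $|F_j\cap\bigcup_{i<j}F_i|=|F_j|-\rho(F_j)+\rho(F_j)-\bigl(\text{rank drop}\bigr)$, more simply from $\eta(\bigcup_{i\le j}F_i)=\eta(\bigcup_{i<j}F_i)+(\delta-1)$ combined with Proposition~\ref{prop:basic_facts}(ii) rearranged to bound $|F_j\cap\bigcup_{i<j}F_i|-\rho(F_j\cap\bigcup_{i<j}F_i)\ge\ldots$, forcing $\rho(F_j\cap\bigcup_{i<j}F_i)\le\rho(F_j)-(\delta-1)=r-\delta+1$, whence $|F_j\cap\bigcup_{i<j}F_i|\le|F_j|-\delta$ since that intersection, being a subset of the cyclic flat $F_j$ with small rank, is independent or nearly so.

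For the regime $j\ge\lceil k/r\rceil$, I would note that as soon as the join of $\lceil k/r\rceil$ of the $F_i$ is taken, its rank reaches $k=\rho(E)$: indeed by $e$) in the $j-1=\lceil k/r\rceil -1$ case together with one more application of $(Z3)$, the rank cannot grow by more than $r$ but also the cyclic flat of rank $k$ must be $E$ itself (the only cyclic flat of full rank containing $1_\mathcal{Z}=E$), giving $d$) and $\rho=k$ in $e$); then $\eta(E)=n-k$, and the displayed inequality $n-k\ge\lceil k/r\rceil(\delta-1)$ is exactly Proposition~\ref{proposition:bound_delta_k_rate}(ii), completing $c$).

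The main obstacle I anticipate is the bookkeeping that converts ``the Singleton bound is tight'' into rigid equalities for \emph{every} relevant cyclic flat and \emph{every} nontrivial union, not just for one distinguished chain: one must argue that any chain of cyclic flats obeying the $+r$/$+(\delta-1)$ growth from Lemma~\ref{lemma:fundamental_chain} can be routed through a prescribed cyclic flat $Z$, so that $\eta(Z)\le(\text{its position})\cdot(\delta-1)$ is forced. Making that rerouting argument precise — essentially showing the increments in Lemma~\ref{lemma:fundamental_chain} can start from any cyclic flat of rank $\le$ the required value, and that the perfectness leaves no slack — is where the real care is needed; once that rigidity lemma is in hand, parts $c$)--$f$) are a clean induction via $(Z3)$ and Proposition~\ref{prop:basic_facts}(i)--(ii).
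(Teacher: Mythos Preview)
Your overall plan matches the paper's proof closely: both exploit the freedom in the construction of Lemma~\ref{lemma:fundamental_chain} to route the chain through any chosen $S_x$ (and, for (iii), through any sequence $F_1,\dots,F_j$ with nontrivial union), then use tightness of the bound to force $\eta(Y_j)=j(\delta-1)$ exactly at every step, from which (i), (ii) and (iii) are read off. The ``rerouting'' obstacle you flag is exactly what the paper handles, and in the same way.

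Two points need correction. First, $\rho(S_x)=r$ is neither claimed in the theorem nor true in general: the statement only asserts $\eta(S_x)=\delta-1$, hence $\rho(S_x)=|S_x|-(\delta-1)\le r$ by \eqref{eq:rank_S_x}, and nothing forces $|S_x|=r+\delta-1$. You do not actually need the equality anywhere except in your sketch of $f)$, where it leads you astray. Second, your derivation of $f)$ stalls: from the squeeze you obtain $\eta(X\wedge F_j)=0$, hence $X\cap F_j$ contains no circuit and is independent, so $|X\cap F_j|\le\rho(F_j)$; this gives only $|F_j\cap\bigcup_{i<j}F_i|\le |F_j|-(\delta-1)$, off by one from the claim. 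The paper closes this gap directly: if $|X\cap F_j|\ge\rho(F_j)$, then since $M|F_j$ is uniform of rank $\rho(F_j)$ (this is exactly (ii)$b)$), one has $\rho(X\cap F_j)=\rho(F_j)$ and therefore $F_j\subseteq\mathrm{cl}(X\cap F_j)\subseteq\mathrm{cl}(X)=X$, contradicting the nontrivial-union hypothesis. Hence $|X\cap F_j|<\rho(F_j)=|F_j|-(\delta-1)$, which is $f)$.
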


\begin{proof}
The proof is given in the Appendix.
\end{proof}

By the structure theorem \ref{th:structure-optimal-matroid} above we get the following corollary.

\begin{corollary} \label{corollary:structure-optimal-matroid}
Let $M=(\rho,E)$ be an $(n,k,d,r,\delta)$-matroid with $r < k$ and
$$
d = n-k + 1 -\left ( \left \lceil \frac{k}{r} \right \rceil - 1 \right ) (\delta - 1).
$$
Then $M$ has a collection of cyclic flats $F_1,\ldots,F_m$ such that
\begin{enumerate}[(i)]
\item $\{F_i\}_{i \in [m]}$ has a non trivial union,
\item $|F_i| \leq r + \delta - 1$ for $i = 1,\ldots,m$,
\item $\eta(F_i) = \delta - 1 $,
\item $\{X \in \cflat(M) : X \subseteq F_i\} = \{\emptyset, F_i\}$ for $i = 1,\ldots,m$,
\item $\bigcup_{i \in [m]} F_i = E$,
\item statements c)--f) in Theorem \ref{th:structure-optimal-matroid} holds for every collection of flats $\{F_i\}_{i \in I}$ with $I \subseteq [m]$ and $|I| \leq \lkr$,
\item $k \leq  | \bigcup_{i \in I \setminus \{j\}} F_i | + |F_j| - \lkr(\delta - 1) -  | (\bigcup_{i \in I \setminus \{j\}} F_i) \cap F_j |$ for $I \subseteq [m]$, $|I| = \lkr$ and $j \in I$.
\end{enumerate} 
\end{corollary}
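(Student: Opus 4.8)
The plan is to deduce everything from Theorem~\ref{th:structure-optimal-matroid} applied to a suitably pruned subfamily of locality sets. First I would fix a collection $\{S_x : x\in E\}\subseteq\mathcal{U}(M)$ of the kind furnished by Theorem~\ref{theorem:(n,k,d,r,delta)-Z}~(ii), so that the hypotheses of Theorem~\ref{th:structure-optimal-matroid} are met. By part~(ii)~$b$) of that theorem each $S_x$ is an atom of $(\mathcal{Z}(M),\subseteq)$, so only finitely many distinct sets occur among the $S_x$, and since $x\in S_x$ these sets cover $E$. Choose $F_1,\dots,F_m$ to be an inclusion-minimal subfamily of them that still has $\bigcup_{i\in[m]}F_i=E$. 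Then (v) holds by construction, and (i) follows from minimality, since if some $F_l\subseteq\bigcup_{i\neq l}F_i$ then $F_l$ could be discarded. Items (ii) and (iii) are exactly the bounds $|S_x|\le r+\delta-1$ (Theorem~\ref{theorem:(n,k,d,r,delta)-Z}~(ii)~$b$) and $\eta(S_x)=\delta-1$ (Theorem~\ref{th:structure-optimal-matroid}~(ii)~$a$), and (iv) holds because each $F_i$ is an atom and $0_\mathcal{Z}=\emptyset$ by Theorem~\ref{th:structure-optimal-matroid}~(i), so $\{X\in\mathcal{Z}(M):X\subseteq F_i\}=\{\emptyset,F_i\}$.

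For (vi), I would note that non-triviality of a union is inherited by subfamilies of distinct sets, because passing to a subfamily only shrinks the union that each member must avoid; hence every $\{F_i\}_{i\in I}$ with $I\subseteq[m]$ has a non-trivial union, and Theorem~\ref{th:structure-optimal-matroid}~(iii) applies verbatim. The restriction $|I|\le\lkr$ merely places us in the first case of $c$)--$e$) and makes $f$) available.

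The real content is (vii). Fix $I\subseteq[m]$ with $|I|=q:=\lkr$ and fix $j\in I$; put $A=\bigcup_{i\in I\setminus\{j\}}F_i$. Since $|I\setminus\{j\}|=q-1<q$, statements $d$) and $e$) applied to $\{F_i\}_{i\in I\setminus\{j\}}$ give that $A$ is a cyclic flat with $\rho(A)=|A|-(q-1)(\delta-1)$, while statement $d$) applied to $\{F_i\}_{i\in I}$ (now $|I|=q\ge\lkr$) gives $\bigvee_{i\in I}F_i=E$, so $\mathrm{cl}(A\cup F_j)=E$ and thus $\rho(A\cup F_j)=k$ by Proposition~\ref{prop:basic_facts}~(ix). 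The key sublemma is that $A\cap F_j$ is independent: since $A\cap F_j\subseteq F_j$, and since $F_j$ is a cyclic flat for which, by (iv) together with Proposition~\ref{prop:basic_facts}~(vii), the restriction $M|F_j$ has lattice of cyclic flats $\{\emptyset,F_j\}$, Proposition~\ref{pro:basic-Z}~(i) applied to $M|F_j$ gives $\rho(B)=\min\{|B|,\rho(F_j)\}$ for all $B\subseteq F_j$; now $\rho(F_j)=|F_j|-(\delta-1)$ by (iii), while statement $f$) gives $|A\cap F_j|\le|F_j|-\delta<\rho(F_j)$, so $\rho(A\cap F_j)=|A\cap F_j|$. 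Submodularity (axiom (R3)) then yields
$$
k=\rho(A\cup F_j)\le\rho(A)+\rho(F_j)-\rho(A\cap F_j).
$$
Substituting $\rho(A)=|A|-(q-1)(\delta-1)$, $\rho(F_j)=|F_j|-(\delta-1)$ and $\rho(A\cap F_j)=|A\cap F_j|$ on the right gives precisely the inequality of (vii).

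I expect the only step beyond bookkeeping to be the independence of $A\cap F_j$, which hinges on identifying $M|F_j$ as a matroid with just the two cyclic flats $\emptyset$ and $F_j$ and then invoking Proposition~\ref{pro:basic-Z}~(i); everything else is a matter of tracking the two cases in Theorem~\ref{th:structure-optimal-matroid} and checking that the minimal-subcover pruning preserves the non-triviality needed to apply that theorem to subfamilies.
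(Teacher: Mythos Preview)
Your proof is correct and follows essentially the same route as the paper: prune the locality sets to a minimal cover to obtain (i)--(v), inherit non-triviality for subfamilies to get (vi), and for (vii) combine the rank formulas from Theorem~\ref{th:structure-optimal-matroid} with the key observation that $A\cap F_j$ contains no circuit. The only cosmetic difference is that the paper phrases this last step via axiom~(Z3) and the meet $A\wedge F_j=\emptyset$ in the lattice of cyclic flats, whereas you use rank submodularity~(R3) together with the equivalent statement that $A\cap F_j$ is independent; both unwind to the same inequality.
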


\begin{proof}
The statements (i)--(v) follows directly from Theorem \ref{th:structure-optimal-matroid}~(i)--(ii) and Theorem \ref{theorem:(n,k,d,r,delta)-Z}~(iv). Statement (vi) is a consequence of (i) and Theorem \ref{th:structure-optimal-matroid}~(iii), since (i) implies that $\{F_i\}_{i \in I}$ has a non trivial union. For statement (vii) we first observe by (iv), (vi) and Proposition \ref{pro:basic-Z}~(iii) that 
$$
 (\bigvee_{i \in I \setminus \{j\}} F_i) \wedge F_j =  ( \bigcup_{i \in I \setminus \{j\}} F_i ) \wedge F_j = \emptyset.
$$
Hence, by (vi) and axiom (Z3) in Theorem \ref{th:Z-axiom},
$$
\begin{array}{rcl}
k & = & \rho(\bigvee_{i \in I}F_i) \\
  &\leq& \rho(\bigvee_{i \in I \setminus \{j\}} F_i) + \rho(F_j) - \rho(\emptyset) - |(\bigvee_{i \in I \setminus \{j\}} F_i) \cap F_j|\\
  & = & | \bigcup_{i \in I \setminus \{j\}} F_i | - (\lkr - 1)(\delta - 1)+ |F_j| - (\delta - 1) -  | (\bigcup_{i \in I \setminus \{j\}} F_i) \cap F_j |.
\end{array}
$$
\end{proof}

%Note that even if the statements (i)--(vii) are satisfied in the corollary above by a subset of cyclic flats $F_1,\ldots,F_m$ of a matroid $M$, the lattice of cyclic flat $\cflat$ can still contain cyclic flats such that the parameter $d$ of the matroid does not meet the generalized Singleton bound.

We remark that structure theorems similar in spirit to the above have been given for linear $(n,k,d,r,\delta)$-LRCs  in \cite{Gopalan} and \cite{kamath14}. Namely, Theorem 2.2 in \cite{kamath14}  covers the case when $r | k$, showing that local repair sets correspond to linear $[r+\delta - 1,r,\delta]$-MDS codes and are mutually disjoint.
%Moreover, the theorem also states that for the column spaces $V_1, \ldots,V_t$ of the generator matrices to any $t$ MDS-codes that corresponds to $t$ linear repair sets, 
%$$
%\mathrm{dim}(V_t \cap (\sum_{i = 1}^{t-1} V_i)) = 0
%$$
%when $t = \kr$. 
Theorem 7 in \cite{Gopalan} proves the same in the special case $\delta = 2$.

Corollary \ref{corollary:structure-optimal-matroid}~(iv) means that the local matroid $M|{F_i}$ is uniform of rank $|F_i| - (\delta - 1)$, for $i=1,\ldots,m$. When the matroid comes from a linear code, the code in question is thus an $[|F_i|,|F_i|-(\delta-1),\delta]$-MDS code. By (vi) and (vii) in Corollary \ref{corollary:structure-optimal-matroid}, we obtain conditions on how large the intersections of union of subsets of the cyclic flats $\{F_i\}_{i \in [m]}$ can be. These results imply the corresponding results on linear LRCs. 
%Observing that flats for matroids corresponds to column spaces of a generator matrix of a linear code, we get the similar results for $\{F_i\}_{i \in {m}}$ when $r|k$ as the above described results given in \cite{kamath14}. In summery, our structure results are also valid when $r \nmid k$. In these cases we may afford some intersection between the repair sets. Further, the structure results are valid for matroids, for which the class of matroids associated to linear codes is a very small subclass.

\section{Constructions and classes of $\apart$-Matroids}

 The generalized Singleton bound theorem for matroids gives an upper bound for the value of $d$ in terms of the parameters $(n,k,r,\delta)$ for a matroid. In subsection \ref{subsec:4_constructions_matroid} we will give some constructions on $\atms$. These constructions will then be used in Subsection \ref{subsec:max_d_matroid}, where we will investigate, given different classes of the parameters $(n,k,r,\delta)$, whether or not perfect $(n,k,r,\delta)$-matroids exist.

\subsection{Combinatorial constructions of $\atms$} \label{subsec:4_constructions_matroid}

In this section we will give four increasingly specialized constructions of $\atms$. The constructions are purely combinatorial, and proceed by assigning the atomic cyclic flats, together with the rank function on the lattice of cyclic flats. In Section \ref{sec:applications}, we prove that the matroids we have constructed can be represented by linear codes.
%sentedfirst and second constructions are mainly obtained from sets, together with their intersections and ranks. The third and fourth construction are obtained from graphs. For $i = 1,2,3$, the matroids constructed in the $(i+1)$:th construction is a special class of the matroids constructed in the $i$:th construction. Of especial importance for us are construction two to four, since these will be used in Subsection \ref{subsec:max_d_matroid} and they can also be associated to linear codes, which will be proved and used in Section \ref{sec:applications}

\subsubsection{General construction of $\atms$}\label{sec:general}

%The basic building blocks from matroid theory to the two set constructions are cyclic flats. These cyclic flats are constructed from some chosen sets with chosen ranks.    

%\emph{The set construction 1:} 
%Let $F_1, \ldots, F_m$ be a collection of subsets of a finite set $E$, $F_I = \bigcup_{i \in I} F_i$ for $I \subseteq [m]$ and $\eta(F_i) = |F_i| - \rho(F_i) \hbox{ for } i \in [m]$.
%Further, let $k$ be a nonnegative integer and $\rho$  a function $\rho: \{F_i\}_{i \in [m]} \rightarrow \mathbb{Z}$ such that 
%\begin{equation} \label{eq:set_construction_1}
%\begin{array}{rl}
%(i) & \hbox{$\{F_i\}_{i\in [m]}$  has a nontrivial union 
%  with $F_{[m]}=E$,}\\
%(ii) & \hbox{$0 < \rho(F_i) < |F_i|$ for every $i \in [m]$,}\\
%(iii) & \hbox{There exists $I \subseteq [m]$ such that  $|F_I| - \sum_{i \in I} \eta(F_i) \geq k$,}\\
%(iv) & \hbox{If $F_I \in \mathcal{Z}_{<k}$ and $j \in [m] \setminus I$, then $|F_I \cap F_j| < \rho(F_j)$,}\\
%(v) & \hbox{If $F_I, F_J \in \mathcal{Z}_{<k}$ and $F_{I \cup J} \notin \mathcal{Z}_{<k}$, then $|F_{I \cup J}| - \sum_{t \in I \cup J} \eta(F_t) \geq k$,}
%\end{array}
%\end{equation}
%where 
%$$
%\mathcal{Z}_{<k} = \{F_J : J \subseteq [m] \hbox{ and } |F_I| - \sum_{i \in I} \eta(F_i) <k \hbox{ for all } I \subseteq J \}.
%$$ 

%
%A subclass of the matroids that we obtain by Theorem \ref{thm:construction} is given in the following construction.\\
 
%\emph{The set construction 2:} 
Let $F_1, \ldots, F_m$ be a collection of subsets of a finite set $E$ and define $F_I = \bigcup_{i \in I} F_i$ for $I \subseteq [m]$.
Further, let $k$ be a nonnegative integer and $\rho$  a function $\rho: \{F_i\}_{i \in [m]} \rightarrow \mathbb{Z}$ satisfying

\begin{enumerate}[(i)]
\item $0 < \rho(F_i) < |F_i|$ for $i \in [m]$,
\item $F_{[m]} = E$,
\item $k \leq |F_{[m]}| + \sum_{i \in [m]} (\rho(F_i)-|F_i|)$,
\item $I \subseteq [m], j \in [m] \setminus I \Rightarrow |F_I \cap F_j| < \rho(F_j)$.
\end{enumerate}

Define $$\mathcal{Z}_{<k}=\{F_J : |F_J| + \sum_{i \in J} (\rho(F_i)-|F_i|) <k\}$$ and $\mathcal{Z} = \mathcal{Z}_{<k} \cup \{E\}$.

Now, we extend the function $\rho$ to a function $\mathcal{Z}\to \mathbb{Z}$, by 
\begin{equation} \label{eq:Z_rho_construction}
\left \{
\begin{array}{lcl}
\rho(F_J) & = &|F_J| + \sum_{i \in J} (\rho(F_i)-|F_i|) \hbox{ for } F_J \in \mathcal{Z}_{<k},\\
\rho(E) &= & k.
\end{array} 
\right .
\end{equation}
Note that the extension of $\rho$ given in \eqref{eq:Z_rho_construction} is well defined, as by~(iii), $E$ is not in $\mathcal{Z}_{<k}$. Also note that $F_\emptyset = \emptyset$ and $\rho(F_\emptyset) = 0$.
Finally, we define $\mathcal{I} = \{X \subseteq E : |F_I \cap X| \leq \rho(F_I) \hbox{ for all } I \subseteq [m] \}$.

\begin{thm} \label{thm:construction}
Let $F_1,\ldots,F_m$ be a collection of subsets of a finite set $E$, $k$ a nonnegative integer and $\rho: \{F_i\}_{i \in [m]} \rightarrow \mathbb{Z}$ a function satisfying (i)--(iv). Then $\mathcal{Z}$ and $\rho: \mathcal{Z} \rightarrow \mathbb{Z}$, defined in \eqref{eq:Z_rho_construction}, define an $(n,k,d,r,\delta)$-matroid $M(F_1,\ldots,F_m;k;\rho)$ on $E$ for which $\mathcal{Z}$ is the collection of cyclic flats, $\rho$ is the rank function restricted to the cyclic flats, $\mathcal{I}$ is the set of independent sets, and 
\begin{enumerate}[(i)]
\item $n = |E|$,
\item $k = \rho(E)$,
\item $d = n-k+1-\max \{ \sum_{i \in I} \eta(F_i) : F_I \in \mathcal{Z}_{<k}\}$,
\item $\delta = 1+ \min_{i \in [m]} \{ |F_i|-\rho(F_i)\}$,
\item $r = \max_{i \in [m]} \{\rho(F_i)\}$.
\end{enumerate}
%For each $i \in [m]$, any subset $S \subseteq F_i$ with $|S| = \rho(F_i) + \delta -1$ is a locality set of the matroid.
\end{thm}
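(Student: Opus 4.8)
The plan is to verify that the data $(\mathcal{Z}, \rho)$ defined in~\eqref{eq:Z_rho_construction} satisfies the four axioms $(Z0)$--$(Z3)$ of Theorem~\ref{th:Z-axiom}, so that $M = M(F_1,\ldots,F_m;k;\rho)$ is a well-defined matroid with $\mathcal{Z}$ as its cyclic flats and $\rho$ as the rank restricted to them. Once this is established, computing the parameters $(n,k,d,r,\delta)$ will reduce to reading off the lattice structure, using Theorem~\ref{theorem:(n,k,d,r,delta)-Z} and the associated formulae. The main technical burden is the verification of $(Z0)$ (that $\mathcal{Z}$ is a lattice) and $(Z3)$ (submodularity with the correction term), since $(Z1)$ ($\rho(\emptyset)=0$, using $F_\emptyset=\emptyset$) and $(Z2)$ ($0 < \rho(Y)-\rho(X) < |Y|-|X|$ for $X \subsetneq Y$ in $\mathcal{Z}$) should follow from hypotheses~(i) and~(iii)--(iv) with only modest bookkeeping.

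\textbf{Lattice structure and meet/join.} First I would show that $\mathcal{Z}_{<k}$ is closed under the operation $F_I, F_J \mapsto F_{I\cup J}$ whenever $F_{I\cup J}$ still lies in $\mathcal{Z}_{<k}$, and argue that for $F_I, F_J \in \mathcal{Z}$ the join is $F_I \vee F_J = F_{I\cup J}$ if $|F_{I\cup J}| + \sum_{i\in I\cup J}(\rho(F_i)-|F_i|) < k$, and $E$ otherwise. The meet $F_I \wedge F_J$ should be $F_K$ where $K$ is the largest index subset with $F_K \subseteq F_I \cap F_J$; here one must use the ``non-crossing'' hypothesis~(iv) $|F_I \cap F_j| < \rho(F_j)$ to guarantee that a single $F_j$ not already covered cannot slip into $F_I \cap F_J$, so that the meet is well-defined and is again of the form $F_K$. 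This is the step I expect to be the main obstacle: one has to check carefully that distinct index sets can give the same union (so $\mathcal{Z}$ should really be indexed by unions, not by subsets of $[m]$), and that the candidate meet/join genuinely satisfy the universal properties; the subtlety is entirely in how the sets $F_i$ overlap, controlled by~(iv).

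\textbf{Verifying $(Z3)$.} For $X = F_I$, $Y = F_J$ in $\mathcal{Z}$, one computes $\rho(X)+\rho(Y)$, $\rho(X\vee Y)$, $\rho(X\wedge Y)$ and the term $|(X\cap Y)\setminus(X\wedge Y)|$ using~\eqref{eq:Z_rho_construction}. Writing $\rho(F_J) = |F_J| - \sum_{i\in J}\eta(F_i)$ with $\eta(F_i) = |F_i|-\rho(F_i) > 0$, the inequality becomes an inclusion–exclusion count: the union $F_{I\cup J}$ covers each element, and the ``excess'' $\sum$ of nullities over $I\cup J$ is at most the sum over $I$ plus the sum over $J$ minus the sum over the meet index set, with the leftover elements of $X\cap Y$ not in $X\wedge Y$ accounting for the slack. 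The case $X\vee Y = E$ needs hypothesis~(iii), $k \le |F_{[m]}| + \sum_{i\in[m]}(\rho(F_i)-|F_i|)$, to bound $\rho(E)=k$ from above by the right quantity. I would handle the two cases ($F_{I\cup J}\in\mathcal{Z}_{<k}$ versus $F_{I\cup J}\notin\mathcal{Z}_{<k}$) separately.

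\textbf{Reading off the parameters.} With $M$ established as a matroid whose cyclic flats are $\mathcal{Z}$: $(i)$ $n=|E|$ is immediate; $(ii)$ $k=\rho(E)$ by~\eqref{eq:Z_rho_construction} and~(iii); for $(iii)$ the formula $d = n-k+1-\max\{\eta(Z):Z\in\coatom\}$ from Theorem~\ref{theorem:(n,k,d,r,delta)-Z}~(i) applies, and one identifies the coatoms of $(\mathcal{Z},\subseteq)$ among the $F_J \in \mathcal{Z}_{<k}$, noting $\eta(F_J) = |F_J|-\rho(F_J) = \sum_{i\in J}\eta(F_i) = \sum_{i\in J}(|F_i|-\rho(F_i))$, which gives the stated maximum over $F_I\in\mathcal{Z}_{<k}$ (any flat below a coatom has smaller nullity, so the max over coatoms equals the max over all of $\mathcal{Z}_{<k}$). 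For $(iv)$ and $(v)$, one checks that each $F_i$ is a cyclic flat (it lies in $\mathcal{Z}$, being in $\mathcal{Z}_{<k}$ by~(i) which forces $\rho(F_i)<|F_i|$, unless $F_i=E$ which is also in $\mathcal{Z}$) serving as a locality set $S_x$ for every $x\in F_i$ with $\eta(F_i)=|F_i|-\rho(F_i)\ge \delta-1$ by the definition of $\delta$, and that $\rho(S_x)\le r = \max_i\rho(F_i)$; covering of $E$ by the $F_i$ (hypothesis~(ii)) guarantees every element has such a locality set, and minimality over the $F_i$ yields exactly $\delta = 1+\min_i(|F_i|-\rho(F_i))$ and $r=\max_i\rho(F_i)$.
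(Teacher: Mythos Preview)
Your overall strategy---verify $(Z0)$--$(Z3)$ from Theorem~\ref{th:Z-axiom}, then extract the parameters via Theorem~\ref{theorem:(n,k,d,r,delta)-Z}---is exactly what the paper does, and your treatment of $(Z2)$, $(Z3)$, and the formula for $d$ matches the paper's computations. Two points deserve correction.

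\textbf{The meet is simpler than you fear.} Hypothesis~(iv), together with~(i), gives $|F_I\cap F_j|<\rho(F_j)<|F_j|$ for $j\notin I$, so $F_j\not\subseteq F_I$. Hence the collection $\{F_i\}$ has non-trivial union, and in fact $F_I\subseteq F_J$ if and only if $I\subseteq J$. Distinct index sets therefore give distinct unions, and the meet is simply $F_I\wedge F_J=F_{I\cap J}$ (which lies in $\mathcal{Z}_{<k}$ since $\rho$ is increasing along index inclusions, as the $(Z2)$ computation shows). There is no subtlety here; what you flag as ``the main obstacle'' dissolves once you observe the non-trivial-union consequence of~(iv).

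\textbf{There is a genuine gap in your locality-set argument.} You propose using $F_i$ itself as the locality set $S_x$. But Definition~\ref{def:(n,k,d,r,delta)-matroid}~(iv)(b) requires $|S_x|\le r+\delta-1$, and since $r=\max_j\rho(F_j)$ while $\delta-1=\min_j\eta(F_j)$, an $F_i$ with $\eta(F_i)>\delta-1$ may have $|F_i|=\rho(F_i)+\eta(F_i)>r+\delta-1$. The paper fixes this by noting that $M|F_i$ is uniform of rank $\rho(F_i)$ (since $\mathcal{Z}(M|F_i)=\{\emptyset,F_i\}$), so any subset $S\subseteq F_i$ of size exactly $\rho(F_i)+\delta-1\le r+\delta-1$ is cyclic with $d(M|S)=\delta$; that $S$ is the correct locality set. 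You also omit the verification that $\mathcal{I}$ as defined before the theorem agrees with $\mathcal{I}(M)$; the paper handles this via Proposition~\ref{pro:basic-Z}~(ii) at the end.
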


\begin{proof}
The proof is given in the Appendix.
\end{proof}

\begin{exam} \label{exam:graph_construction_1} 
Let $F_1, F_2, F_3$ be disjoint sets of cardinality $4$, with $\rho(F_1) = \rho(F_2) = 3$ and $\rho(F_3) = 2$. Moreover, let $(k,r,\delta) = (7,3,3)$. By Theorem \ref{thm:construction}, this corresponds to a matroid of size $14$ and minimum distance $4$.
\end{exam}

\subsubsection{Specialized construction of $\atms$}\label{sec:special}

To construct $\atms$ with large~$d$ in Section \ref{subsec:max_d_matroid}, we will use a special case of the construction in~\ref{thm:construction}. We represent the atomic cyclic flats $F_i$ by nodes in a graph, with labelled edges representing the intersections between the flats. The construction of a lattice of cyclic flats from a weighted graph can be made much more general by assigning weights to the nodes, representing the size and rank of the corresponding flats. However, in this section we specialize all parameters to obtain matroids that achieve the Singleton bound.    

Let $G$ be a graph with vertices $[m]$ and edges $W$, and let $\gamma:W\to\mathbb{Z}_{\geq 1}$ be a positive integer-valued function on the edge set. Moreover, let $(k,r,\delta)$ be three integers with $0 < r < k$ and $\delta \geq 2$,  such that 

\begin{enumerate}[(i)]\label{eq:graph_1}
\item $G$ has no triangles,
%\item $0 \leq \alpha(i) \leq r-1$ for $i \in [m]$,
\item $k \leq rm - \sum_{w \in W} \gamma(w)$,
\item $r > \sum_j \gamma(\{i,j\})$ for every $i \in [m]$.
\end{enumerate}

From the graph $G$ we construct the sets $F_1,\ldots,F_m$ and the rank function $\rho$ by first assigning the following: 
\begin{enumerate}[(i)]
\setcounter{enumi}{3}
\item $\rho(F_i) = r$ for $i \in [m]$,
\item $|F_i| = r + \delta - 1$ for $i \in [m]$, 
\item $|F_i \cap F_j| = \gamma(\{i,j\})$ for $\{i,j\} \in W$.
\end{enumerate} 

Note that (v)--(vii) uniquely defines the sets $F_1,\ldots,F_m$ and their ranks, up to isomorphism. This can be seen by induction over $m$, observing that (iv) guarantees that the intersections $F_i\cap F_j$ can be chosen to be disjoint for different $j$. This is required, as there is no 3-cycle in the graph $G$, so
$$
|F_h \cap F_i \cap F_j | = 0 \hbox{ for all three distinct elements } h,i,j \in [m].
$$
Also note that, while $n$ is not a parameter of the graph construction, it is a function of the parameters, as we have $$n=|\cup_i F_i| = m(r+\delta-1)-\sum_{w\in W} \gamma(w).$$

\begin{thm} \label{theorem:graph_1}
Let $F_1,\ldots,F_m$ and $\rho:\{F_i\}\to\mathbb{Z}$ be constructed from a weighted graph $(G,\gamma)$ with parameters $(k,r,\delta)$ according to (i)--(vi). Then $(\{F_i\},\rho)$ satisfies (i)--(iv) in~\ref{sec:general}.
In particular, $\{F_i\}$ are the atomic cyclic flats of an $(n,k,d,r,\delta)$-matroid with
\begin{enumerate}[(i)]
\item $n = (r + \delta - 1)m - \sum_{w \in W} \gamma(w)$,
\item $d = n - k + 1 - (\delta-1)\max \{|I| : r |I| - \sum_{w\in W\cap I\times I} \gamma(w) < k\}$.
\end{enumerate}
\end{thm}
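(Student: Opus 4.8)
The plan is to verify that the data $(\{F_i\}_{i\in[m]}, k, \rho)$ coming from the weighted graph satisfies conditions (i)--(iv) of Subsection \ref{sec:general}, so that Theorem \ref{thm:construction} applies directly. Once that is done, parts (i)--(v) of Theorem \ref{thm:construction} immediately give the parameters $(n,k,d,r,\delta)$ of the resulting matroid, and it only remains to rewrite the resulting expressions for $n$ and $d$ in the graph-theoretic form asserted in the statement.

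First I would establish the ``in particular'' remark that (v)--(vii) consistently define $F_1,\dots,F_m$ up to isomorphism: by condition (iii), for each fixed $i$ we have $\sum_{j}\gamma(\{i,j\}) < r = \rho(F_i) < r+\delta-1 = |F_i|$, so inside $F_i$ there is enough room to place pairwise disjoint subsets of sizes $\gamma(\{i,j\})$ for the neighbours $j$ of $i$; since $G$ is triangle-free, no element of $E$ lies in three of the $F_i$'s, so these pairwise intersections never conflict, and an induction on $m$ builds the $F_i$ explicitly. This also yields $n = |F_{[m]}| = \sum_i |F_i| - \sum_{\{i,j\}\in W}|F_i\cap F_j| = m(r+\delta-1) - \sum_{w\in W}\gamma(w)$ by inclusion--exclusion (the higher terms vanish by triangle-freeness), which is part (i).

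Next I would check (i)--(iv) of \ref{sec:general}. Condition (i), $0<\rho(F_i)<|F_i|$, is just $0<r<r+\delta-1$, true since $r>0$ and $\delta\geq2$. Condition (ii), $F_{[m]}=E$, holds by definition of $E$. Condition (iii), $k\leq |F_{[m]}| + \sum_i(\rho(F_i)-|F_i|) = n - m(\delta-1) = rm - \sum_{w\in W}\gamma(w)$, is exactly hypothesis (ii) of the graph construction. Condition (iv) is the key point: for $I\subseteq[m]$ and $j\notin I$, I must show $|F_I\cap F_j|<\rho(F_j)=r$. Since $G$ is triangle-free, the sets $F_i\cap F_j$ for $i\in I$ (equivalently for the neighbours $i\in I$ of $j$) are pairwise disjoint, hence $|F_I\cap F_j| = \sum_{i\in I}|F_i\cap F_j| = \sum_{i\in I}\gamma(\{i,j\}) \leq \sum_{i}\gamma(\{i,j\}) < r$ by hypothesis (iii). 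This is the step I expect to be the main obstacle, in the sense that it is where all three graph hypotheses (triangle-freeness, the bound (iii), and implicitly the disjointness arguments) are genuinely used; everything else is bookkeeping.

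Finally, with (i)--(iv) verified, Theorem \ref{thm:construction} gives an $(n,k,d,r,\delta)$-matroid with $\delta = 1+\min_i\{|F_i|-\rho(F_i)\} = 1+(r+\delta-1-r)=\delta$ and $r=\max_i\rho(F_i)=r$ as expected, and
$$
d = n-k+1-\max\Bigl\{\textstyle\sum_{i\in I}\eta(F_i) : F_I\in\mathcal{Z}_{<k}\Bigr\}.
$$
Here $\eta(F_i)=|F_i|-\rho(F_i)=\delta-1$ for every $i$, so $\sum_{i\in I}\eta(F_i)=(\delta-1)|I|$, and $F_I\in\mathcal{Z}_{<k}$ means $|F_I|+\sum_{i\in I}(\rho(F_i)-|F_i|) = |F_I|-(\delta-1)|I| < k$. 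Using inclusion--exclusion and triangle-freeness once more, $|F_I| = |I|(r+\delta-1) - \sum_{w\in W\cap(I\times I)}\gamma(w)$, so the condition $F_I\in\mathcal{Z}_{<k}$ becomes $r|I| - \sum_{w\in W\cap I\times I}\gamma(w) < k$. Substituting, $d = n-k+1-(\delta-1)\max\{|I| : r|I|-\sum_{w\in W\cap I\times I}\gamma(w)<k\}$, which is part (ii). This completes the proof.
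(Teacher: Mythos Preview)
Your proposal is correct and follows essentially the same approach as the paper: verify hypotheses (i)--(iv) of the general construction in Section~\ref{sec:general}, apply Theorem~\ref{thm:construction}, and then translate the resulting expressions for $n$ and $d$ into graph-theoretic form using inclusion--exclusion together with the vanishing of triple intersections from triangle-freeness. Your write-up is in fact tidier than the paper's, which carries some stray notation ($\alpha(i)$, $\beta(i)$) from a more general variant; your argument for condition~(iv), exploiting the disjointness of the pairwise intersections $F_i\cap F_j$ for distinct $i$ to bound $|F_I\cap F_j|$ by $\sum_i\gamma(\{i,j\})<r$, is exactly the intended one.
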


\begin{proof}
The proof is given in the Appendix.
\end{proof}

%Example \ref{exam:graph_construction_1} and \ref{exam:graph_construction_1b} below gives two examples of $\atms$ that can be obtained from the graph construction given in Theorem \ref{theorem:graph_1}.
%

%all other values of $\alpha$ and $\beta$ constantly zero.

%By Theorem \ref{thm:construction}, this corresponds to a $(14,7,4,3,3)$-matroid $M(F_1,F_2,F_3;k;\rho)$, where $F_1$, $F_2$ and $F_3$ are pairwise disjoint sets with 
%$|F_1|=|F_2| = 5$, $|F_3|=4$,
%$\rho(F_1) = \rho(F_2) = 3$ and $\rho(F_3) = 2$.
%\end{exam}

%
%\begin{exam} \label{exam:graph_construction_1b} 
%Let $G(\alpha,\beta,\gamma;10,5,3)$ denote the graph with vertex set $m=[4]$ and edge set $W = \{\{1,2\}, \{2,3\}, \{3,4\}, \{4,1\}\}$, $(k,r,\delta) = (10,5,3)$, $(\alpha(1), \alpha(2), \alpha(3), \alpha(4)) = (1,0,1,0)$, $(\beta(1), \beta(2), \beta(3), \beta(4)) = (0,1,0,2)$ and $(\gamma(\{1,2\}), \gamma(\{2,3\}), \gamma(\{3,4\}), \gamma(\{4,1\})) = (1,2,1,1)$. By the statement (i)--(xi) above, this gives 
%$$
%\begin{array}{l}
%(\rho(F_1),\rho(F_2),\rho(F_3),\rho(F_4)) = (4,5,4,5),\\
%(|F_1|,|F_2|,|F_3|,|F_4|) = (6,8,6,9),\\
%F_1 = \{1,\ldots,6\} \com F_2 = \{1,7, \ldots,13\} \com F_3 = \{7,8,14,\ldots,17\} \com F_4 = \{2,14,18,\ldots,24\}.
%\end{array}
%$$

%By Theorem \ref{theorem:graph_1} 
%$$
%n = (5+3-1)4 - 2 + 3 - 5 = 24
%$$
%and as $V_{< k } = \{X \subseteq [4] : |X| \leq 2 \hand X \neq \{2,4\}\}$ it follows that 
%$$
%d = 24 - 10 + 1 - (2\cdot 2 + 2) = 9. 
%$$
%\end{exam}
%
  
Now, in addition, we assume that $G$ has girth at least $\max \{4, \kr+1\}$, and that the weight function $\gamma$ does not take too large values. Then we get the following theorem, on the existence of perfect $(n,k,r,\delta)$-matroids.
%\begin{equation} \label{eq:graph_2}
%\begin{array}{rl}
%(i) & \hbox{$G$ is a graph with no $l$-cycles, for $l \leq \max \{3, \lceil \frac{k}{r} \rceil \}$,}\\
%(ii) & \hbox{$\gamma(w) \geq 1$ for $w \in W$,}\\
%(iii) & \hbox{$\sum_{w \in W} \gamma(w) = b$,}\\
%(iv) & \hbox{$k \leq rm - b$,}\\
%(v) & \hbox{For $I \subseteq [m]$ with $|I| = \lceil \frac{k}{r} \rceil$, $\sum_{w\in W\cap I\times I} \gamma(w) \leq a$,}\\
%(vi) & \hbox{$r > \sum_{j} \gamma(\{i,j\})$ for $i \in [m]$.}
%\end{array}
%\end{equation} 

%From the graph $G$ we construct the sets $F_1,\ldots,F_m$ and the rank function by first assigning $\alpha(i) = \beta(i) = 0$ for $i = 1,\ldots,m$ in \eqref{eq:M_from_graph_1} (i.e. $\rho(F_i) = r$ and $|F_i| = r+\delta-1$) and then use \eqref{eq:construct_F_graph_1} to recursively construct the sets $F_1,\ldots,F_m$.

\begin{corollary} \label{theorem:graph_2}
Let $(G,\gamma)$ be a weighted graph, and let $(k,r,\delta)$ be integers such that (i)--(iii) is satisfied. Let $b=\sum_{w \in W} \gamma(w)$, and $a=\kr r - k$. Assume moreover that $G$ has no $l$-cycles, for $l \leq \lceil \frac{k}{r} \rceil$, and that $\sum_{w\in W\cap I\times I} \gamma(w) \leq a$ for every $I \subseteq [m]$ with $|I| = \lceil \frac{k}{r} \rceil$.

Then there exists a $(n,k,d,r,\delta)$-matroid with 
\begin{enumerate}[(i)]
\item $n = (r + \delta - 1)m - b$,
\item $d = n - k +1 - (\lceil \frac{k}{r} \rceil - 1)(\delta - 1)$.
\end{enumerate}
\end{corollary}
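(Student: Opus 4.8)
The plan is to derive Corollary~\ref{theorem:graph_2} as a special case of Theorem~\ref{theorem:graph_1}, showing that the extra hypotheses on girth and on the weights $\gamma$ force the combinatorial expression for $d$ in Theorem~\ref{theorem:graph_1}~(ii) to collapse to the generalized Singleton value, so that the resulting matroid is perfect. First I would check that the hypotheses (i)--(iii) of~\ref{sec:special} hold, which they do by assumption, so Theorem~\ref{theorem:graph_1} applies and produces an $(n,k,d,r,\delta)$-matroid with $n=(r+\delta-1)m-b$ (giving (i) immediately) and with
$$
d = n-k+1-(\delta-1)\max\{|I| : r|I| - \textstyle\sum_{w\in W\cap I\times I}\gamma(w) < k\}.
$$
So everything reduces to proving that this maximum equals $\lceil \frac{k}{r}\rceil - 1$.

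For the lower bound on the maximum, I would exhibit a set $I$ of size $\lceil\frac{k}{r}\rceil - 1$ satisfying the constraint: for such $I$ we have $r|I| = r(\kr - 1) \le k - 1 < k$, and since $\gamma$ is positive the subtracted sum only decreases the left side, so the constraint $r|I| - \sum_{w\in W\cap I\times I}\gamma(w)<k$ holds for every $I$ of that size. Hence the maximum is at least $\kr - 1$. For the upper bound, I would take any $I$ with $|I| \ge \kr$ and show it violates the constraint, i.e.\ that $r|I| - \sum_{w\in W\cap I\times I}\gamma(w) \ge k$. It suffices to treat $|I| = \kr$ (enlarging $I$ only increases $r|I|$, and any new edges subtract at most... — here one must be slightly careful, so I would argue directly for $|I|=\kr$ and then note monotonicity after bounding the edge sum uniformly). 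For $|I| = \kr$, the hypothesis that $G$ has no $l$-cycles for $l \le \kr$ means the induced subgraph $G[I]$ is a forest, hence has at most $|I|-1 = \kr - 1$ edges; combined with positivity and integrality of $\gamma$ this already gives some control, but the sharp input is the explicit assumption $\sum_{w\in W\cap I\times I}\gamma(w) \le a = \kr r - k$. Therefore
$$
r|I| - \sum_{w\in W\cap I\times I}\gamma(w) \ge \kr r - a = \kr r - (\kr r - k) = k,
$$
so the constraint fails, and no $I$ with $|I|\ge\kr$ contributes. Thus the maximum is exactly $\kr - 1$, and substituting gives $d = n-k+1-(\kr-1)(\delta-1)$, which is (ii).

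I would then remark that this $d$ meets the bound of Theorem~\ref{th:bound} with equality, so the matroid is perfect in the sense of Definition~\ref{Def:perfect}; the acyclicity hypothesis for $l\le\kr$ also subsumes the triangle-freeness needed in~\ref{sec:special} whenever $\kr\ge 3$, and the case $\kr\le 2$ is covered by the explicit "no $l$-cycles for $l\le\kr$" together with the standing girth-$\ge 4$ assumption carried over from~\ref{sec:special}. The main obstacle I anticipate is the upper-bound direction: one must make sure that enlarging $I$ beyond $\kr$ elements cannot somehow re-satisfy the constraint through a large jump in the subtracted edge-weight sum. The clean way around this is to observe that for $|I| > \kr$ we can pass to a subset $I' \subseteq I$ of size exactly $\kr$; then $r|I'| - \sum_{w\in W\cap I'\times I'}\gamma(w) \ge k$ by the argument above, and since $r(|I|-|I'|) \ge \sum_{w\in W\cap (I\times I)\setminus(I'\times I')}\gamma(w)$ would need justification — instead, it is simpler to note that the quantity $r|I| - \sum_{w\in W\cap I\times I}\gamma(w)$ is exactly $\rho(F_I)$-minus-slack type data and is nondecreasing in $I$ by axiom (Z2)/monotonicity of rank, which is the honest reason the maximum is attained at the smallest infeasible size. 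I would phrase the final write-up around that monotonicity observation to keep the argument watertight.
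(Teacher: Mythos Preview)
Your proposal is correct and follows essentially the same approach as the paper: apply Theorem~\ref{theorem:graph_1} and show that the maximum in its formula for $d$ equals $\lceil k/r\rceil - 1$, by checking that every $I$ of size $\lceil k/r\rceil - 1$ satisfies the constraint while every $I$ of size $\lceil k/r\rceil$ violates it via the hypothesis $\sum_{w\in W\cap I\times I}\gamma(w)\le a$. The paper's proof is actually terser than yours and does not explicitly treat $|I|>\lceil k/r\rceil$; your observation that the quantity $r|I|-\sum_{w\in W\cap I\times I}\gamma(w)$ is monotone in $I$ (which follows directly from assumption~(iii), since adding a vertex $j$ increases it by $r-\sum_i\gamma(\{i,j\})>0$) is the clean way to close that gap, and is preferable to invoking~(Z2) indirectly.
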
 

\begin{proof}
We need to prove that $$\kr - 1\leq \max \left\{|I| : r |I| - \sum_{w\in W\cap I\times I} \gamma(w) < k=\kr r - a\right\}.$$ If $|I| = \kr -1$, then $$r |I| - \sum_{w\in W\cap I\times I} \gamma(w)\leq r(\kr-1)<k.$$ If, on the other hand, $|I| = \kr$, then $$r |I| - \sum_{w\in W\cap I\times I} \gamma(w)= r\kr - \sum_{w\in W\cap I\times I}\gamma(w)\geq r\kr-a =k,$$ by assumption. Thus, the corollary follows from Theorem~\ref{theorem:graph_1}.
\end{proof}

\begin{corollary} \label{corollary:graph_2}
Let $(G,\gamma)$ be a weighted graph, and let $(k,r,\delta)$ be integers such that (i)--(iii) is satisfied. Let $b=\sum_{w \in W} \gamma(w)$, and $a=\lceil \frac{k}{r} \rceil r - k$. Assume moreover that $G$ has no $l$-cycles, for $l \leq \lceil \frac{k}{r} \rceil$, and that $1\leq \gamma(w) \leq \left \lfloor \frac{a}{\kr - 1} \right \rfloor$ for every $w\in W$. 
Then there is an $(n,k,d,r,\delta)$-matroid with 
\begin{enumerate}[(i)]
\item $n = (r + \delta - 1)m - b$,
\item $d = n - k +1 - (\lceil \frac{k}{r} \rceil - 1)(\delta - 1)$.
\end{enumerate}

\end{corollary}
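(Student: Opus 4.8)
The plan is to deduce Corollary~\ref{corollary:graph_2} directly from Corollary~\ref{theorem:graph_2}. Both corollaries share the same hypotheses (i)--(iii) on $(G,\gamma)$ and $(k,r,\delta)$, the same girth condition (no $l$-cycles for $l \leq \lceil \frac{k}{r}\rceil$), and produce the same conclusion on the existence of an $(n,k,d,r,\delta)$-matroid with $n = (r+\delta-1)m - b$ and $d = n-k+1-(\lceil\frac{k}{r}\rceil-1)(\delta-1)$. The only difference is the bound imposed on the weight function: Corollary~\ref{theorem:graph_2} requires $\sum_{w\in W\cap I\times I}\gamma(w) \leq a$ for every $I \subseteq [m]$ with $|I| = \lceil\frac{k}{r}\rceil$, whereas Corollary~\ref{corollary:graph_2} imposes the simpler pointwise condition $1 \leq \gamma(w) \leq \lfloor \frac{a}{\kr-1}\rfloor$ for every edge $w \in W$. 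So it suffices to show that the pointwise bound implies the set-wise bound.

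The key observation is a counting argument on edges inside a vertex subset $I$ with $|I| = \lceil\frac{k}{r}\rceil$. Because $G$ has girth strictly greater than $\lceil\frac{k}{r}\rceil$ (no $l$-cycles for $l \leq \lceil\frac{k}{r}\rceil$), the induced subgraph on the $|I| = \lceil\frac{k}{r}\rceil$ vertices of $I$ contains no cycle at all, hence is a forest. A forest on $|I|$ vertices has at most $|I| - 1 = \lceil\frac{k}{r}\rceil - 1 = \kr - 1$ edges. Therefore the number of edges $w \in W \cap I\times I$ is at most $\kr - 1$, and each such edge contributes at most $\lfloor\frac{a}{\kr-1}\rfloor$ to the sum. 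Consequently
$$
\sum_{w\in W\cap I\times I}\gamma(w) \;\leq\; (\kr - 1)\left\lfloor\frac{a}{\kr-1}\right\rfloor \;\leq\; a,
$$
which is exactly the hypothesis needed to invoke Corollary~\ref{theorem:graph_2}. Note that the case $\kr = 1$ is vacuous here (if $r \geq k$ we are in the regime $r < k$ is violated, but in any case $|I| = 1$ admits no edges so the sum is $0 \leq a$); and the condition $\gamma(w) \geq 1$ is already part of hypothesis~(iii)-compatible setup and is used implicitly so that $\gamma$ is a legitimate positive integer-valued weight function.

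The main (and essentially only) obstacle is making the forest bound airtight: one must be careful that the girth condition as stated ("no $l$-cycles for $l \leq \lceil\frac{k}{r}\rceil$") genuinely forbids \emph{all} cycles within a $\lceil\frac{k}{r}\rceil$-vertex subset, since any cycle in the induced subgraph on $|I|$ vertices has length between $3$ and $|I| = \lceil\frac{k}{r}\rceil$, and all such lengths are excluded. (The "no triangles" part of hypothesis~(i) is subsumed when $\lceil\frac{k}{r}\rceil \geq 3$; when $\lceil\frac{k}{r}\rceil \in \{1,2\}$ the induced subgraph trivially has at most $\kr - 1$ edges anyway.) Once this is established, the proof is a two-line computation followed by citing Corollary~\ref{theorem:graph_2}, so I would write it up in roughly four or five sentences.
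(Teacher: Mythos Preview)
Your proposal is correct and follows essentially the same route as the paper's proof: both reduce to Corollary~\ref{theorem:graph_2} by observing that the girth condition forces the induced subgraph on any $\lceil k/r\rceil$-vertex set $I$ to be a forest, so $|W\cap I\times I|\leq \lceil k/r\rceil-1$, whence $\sum_{w\in W\cap I\times I}\gamma(w)\leq(\lceil k/r\rceil-1)\lfloor a/(\lceil k/r\rceil-1)\rfloor\leq a$. Your write-up is a bit more explicit about why the induced subgraph is acyclic, but the argument is the same.
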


\begin{proof}
Since $G$ has no $l$-cycles for $l \leq \kr$, we have for every $I \subseteq [m]$ with $|I| = \kr$ that $|W\cap I\times I|\leq \kr - 1$. Since $\gamma(w) \leq \left \lfloor \frac{a}{\kr - 1} \right \rfloor$, we then get
$$
\sum_{w\in W\cap I\times I} \gamma(w) \leq \left \lfloor \frac{a}{\kr - 1} \right \rfloor \left (\lkr - 1 \right ) \leq a, 
$$
so Theorem~\ref{theorem:graph_2} applies. \end{proof}

We remark that in order to find as small $n$ as possible for a chosen $(k,r,\delta,a,b)$ in Corollary \ref{corollary:graph_2}, we want to find a good graph with as few nodes as possible. To find such a graph, preferable properties for the graph are: many small cycles of length $\max \{4, \kr + 1\}$, large values of $\gamma$ on every edge, \emph{i.e.} $\gamma(w) = \left \lfloor \frac{a}{\kr - 1} \right \rfloor$ for $w \in W$, and that the sum of $\gamma$-values incident to each node is large, \emph{i.e.} $\sum_{j} \gamma(\{ i,j\}) = r-1$ for all nodes $i \in [m]$). 

\begin{exam} \label{exam:graph_construction_2a}
Let $G$ denote the graph below on the vertex set $[6]$, where the values of $\gamma$ are written above the edges in the graph, and $(k,r,\delta)=(14,4,2)$. We get $b=\sum\gamma(w)=3$ and $a=r\kr -k=2$
\begin{figure}[htb]
    \centering
    \includegraphics[width=0.36\textwidth]{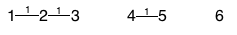}
    \caption{The graph $G(\gamma;14,4,2,2,3)$}
    \label{fig:smallgraph}
\end{figure}

 By Corollary \ref{corollary:graph_2}, %with $\rho(F_i) = r = 4$ and $|F_i|= r+\delta - 1 = 5$ for $i=1,\ldots,6$ in \eqref{eq:construct_F_graph_1}, 
this graph corresponds to a $(27,14,11,4,2)$-matroid on the ground set $[27]$, with six atomic cyclic flats $F_1,\ldots F_6$, where
$$
\begin{array}{l}
F_1 = \{1,\ldots,5\} \hbox{, } F_2 = \{1,6,\ldots,9\} \hbox{, } F_3 = \{6,10,\ldots,13\} \hbox{, }\\
F_4 = \{14,\ldots,18\} \hbox{, } F_4 = \{14,19,\ldots,22\} \hbox{ and } F_5 = \{23,\ldots,27\}.
\end{array}
$$  
\end{exam}

\begin{exam} \label{exam:graph_construction_2b}
Let $G = G(\gamma;k,r,\delta,a,b)$ denote the graph below on the vertex set $[11]$. The $\gamma$-values for the edges are written in the graph and $(k,r,\delta, a,b) = (19,9,5,8,21)$.

\begin{figure}[!htb]
    \centering
    \includegraphics[width=0.35\textwidth]{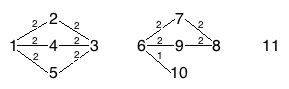}
    \caption{The graph $G(\gamma;19,9,5,8,21)$}
    \label{fig:largegraph}
\end{figure}

By Corollary \ref{corollary:graph_2}, this graph corresponds to a $(122,19,96,9,5)$-matroid, whose lattice of cyclic flats has $11$ atoms. \end{exam}

\subsection{The maximal $d$ for $(n,k,r,\delta)$-matroids} \label{subsec:max_d_matroid}

We know by Theorem \ref{th:bound}, that the inequality $$d \leq n-k- (\left \lceil \frac{k}{r} \right \rceil - 1)(\delta - 1)$$ holds for any $(n,k,d,r,\delta)$-matroid. It is then very natural to ask what is the maximal value of $d$, for which there exists an $(n,k,r,\delta)$-matroid, for given $(n,k,r,\delta)$ with $0 <r\leq k \leq n- (\delta-1) \lceil \frac{k}{r} \rceil$ and $\delta \geq 2$. We will denote this maximal value $\dm=\dm(n,k,r,\delta)$.  The case $r=k$ is degenerate, and we will consider this first. The case when $r<k$ will be further divided into four subcases in Theorem \ref{thm:max_d}. Theorem \ref{thm:max_d} will later translate into results for linear LRCs in Theorem \ref{thm:non_existence_LRC} and Theorem \ref{thm:max_d_LRC}.

\subsubsection{The maximal value of $d$ when $r=k$}  
\label{sec:max_d_r=k} 

A well known class of matroids is the class of \emph{uniform matroids} \cite{oxley92},  defined as $U_n^k = (\rho,E)$, where
\begin{equation} \label{eq:def_uniform_matroid}
|E| = n \hand \rho(X) = \min\{|X| \com k\}.
\end{equation} 
This implies that the cyclic sets of $U_n^k$ is
$$
\mathcal{U}(U_n^k) = \{\emptyset\} \cup \{X \subseteq E : \lvert X \rvert \geq k + 1\},
$$ 
and that the cyclic flats are
$$
\mathcal{Z} = \{0_{\mathcal{Z}}, 1_{\mathcal{Z}}\}, \hbox{ with } 0_{\mathcal{Z}} = \emptyset \hbox{, } 1_{\mathcal{Z}} = E \hbox{, } \rho(0_{\mathcal{Z}}) = 0 \hbox{ and } \rho(1_{\mathcal{Z}}) = k.
$$

If $k=r$, the generalized Singleton bound given in Theorem \ref{th:bound} reduces to the classical Singleton bound, $d = n-k+1$. Then using Theorem \ref{theorem:(n,k,d,r,delta)-Z} (iii), we get that $\mathcal{Z} = \{\emptyset, E\}$, so $M$ is the uniform matroid $U_n^k$. For $(r,\delta)$-locality, let $S_x = U_n^k$ for each $x \in E$ and $\delta = d = n-k+1$. Then $\lvert S_x \rvert = r + (\delta-1)$ and $d(S_x) = \delta$. Consequently, $U_n^k$ is a matroid with parameters $(n,k,d,r,\delta) = (n,k,n-k+1,r,n-k+1)$.

\subsubsection{The maximal value of $d$ when $r<k$} \label{sec:r<k}

As the first result of this section, we prove that $$n-k- \left \lceil \frac{k}{r} \right \rceil(\delta - 1)\leq \dm(n,k,r,\delta) \leq n-k- (\left \lceil \frac{k}{r} \right \rceil - 1)(\delta - 1),$$ where the second inequality is Theorem~\ref{th:bound} revisited.  We will then use the graph constructions given in Theorem \ref{theorem:graph_1} and Theorem \ref{theorem:graph_2}, in order to construct matroids with larger $d$. In the cases when $\dm < n-k- (\left \lceil \frac{k}{r} \right \rceil - 1)(\delta - 1)$, we will use Theorem \ref{th:structure-optimal-matroid} to prove this.

\begin{thm}\label{thm:almosttight}
For any $(n,k,r,\delta)$ satisfying $1\leq r< k\leq n-\kr(\delta-1)$ and $2\leq \delta< n$, there exists a $(n,k,d,r,\delta)$-matroid, where $$d=n-k- \left \lceil \frac{k}{r} \right \rceil(\delta - 1).$$\end{thm}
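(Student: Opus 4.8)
The plan is to realize the required matroid through the construction of Theorem~\ref{thm:construction}. By that theorem it suffices to produce a ground set $E$ with $|E|=n$, subsets $F_1,\dots,F_m\subseteq E$, and a rank assignment on them satisfying (i)--(iv) of \S\ref{sec:general}, whose output invariants are $\max_i\rho(F_i)=r$, $\min_i(|F_i|-\rho(F_i))=\delta-1$, and $d=n-k+1-\Delta$ with $\Delta:=\max\{\sum_{i\in I}\eta(F_i):F_I\in\mathcal Z_{<k}\}$; so everything reduces to building such a block system with $\Delta=\kr(\delta-1)+1$. I note first that, since any $(n,k,d,r,\delta)$-matroid has $d\ge\delta$ by Proposition~\ref{proposition:bound_delta_k_rate}(i), the asserted value $d=n-k-\kr(\delta-1)$ is a legitimate minimum distance only for $n\ge k+(\kr+1)(\delta-1)+1$, so for smaller $n$ there is nothing to prove.

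I would take $m-1$ ``MDS blocks'' $F_2,\dots,F_m$, each with $\rho(F_i)=r$ and $|F_i|=r+\delta-1$ (so $\eta(F_i)=\delta-1$), together with one ``defect block'' $F_1$ with $\rho(F_1)=r$ and $|F_1|=r+\delta$ (so $\eta(F_1)=\delta$). Whatever the overlaps, this makes $\max_i\rho(F_i)=r$ and $\min_i\eta(F_i)=\delta-1$, so the output invariants $r,\delta$ are automatically correct, and each atom $F_i$ being a uniform matroid, every element lies in a locality set. I would arrange the blocks along a path, with $|F_i\cap F_{i+1}|=\gamma_i$ and $F_i\cap F_j=\emptyset$ for $|i-j|\ge2$, subject to $\gamma_i\ge0$ and $\gamma_{i-1}+\gamma_i\le r-1$ at every internal vertex, which ensures condition (iv) of \S\ref{sec:general}. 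A short computation from $\rho(F_J)=|F_J|-\sum_{i\in J}\eta(F_i)$ shows the extra element of $F_1$ cancels, giving $\rho(F_I)=|I|r-\sum_{\{i,i+1\}\subseteq I}\gamma_i$ for every $I$, and hence $n=m(r+\delta-1)+1-\sum_i\gamma_i$.

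The value of $\Delta$ is then a direct count. If I also keep the weights small enough that no $\kr+1$ of the blocks have combined rank below $k$ (e.g.\ by bounding the internal weight of every $(\kr+1)$-subset of blocks by $r+a$, where $a:=r\kr-k$), then every $F_I\in\mathcal Z_{<k}$ has $|I|\le\kr$, so $\sum_{i\in I}\eta(F_i)\le\delta+(\kr-1)(\delta-1)=\kr(\delta-1)+1$, with equality only if $|I|=\kr$ and $F_1\in I$. Demanding in addition $\gamma_1+\dots+\gamma_{\kr-1}\ge a+1$ forces $I=\{1,\dots,\kr\}$ itself into $\mathcal Z_{<k}$ (its rank is $\kr r-\sum_{i<\kr}\gamma_i<\kr r-a=k$), so $\Delta=\kr(\delta-1)+1$ exactly, and Theorem~\ref{thm:construction} outputs the desired matroid. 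When $m=\kr$ the set $\{1,\dots,\kr\}$ is all of $E$, with rank $k$; in that (low-$n$) regime I would instead take the blocks disjoint and give $F_1$ nullity $2\delta-1$ and rank in $\{r-a,\dots,r\}$, for which the same count gives $\Delta=(2\delta-1)+(\kr-2)(\delta-1)=\kr(\delta-1)+1$.

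The remaining point, and the crux, is to show that every admissible $n$ is actually attained. For fixed $m$ the attainable values of $\sum_i\gamma_i$, hence of $n$, form an integer interval (one nudges the weights by $1$ while respecting all the inequalities and the requirement on $\gamma_1+\dots+\gamma_{\kr-1}$), so one only needs that consecutive-$m$ intervals overlap, which holds once $m$ is large enough because the maximal admissible $\sum_i\gamma_i$ grows linearly in $m$, together with covering the finitely many smallest $n$ (just above the threshold $k+(\kr+1)(\delta-1)+1$) by the disjoint variant as $\rho(F_1)$ ranges over $\{r-a,\dots,r\}$, plus a handful of ad hoc constructions for degenerate configurations (e.g.\ $r=1$, or $\kr$ so small that the path cannot carry $\gamma_1+\dots+\gamma_{\kr-1}\ge a+1$). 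The genuine difficulty is this balancing act: the overlaps must be large enough both to bring $|\bigcup_i F_i|$ down to $n$ and to seat $F_1$ inside a rank-$(<k)$ union of $\kr$ blocks, yet small enough that no $\kr+1$ blocks ever fit below rank $k$; once the window of feasible $(m,\gamma)$ is verified nonempty for every $n\ge k+(\kr+1)(\delta-1)+1$, the rest is routine verification of (i)--(iv) and of the parameter formulas of Theorem~\ref{thm:construction}.
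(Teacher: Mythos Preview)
The paper's proof is far simpler than yours. It takes $m=\lceil n/(r+\delta-1)\rceil$ pairwise \emph{disjoint} sets $F_1,\dots,F_m$: the first $m-1$ of size $r+\delta-1$ and rank $r$, and $F_m$ absorbing the leftover $n-(m-1)(r+\delta-1)$ elements with nullity $\delta-1$. Because the blocks are disjoint, $\rho(F_I)=\sum_{i\in I}\rho(F_i)$ and $\sum_{i\in I}\eta(F_i)=|I|(\delta-1)$, so Theorem~\ref{thm:construction} gives
\[
d \;=\; n-k+1-(\delta-1)\max\{|I|:\rho(F_I)<k\}\;\ge\; n-k+1-\lkr(\delta-1),
\]
since any $I$ with $\rho(F_I)<k$ has $|I|\le\lkr$. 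No overlaps, no balancing, no case analysis.

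Note that what the proof actually establishes is the \emph{inequality} $d\ge n-k+1-\lkr(\delta-1)$, which is exactly the lower bound on $d_{\max}$ advertised in the surrounding text; the ``$=$'' and the missing ``$+1$'' in the displayed statement are evidently typos. Your own observation that the literal equality is impossible for $n\in[k+\lkr(\delta-1),\,k+(\lkr+1)(\delta-1)]$ (because it would force $d<\delta$) confirms this reading.

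You took the ``$=$'' at face value and set out to build a matroid with $\Delta=\lkr(\delta-1)+1$ exactly, which genuinely requires either a defect block of nullity $\delta$ or more than $\lkr$ blocks of joint rank below $k$. Your path-with-weights scheme is a reasonable way to do that, and the arithmetic you give for $\rho(F_I)$ and $\Delta$ is correct. But the crux you flag --- that for every admissible $n$ one can choose $m$ and weights $\gamma_i$ simultaneously satisfying $\gamma_{i-1}+\gamma_i\le r-1$, $\sum_{i<\lkr}\gamma_i\ge a+1$, and the ``no $(\lkr+1)$-set drops below $k$'' constraint --- is only sketched, not verified; and the ad hoc patches for small $m$ and $r=1$ are not carried out. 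For the paper's actual purpose (the lower bound on $d_{\max}$), none of this machinery is needed.
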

\begin{proof}
Let $m=\left\lceil\frac{n}{r+\delta-1}\right\rceil$, and let $F_1,\ldots F_{m-1}$ be disjoint sets with rank $r$ and size $r+\delta-1$. Let $F_m$ be disjoint from all of $F_1,\ldots F_{m-1}$, with size $|F_m|=n-(m-1)(r+\delta-1)$ and rank $\rho(F_m)=|F_m|-\delta+1$. Finally, let $M$ be defined by $\mathcal{Z}(M)=\{F_I\}$, where $F_I=\cup_{i\in I}F_i$, and $$\rho(F_I)=\min\{\sum_{i\in I}\rho(F_i), k\}.$$ It is readily seen that $M$ has minimum distance $$d= n-k+1-(\delta-1)\max\{|I|: \rho(F_I)<k\}\geq n-k+1- \left \lceil \frac{k}{r} \right \rceil(\delta - 1).$$
\end{proof}

In particular, when $\delta=2$, this means that the optimal minimum distance is one of $n-k+1-\kr$ and $n-k-\kr$. The remainder of this section aims at deciding which of these two possibilities is the case for fixed 

 Before stating the technical theorem on $d_{\mathrm{max}}$, we need the following qualitative result.

\begin{proposition} \label{proposition:bound_frac_for_n}
Let $M$ be an $(n,k,d,r,\delta)$-matroid and let $a = \left \lceil \frac{k}{r} \right \rceil r - k$ and $b = \left \lceil \frac{n}{r+ \delta - 1} \right \rceil (r+ \delta - 1) - n$. Then the following hold,
$$
\left \lceil \frac{n}{r+ \delta - 1} \right \rceil \geq
      \left \{
			\begin{array}{lcl}
			\left \lceil \frac{k}{r} \right \rceil & \hbox{if} & b\leq a,\\
			\left \lceil \frac{k}{r} \right \rceil + 1 & \hbox{if} & b > a,
			\end{array}
			\right .
$$
\end{proposition}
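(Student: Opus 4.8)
The plan is to show that the inequality $\lceil \frac{n}{r+\delta-1}\rceil \geq \lceil\frac{k}{r}\rceil$ always holds (with the stronger conclusion when $b>a$) by writing $n$ and $k$ in terms of the ceilings and the defect parameters $a,b$, and then using the basic constraint $k\leq n-\kr(\delta-1)$ that holds for every $(n,k,d,r,\delta)$-matroid (this is Proposition~\ref{proposition:bound_delta_k_rate}~(ii)). First I would introduce the shorthand $p=\lceil\frac{k}{r}\rceil$ and $q=\lceil\frac{n}{r+\delta-1}\rceil$, so that by definition $k = rp - a$ and $n = (r+\delta-1)q - b$ with $0\leq a\leq r-1$ and $0\leq b\leq r+\delta-2$. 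The goal is then to prove $q\geq p$ in general, and $q\geq p+1$ when $b>a$.

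Next I would substitute these expressions into Proposition~\ref{proposition:bound_delta_k_rate}~(ii), namely $k\leq n-p(\delta-1)$, which becomes
$$
rp - a \leq (r+\delta-1)q - b - p(\delta-1).
$$
Rearranging, this reads $rp + p(\delta-1) - a + b \leq (r+\delta-1)q$, i.e. $(r+\delta-1)p \leq (r+\delta-1)q + a - b$, so
$$
(r+\delta-1)(p-q) \leq a - b.
$$
Now I split into the two cases. If $b\leq a$, then $a-b\geq 0$, but also $a-b \leq a \leq r-1 < r+\delta-1$; hence $(r+\delta-1)(p-q) \leq a-b < r+\delta-1$, which forces $p-q \leq 0$, i.e. $q\geq p$, as claimed. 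If instead $b>a$, then $a-b \leq -1 < 0$, so $(r+\delta-1)(p-q) \leq -1 < 0$, which forces $p - q \leq -1$, i.e. $q\geq p+1$; this is the stronger conclusion.

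I do not expect any real obstacle here: the argument is essentially a rearrangement of the already-established inequality $k\leq n-\kr(\delta-1)$ together with the elementary bounds $0\leq a\leq r-1$ and the integrality of $p-q$. The only point requiring a little care is making sure the bound $a\leq r-1$ (rather than merely $a\geq 0$) is invoked in the case $b\leq a$, so that the right-hand side $a-b$ is strictly less than the modulus $r+\delta-1$; this strict inequality is what upgrades $p-q\leq 0$ from the weak bound $(r+\delta-1)(p-q)\leq r+\delta-1$. I would also note explicitly at the outset that the hypothesis of the proposition presupposes $M$ is an $(n,k,d,r,\delta)$-matroid, so Proposition~\ref{proposition:bound_delta_k_rate} is applicable.
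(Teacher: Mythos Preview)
Your proof is correct and is essentially the same as the paper's: both substitute $k=rp-a$ and $n=(r+\delta-1)q-b$ into Proposition~\ref{proposition:bound_delta_k_rate}~(ii) to obtain $(r+\delta-1)(p-q)\leq a-b$, and then split on the sign of $a-b$. You are slightly more explicit than the paper in invoking $a\leq r-1$ to force $p-q\leq 0$ in the case $b\leq a$, but the underlying argument is identical.
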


\begin{proof}
Let $\left \lceil \frac{n}{r+ \delta - 1} \right \rceil = \left \lceil \frac{k}{r} \right \rceil + t$. Note that $n-k \geq \left \lceil \frac{k}{r} \right \rceil (\delta - 1)$ by Proposition \ref{proposition:bound_delta_k_rate}. Hence,
$$
\begin{array}{lcl}
\left \lceil \frac{k}{r} \right \rceil (\delta - 1) & \leq & n-k\\
 & = & (\left \lceil \frac{k}{r} \right \rceil + t) (r + \delta - 1) - b - (\left \lceil \frac{k}{r} \right \rceil r - a)\\
      & = & \left \lceil \frac{k}{r} \right \rceil (\delta - 1) + t(r + \delta - 1) -(b - a).
      \end{array}
$$
This implies that $t \geq 0$ if $b \leq a$ and $t \geq 1$ if $b > a$.
\end{proof}

\begin{thm} \label{thm:max_d}
Let $(n,k,r,\delta)$ be integers such that $0 < r < k \leq n- \left \lceil \frac{k}{r} \right \rceil(\delta-1)$, $k = \left \lceil \frac{k}{r} \right \rceil r - a$ and $n = \left \lceil \frac{n}{r+ \delta - 1} \right \rceil (r+ \delta - 1) - b$. Let $\dm = \dm (n,k,r,\delta)$ be the largest $d$ such that there exists an $(n,k,d,r,\delta)$-matroid. Then the following hold.
\begin{enumerate}[(i)]
\item If  $a \geq b$, then $\dm = n-k+1-\left (\left \lceil \frac{k}{r} \right \rceil - 1 \right) (\delta - 1)$;

\item If $b > a$ and $b\geq r$, then
$ \dm \geq n-k+1- \left \lceil \frac{k}{r} \right \rceil (\delta - 1) + (b - r).$

\item If $b > a$ and $a < \left \lceil \frac{k}{r} \right \rceil - 1$, then \\ 
$\dm = n-k+1-\left (\left \lceil \frac{k}{r} \right \rceil - 1 \right) (\delta - 1)$ if and only if 
 $\left \lfloor \left \lceil \frac{k}{r} \right \rceil / 2\right \rfloor \leq a$ and $$\left \lceil \frac{n}{r+\delta-1} \right \rceil \geq \left \lceil \frac{k}{r} \right \rceil -1 + \left(b-a\right)\left(1+\frac{1}{t}\right),$$ where $t= \left\lfloor a/\left(\left \lceil \frac{k}{r} \right \rceil -1-a\right)\right\rfloor$;

\item If $b > a \geq \left \lceil \frac{k}{r} \right \rceil - 1$, $\left \lceil \frac{k}{r} \right \rceil \geq 3$ and
$$\left \lceil \frac{n}{r+\delta-1} \right \rceil \geq \left \lfloor \frac{b}{stu} \right \rfloor \left ( t (u-1) + 2 \right) + y,$$
 where $s = \left \lfloor \frac{a}{\lceil \frac{k}{r} \rceil - 1} \right \rfloor$, $t = \left \lfloor \frac{r-1}{s} \right \rfloor$, $u = \left \lceil \frac{\kr + 1}{2}\right \rceil$,  $x = \left \lceil \frac{b - \lfloor \frac{b}{stu} \rfloor stu}{s} \right \rceil$, and  
$$	  y = 
\left \{
\begin{array}{lcl}
0 &\hbox{if}& stu \mid b,\\
x - \left \lfloor \frac{x}{u} \right \rfloor + 1 +\min\{\left \lfloor \frac{x}{u} \right \rfloor, 1\} &\hbox{if}& stu \nmid b,
\end{array}
\right .$$
then $\dm = n-k+1-\left (\left \lceil \frac{k}{r} \right \rceil - 1 \right) (\delta - 1)$;\\

\item If $b > a \geq \left \lceil \frac{k}{r} \right \rceil - 1$, $\left \lceil \frac{k}{r} \right \rceil = 2$, and $$\left \lceil \frac{n}{r+\delta- 1} \right \rceil \geq 
\left \{
\begin{array}{lcl}
\lceil \frac{b}{a}\rceil + 1 &\hbox{if}& 2a \leq r-1,\\
\left \lceil \frac{b}{\lfloor \frac{r-1}{2} \rfloor} \right \rceil + 1 &\hbox{if}& 2a > r-1,\\
\end{array}
\right . $$
then $\dm = n-k+1-\left (\left \lceil \frac{k}{r} \right \rceil - 1 \right) (\delta - 1)$.
\end{enumerate}
\end{thm}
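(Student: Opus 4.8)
The plan is to derive the lower bound $\dm = n-k+1-(\kr-1)(\delta-1)$ by exhibiting a perfect $(n,k,d,r,\delta)$-matroid, using the graph construction of Corollary~\ref{corollary:graph_2} with the case-specific hypothesis guaranteeing that a suitable weighted graph exists; the matching upper bound is already Theorem~\ref{th:bound}. So the whole content is the construction of a weighted graph $(G,\gamma)$ on vertex set $[m]$, with $m=\lceil\frac{n}{r+\delta-1}\rceil$, satisfying the three conditions (i)--(iii) of Section~\ref{sec:special} together with the extra hypotheses of Corollary~\ref{corollary:graph_2}: no $l$-cycles for $l\leq \kr=2$ (i.e. $G$ is simple, which is automatic in our graph model), and $1\leq\gamma(w)\leq\lfloor\frac{a}{\kr-1}\rfloor = a$ for every edge $w$. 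Since $\kr=2$, condition (ii) becomes $k\leq 2r-b$, which since $a=2r-k$ is just $b\leq a$... wait, we are in the regime $b>a$, so we cannot route all of $b$ through a graph on $m$ vertices with $b\leq$ the total $\gamma$-weight being at most $a$ per edge but needing $\sum_w\gamma(w)=b$; this is exactly why we need $m$ (equivalently $n$, via $n=(r+\delta-1)m-b$) large enough, and why the lower bound on $\lceil\frac{n}{r+\delta-1}\rceil$ appears in the statement.

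Concretely, I would proceed as follows. First, observe $\kr=2$ means $k=2r-a$, and $a\geq\kr-1=1$. Set $m=\lceil\frac{n}{r+\delta-1}\rceil$; the goal is to build $G$ on $[m]$ with edge weights $\gamma$ so that $\sum_{w\in W}\gamma(w)=b$, with each $\gamma(w)$ bounded by the per-edge cap $c$ and each vertex incident to weight-sum $\le r-1$ (condition (iii) of Section~\ref{sec:special}), and with no $2$-cycle (automatic) — but crucially we also need condition (ii), $k\le rm-\sum_w\gamma(w)=rm-b$, i.e. $m\ge\frac{k+b}{r}=\frac{2r-a+b}{r}=2+\frac{b-a}{r}$, so $m\ge 2+\lceil\frac{b-a}{r}\rceil\ge 3$ (using $b>a$). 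The per-edge cap: when $2a\le r-1$ we may use $c=a$ and the total weight absorbable per edge is $a$; a path (or a union of paths) with $\lceil\frac{b}{a}\rceil$ edges suffices, needing $\lceil\frac{b}{a}\rceil+1$ vertices — matching the stated bound on $m$. When $2a>r-1$, the binding constraint at each vertex ($\sum_j\gamma(\{i,j\})\le r-1$) forces a smaller effective cap: an interior vertex of a path sees two edges, so each edge weight can be at most $\lfloor\frac{r-1}{2}\rfloor$; again a path with $\lceil\frac{b}{\lfloor(r-1)/2\rfloor}\rceil$ edges does the job, needing $\lceil\frac{b}{\lfloor(r-1)/2\rfloor}\rceil+1$ vertices, again matching. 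In both cases one also needs to check condition (iii) of Corollary~\ref{corollary:graph_2}, namely that $G$ has no $l$-cycles for $l\le\kr=2$; since our graphs are simple paths this is immediate, and one should double-check the girth condition $\max\{4,\kr+1\}=4$ used implicitly in the underlying graph construction (paths have infinite girth, so fine). Finally I would verify that the constructed weighted graph has $\sum_w\gamma(w)$ exactly $b$ (distribute the weight $b$ greedily along the path, putting the maximum allowed on each edge until the budget runs out, and on the last edge put the remainder), and that $m$ can be taken equal to $\lceil\frac{n}{r+\delta-1}\rceil$ thanks to the hypothesis; then Corollary~\ref{corollary:graph_2} yields the desired $(n,k,d,r,\delta)$-matroid with $d=n-k+1-(\kr-1)(\delta-1)$.

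The main obstacle I anticipate is the bookkeeping needed to reconcile the constraint $n=(r+\delta-1)m-b$ from the graph construction with the given value of $n$: the construction only produces matroids whose $n$ is of this form, and one must argue that, given the hypothesis $\lceil\frac{n}{r+\delta-1}\rceil$ is large enough, one can choose $m$ (possibly $m=\lceil\frac{n}{r+\delta-1}\rceil$) and a weight distribution with $\sum_w\gamma(w)=b$ producing exactly the prescribed $n$. Concretely $b$ is defined precisely as $\lceil\frac{n}{r+\delta-1}\rceil(r+\delta-1)-n$, so with $m=\lceil\frac{n}{r+\delta-1}\rceil$ the identity $n=(r+\delta-1)m-b$ holds by definition — this resolves cleanly, but it requires care to state. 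A secondary subtlety is checking the per-vertex weight bound (iii) of Section~\ref{sec:special} simultaneously with achieving total weight $b$; the two regimes $2a\le r-1$ versus $2a>r-1$ are exactly the dichotomy between ``the per-edge cap $a$ is the binding constraint'' and ``the per-vertex cap at a degree-$2$ vertex is binding'', and the case split in the statement mirrors this. Everything else — verifying (i)--(iv) of Section~\ref{sec:general} via Theorem~\ref{theorem:graph_1}, and that the resulting $d$ equals the claimed value — is delegated to Corollary~\ref{corollary:graph_2}, which we are entitled to invoke.
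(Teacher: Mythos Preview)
Your approach is correct and essentially identical to the paper's: for part~(v) the paper also takes $G$ to be a simple path on $[m]$ with $m=\lceil n/(r+\delta-1)\rceil$, distributes the total weight $b$ greedily along the edges with per-edge cap $a$ (when $2a\le r-1$) or $\lfloor(r-1)/2\rfloor$ (when $2a>r-1$), and then invokes Corollary~\ref{corollary:graph_2}. Your identification of the two regimes as ``edge cap binding'' versus ``degree-two vertex cap binding'' is exactly the paper's reasoning, and your observation that condition~(ii) of Section~\ref{sec:special} ($k\le rm-b$) must also be checked is a detail the paper passes over silently; it does hold, since $a<r$ forces $\lceil b/a\rceil+1\ge 2+(b-a)/r$.
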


\begin{proof}
The proof is given in the Appendix.
\end{proof}

In the proof of Theorem \ref{thm:max_d}(iv), we will notice that a simpler bound, but in general not as good, is $\left \lceil \frac{n}{r+\delta-1} \right \rceil \geq \left \lceil \frac{b}{stu} \right \rceil \left ( t (u-1) + 2 \right )$.

\begin{exam} \label{exam:d_max}
Examples of constructions of matroids in Theorem \ref{thm:max_d}(i), (iii) and (iv) given by the proofs of the theorem are given in Example \ref{exam:graph_construction_1}, \ref{exam:graph_construction_2a} and \ref{exam:graph_construction_2b} respectively.
\end{exam}

\section{Applications of $(n,k,d,r,\delta)$-Matroids to $(n,k,d,r,\delta)$-LRCs} \label{sec:applications}

In this section we will use the previous results on $(n,k,d,r,\delta)$-matroids to get new results on linear and almost affine $(n,k,d,r,\delta)$-LRCs. All the proofs of the non-existence of matroids immediately give corresponding bounds for codes. To verify the other direction, obtaining codes with prescribed parameter values from matroids with the same parameters, we will show that the class of matroids given in Theorem \ref{thm:construction} is a subclass of a class of matroids called gammoids.  Gammoids have the property that they are representable over any finite field of sufficiently large size. 

The main result in this section is Theorem \ref{thm:class_of_gammoids}. 
\begin{thm} \label{thm:class_of_gammoids}
Let $M(F_1,\ldots,F_m;k;\rho)$ be an $(n,k,d,r,\delta)$-matroid that we get in Theorem~\ref{thm:construction}. Then for every large enough finite field there is a linear LRC over the field with parameters $(n,k,d,r,\delta)$.  
\end{thm}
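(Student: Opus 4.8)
The plan is to show that every matroid $M(F_1,\ldots,F_m;k;\rho)$ arising from Theorem~\ref{thm:construction} is a \emph{gammoid}, and then invoke the classical representability theorem for gammoids. Recall that a gammoid is a matroid obtained from a directed graph $D=(V,A)$ with a distinguished set $T\subseteq V$ of sinks, by declaring a subset $X\subseteq T$ independent precisely when there exist $|X|$ vertex-disjoint directed paths from $X$ to some subset of a fixed source set $B\subseteq V$. By a theorem of Lindstr\"om~\cite{lindstrom73} (building on Ingleton--Piff), every gammoid is representable over every sufficiently large finite field. So once $M$ is realized as a gammoid, Theorem~\ref{thm:matroid_invariant} (the matroid-invariance of the parameters) guarantees that any linear representation yields a linear LRC with exactly the parameters $(n,k,d,r,\delta)$ read off in Theorem~\ref{thm:construction}, and we are done.

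The main work is therefore to construct the directed graph. First I would build a bipartite-type ``defect'' gadget. The matroid $M$ is cut out by the rank conditions $|X\cap F_I|\le\rho(F_I)$ for all $I\subseteq[m]$, together with the global bound $\rho(E)=k$; equivalently, each atomic flat $F_i$ must lose $\eta_i:=|F_i|-\rho(F_i)=\delta_i-1$ units of rank, and globally the ground set loses $n-k$ units. The natural construction is to take the ground set $E$ as the sink layer $T=E$, introduce for each $i\in[m]$ a bundle of $\eta_i$ ``private constraint'' vertices reachable only from the elements of $F_i$, and then a layer of $n-k-\sum_i\eta_i$ ``global'' constraint vertices reachable from all of $E$ (feeding into a source set $B$ of size $k$). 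Edges go from each $x\in F_i$ to each of $F_i$'s private constraint vertices, from every $x\in E$ to every global constraint vertex, and from the constraint layer down to $B$. One then checks via Menger's theorem that a set $X\subseteq E$ has a system of $|X|$ vertex-disjoint paths to $B$ iff it satisfies all the inequalities $|X\cap F_I|\le\rho(F_I)$ and $|X|\le k$ --- i.e. iff $X\in\mathcal I$; the forward direction is immediate from disjointness, and the reverse uses Hall's/Menger's condition, which reduces exactly to hypotheses (i)--(iv) of Section~\ref{sec:general} (in particular (iv), $|F_I\cap F_j|<\rho(F_j)$, is what prevents the private gadgets of overlapping flats from ``double-booking'' a constraint vertex). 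Alternatively, and perhaps more cleanly, one can exhibit $M$ as a \emph{transversal matroid} extended or a \emph{strict gammoid} directly from its presentation as an intersection/union of the uniform-type local matroids $M|F_i$; but the directed-graph route above is the most transparent.

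The hard part will be verifying that the disjoint-paths condition in the constructed digraph coincides \emph{on the nose} with membership in $\mathcal I=\{X:|X\cap F_I|\le\rho(F_I)\ \forall I\}$, rather than with some relaxation of it. The subtlety is that a max-flow/Menger argument naturally produces the rank function $X\mapsto\min$ over cuts, and one must argue that the minimizing cuts correspond exactly to sets of the form $F_I$ (together with the global layer), using the sub/supermodularity packaged in conditions (i)--(iv) and the structure of $\mathcal Z$ from~\eqref{eq:Z_rho_construction}. Once that combinatorial identity is pinned down, matching it against Proposition~\ref{pro:basic-Z} and Theorem~\ref{th:Z-axiom} confirms we have the right matroid, and the representability and LRC conclusions follow formally. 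I would relegate the detailed path-counting and cut analysis to the Appendix, as the excerpt indicates is done for the long proofs.
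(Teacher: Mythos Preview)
Your overall strategy---realize $M(F_1,\ldots,F_m;k;\rho)$ as a gammoid and then invoke Lindstr\"om's theorem (Theorem~\ref{thm:gammoid_representable}) together with the matroid invariance of the parameters (Theorem~\ref{thm:matroid_invariant})---is correct and is exactly what the paper does.

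The specific digraph you propose, however, does not yield the right matroid. You give each $F_i$ a bundle of $\eta_i=|F_i|-\rho(F_i)$ private vertices in the intermediate layer, together with $n-k-\sum_i\eta_i$ global ones. But then elements of $F_i$ can route through the global vertices as well as the private ones, so the constraint $|X\cap F_i|\le\rho(F_i)$ is not enforced. Concretely, take $m=2$, $F_1,F_2$ disjoint of size $5$, $\rho(F_1)=\rho(F_2)=3$, $k=4$: your intermediate layer has two $F_1$-private, two $F_2$-private, and two global vertices, and the set $X=\{1,2,3,4\}\subseteq F_1$ admits four vertex-disjoint paths to $B$ (via the two $F_1$-private and the two global vertices), so is independent in your gammoid, whereas $|X\cap F_1|=4>\rho(F_1)$ makes it dependent in $M$. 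And when there are no global vertices the intermediate layer has only $\sum_i\eta_i=n-k$ nodes, so the gammoid has rank at most $n-k$, which is smaller than $k$ whenever $k>n/2$. In short, the middle layer must encode \emph{capacity} $\rho(F_i)$, not \emph{defect} $\eta_i$.

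The paper's construction (Algorithm~\ref{alg:transversal} and Lemma~\ref{lemma:gammoid_graph}) builds the middle layer $H$ so that the neighbourhood of each $F_I$ in $H$ has size exactly $\rho(F_I)=|F_I|-\sum_{i\in I}\eta_i$: one node $u_e$ for every $e$ lying in at least two of the $F_i$, with label $h(u_e)=\{i:e\in F_i\}$, and then for each $i$ a further $\rho(F_i)-|F_i\cap F_{[m]\setminus\{i\}}|$ nodes with label $\{i\}$ (hypothesis~(iv) of Section~\ref{sec:general} is precisely what makes this number positive). An element $e$ is joined to those $u\in H$ with $s(e)\subseteq h(u)$, and $H$ is then joined by a complete bipartite graph to $T=[k]$ to enforce the global truncation. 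Hall's theorem now gives that $X$ matches into $H$ if and only if $|X\cap F_I|\le\rho(F_I)$ for every $I$, which together with $|X|\le k$ is the independence condition in $M$.
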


\subsection{Transversal matroids and gammoids}

We start by giving a short introduction to gammoids. For more information on this fascinating class of matroids we refer the reader to \cite{oxley92, schrijver03}.

A gammoid is associated to a directed graph $G$ as follows.

\begin{definition} Let $G=(V,D)$ be a directed graph, with $S \subseteq V$ and 
$T \subseteq V$.The  \emph{gammoid} $M(G)$ is a matroid $M(G)$ on $S$ where the independent sets of $M(G)$ equals
$$
\mathcal{I}(M(G)) = \{X \subseteq S : \exists \hbox{ a set of $|X|$ vertex-disjoint paths from $X$ to $T$}\}.
$$

\end{definition} 
Our interest in gammoids in this paper stems from the following result.

\begin{thm}[\cite{lindstrom73}] \label{thm:gammoid_representable}
Every gammoid over a finite set $E$ is representable over every finite field of size greater than or equal to $2^{|E|}$.
\end{thm}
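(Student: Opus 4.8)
The plan is to produce, for a gammoid $M=M(G)$ with $G=(V,D)$, source set $S$ and sink set $T$, an explicit matrix over $\mathbb{F}_q$ whose column matroid is $M$, the ground set being $E=S$. The starting point is an exact linear-algebraic model of vertex-disjoint linkages. First I would attach an indeterminate $x_e$ to every arc $e\in D$ and split each vertex into an in-copy and an out-copy joined by a new arc, so that a family of \emph{vertex}-disjoint paths in $G$ becomes a family of internally arc-disjoint paths in an auxiliary digraph, each arc being used at most once. From the weighted adjacency matrix $A$ of this auxiliary digraph (entries $x_e$ or $0$) I would form the resolvent $(I-A)^{-1}$ over the rational function field $\mathbb{F}(x_e : e\in D)$; this is well defined since $\det(I-A)$ has constant term $1$. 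Restricting the resolvent to rows indexed by $T$ and columns indexed by $S$ yields a matrix $B=B(x)$ of size $|T|\times|S|$ over $\mathbb{F}(x_e)$, which is the candidate representation, with the column indexed by $s\in S$ playing the role of the matroid element $s\in E$.

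The analytic heart is a Lindstr\"om--Gessel--Viennot type determinant identity: for $X\subseteq S$ and $T'\subseteq T$ with $|T'|=|X|$, the minor $\det B_{T',X}$ equals a signed, weighted sum over families of vertex-disjoint paths from $X$ to $T'$. I would establish that a sign-reversing involution on the \emph{intersecting} families cancels all of them, leaving only the vertex-disjoint families; and that distinct vertex-disjoint families use distinct arc sets and hence contribute distinct squarefree monomials in the $x_e$, each with coefficient $\pm1$. Consequently no cancellation occurs among the surviving terms, so $\det B_{T',X}$ is a nonzero polynomial \emph{precisely} when a linkage from $X$ into $T'$ exists. By the definition of the gammoid and Menger's theorem this is exactly the condition that $X$ be independent in $M$; thus $B$ represents $M$ over $\mathbb{F}(x_e)$. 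The coefficient-$\pm1$, distinct-monomial feature is what makes the argument characteristic-free: each relevant minor remains a nonzero polynomial over every prime field.

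Finally I would specialize the indeterminates $x_e$ to values in $\mathbb{F}_q$ so that all basis minors stay nonzero simultaneously. Setting $\Phi=\prod_{X\in\mathcal{B}}\det B_{T'_X,X}$ (a nonzero polynomial, one factor per basis), it suffices to find $a\in\mathbb{F}_q^{D}$ with $\Phi(a)\neq0$; since the number of bases satisfies $|\mathcal{B}|\le\binom{|E|}{\rho(E)}<2^{|E|}$, a Schwartz--Zippel / union-over-the-$\le 2^{|E|}$-factors avoidance argument shows that any field with $q\ge 2^{|E|}$ elements contains such a point, and the specialized matrix then represents $M$ over $\mathbb{F}_q$. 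The main obstacle is the determinant identity in the \emph{cyclic} case: directed cycles force the use of the resolvent rather than a naive path sum, and one must check that cycle contributions do not spoil the monomial-distinctness on which the no-cancellation property rests; this is precisely the content of Lindstr\"om's linking-system lemma, and is where I expect the real work to lie. As a consistency check one can instead invoke the Ingleton--Piff theory---gammoids are restrictions of strict gammoids, which are duals of transversal matroids, and both restriction and duality preserve representability---but this indirect route passes through the full vertex set $V$ and only yields the weaker threshold $2^{|V|}$, so the direct path-matrix construction is what is needed to obtain the sharp bound $2^{|E|}$.
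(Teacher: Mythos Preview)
The paper does not prove this theorem; it is cited from Lindstr\"om as a black box and invoked only to obtain Theorem~\ref{thm:class_of_gammoids}. There is thus no in-paper argument to compare against. (For what it is worth, the gammoids actually used in the paper, coming from Algorithm~\ref{alg:transversal}, live on three-layer acyclic digraphs, so only the DAG case of the theorem is needed for the applications here.)

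Your plan is essentially Lindstr\"om's own argument and is sound, with two points that deserve tightening. First, the specialisation step: neither a total-degree Schwartz--Zippel bound on $\Phi$ nor a bare union bound over its $<2^{|E|}$ zero loci yields the threshold $q\ge 2^{|E|}$, because the total degree of $\Phi$ can be far larger. What does the job is the multilinearity you already observed---each basis minor has degree at most~$1$ in every edge variable $x_e$, so $\Phi$ has degree at most $|\mathcal B|<2^{|E|}$ \emph{in each variable separately}, and one specialises the variables one at a time, each step requiring only $q>|\mathcal B|$. Second, your concern about directed cycles is well placed: with cycles, the resolvent entries are genuine rational functions and the squarefree-monomial bookkeeping on which the no-cancellation argument rests is no longer automatic. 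This is precisely where Lindstr\"om's linking lemma does its work, and your assessment that the direct path-matrix route (rather than the Ingleton--Piff detour through the strict gammoid on all of~$V$) is what delivers the bound $2^{|E|}$ rather than $2^{|V|}$ is correct.
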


Many natural classes of gammoids, can be represented over fields of much smaller size than $2^n$. For example, a uniform matroid $U_n^k$ \eqref{eq:def_uniform_matroid} is a gammoid associated to a complete bipartite graph with $V=S\cup T$, $|S=n|$, $|T|=k$ and $D=S\times T$. However, uniform matroids are represented by linear $[n,k,d=n-k+1]$-MDS codes, which exist over $\mathbb{F}_q$ when $q \geq n$.

%A \emph{transversal matroid} is a gammoid $M(G')$ over a directed graph $G' = (V',D',S',T')$, where $S'$ and $T'$ are disjoint sets, $V' =  S' \cup T'$ and the set of directed edges $D'$ is a subset of $\{(\overrightarrow{s,t}) : s \in S' \hbox{ and } t \in T'\}$. By the Hall's Theorem \cite[Thm. 12.2.1]{oxley92} the independent sets of a transversal matroid $M(G')$ is
%\begin{equation} \label{eq:hall}
%\mathcal{I}(M(G')) = \{X \subseteq S' : |X'| \leq |A(X')| \hbox{ for each } X' \subseteq X\},
%\end{equation}
%where $A(X') = \{t \in T : \exists x' \in X' \hbox{ such that } (\overrightarrow{x',t}) \in D'\}$.

\subsection{Constructions of linear $(n,k,d,r,\delta)$-LRCs $C(F_1, \ldots,F_m;k;\rho)$}\label{sec:representability}
Theorem~\ref{thm:class_of_gammoids} follows immediately from Lemma~\ref{lemma:gammoid_graph} and Theorem~\ref{thm:gammoid_representable}. The key element is the construction of a directed graph whose associated gammoid is the matroid from Theorem~\ref{thm:construction}. This construction is detailed in Algorithm 1.

\begin{algorithm} [!hbt]
\caption{Input: $(F_1,\ldots,F_m;k;\rho)$. Output: $G= (V,D,S,T)$ }\label{euclid}
\label{alg:transversal}
\begin{algorithmic}[1]
\State $S = E$, $H = \emptyset$, $D = \emptyset$, $T=[k]$
\State Label $e\in S$ with $s(e)=\{i:e\in F_i\}$.
\State $h \textit{ is a function } H\to 2^{[m]}$

\ForAll {$e \in E$}
\If {$|s(e)| \geq 2$}
\State $ H \gets H \cup \{u_e\} $
\State $h(u_e) = s(e)$
\EndIf
\EndFor
\ForAll {$i \in [m]$}
\State $l_i = \rho(F_i) - |\{u \in H : i \in h(u)\}|$
\State $H \gets H \cup \{v_1^{i}, \ldots, v_{l_i}^{i}\}$
\State $ h(v_1^{i}) = \ldots = h(v_{l_i}^{i}) = \{i\} $
\EndFor
\ForAll {$(e,u) \in S \times H$}
\If {$s(e) \subseteq h(u)$}
\State $D \gets D \cup (\overrightarrow{e,u})$
\EndIf
\EndFor
%\ForAll {$(s,u) \in S \times H$}
\State $D \gets D \cup H\times T$
%\EndFor
\State $V = S \cup (H\cup T)$
\State Output $(V,D,S,T)$
\end{algorithmic}
\end{algorithm}

\begin{lemma} \label{lemma:gammoid_graph}
Let $F_1,\ldots,F_m$ be a collection of subsets of a finite set $E$ whose union is all of $E$, and write $F_I=\cup_{i\in I} F_i$. Let $\rho: \{F_i\}_{i \in [m]} \rightarrow \mathbb{Z}$ satisfy
\begin{enumerate}[(i)] 
\item $0\leq\rho(F_i)\leq |F_i|$,
\item $k\leq |F_{[m]}|+\sum_{i\in [m]}(\rho(F_i)-|F_i|)$,
\item $|F_I\cap F_j|< \rho(F_j)$ whenever $j\not\in I$. 
\end{enumerate}
 Then the gammoid $M(G)$, that we get from Algorithm \ref{alg:transversal} is equal to the matroid $M(F_1,\ldots,F_m;k;\rho)$ that we get in Theorem\ref{thm:construction}. 
\end{lemma}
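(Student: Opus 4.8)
\textbf{Proof plan for Lemma~\ref{lemma:gammoid_graph}.}

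The plan is to show directly that the two matroids have the same rank function, by analysing the maximum number of vertex-disjoint paths from a set $X\subseteq S$ to $T$ in the graph $G$ produced by Algorithm~\ref{alg:transversal}. First I would set up notation: the middle layer $H$ consists of one ``intersection node'' $u_e$ for each $e\in E$ with $|s(e)|\geq 2$, plus $l_i=\rho(F_i)-|\{u\in H_{\mathrm{int}}: i\in h(u)\}|$ ``private nodes'' $v^i_j$ attached to each $F_i$; note the key bookkeeping identity that for each $i\in[m]$ the number of nodes $u\in H$ with $i\in h(u)$ is exactly $\rho(F_i)$, since the intersection nodes with $i\in h(u)$ correspond to elements of $F_i$ lying in some other $F_j$, and the private nodes make up the difference (this is where hypothesis (i), $0\leq\rho(F_i)\leq|F_i|$, is needed to guarantee $l_i\geq 0$). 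An edge $e\in S$ is joined to $u\in H$ iff $s(e)\subseteq h(u)$, and every node of $H$ is joined to every node of $T=[k]$.

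Next I would argue that a matching of $X$ into $H$ (i.e. a system of vertex-disjoint $X$-to-$H$ paths of length one) always extends to $\min\{|\text{matched nodes}|,k\}$ disjoint paths reaching $T$, because $H\times T\subseteq D$ makes the $H$-to-$T$ part a complete bipartite graph; hence by Menger/Hall the rank of $X$ in $M(G)$ equals $\min\{k,\ \mu(X)\}$ where $\mu(X)$ is the maximum size of a matching of $X$ into $H$ in the bipartite graph on $S\cup H$ with the adjacency rule above. So the task reduces to showing $\mu(X)=\sum_{i\in J}\rho(F_i)$-type expressions match the rank function of $M(F_1,\ldots,F_m;k;\rho)$. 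By Hall's theorem (deficiency version), $\mu(X)=|X|-\max_{Y\subseteq X}\bigl(|Y|-|N(Y)|\bigr)$. The crucial combinatorial observation is that for $Y\subseteq X$, the neighbourhood $N(Y)\subseteq H$ is determined by the union $I=\bigcup_{e\in Y}s(e)\subseteq[m]$: one checks $N(Y)=\{u\in H: h(u)\subseteq I\}$ using that the $h$-labels are either singletons $\{i\}$ or equal to some $s(e)$, and that $s(e)\subseteq h(u)$; here hypothesis (iii), $|F_I\cap F_j|<\rho(F_j)$ for $j\notin I$, is what guarantees that the elements indexed ``outside'' $I$ do not accidentally get absorbed, so that $|N(Y)|=\sum_{i\in I}\rho(F_i)-(\text{double counting of intersection nodes})=\rho(F_I)$ exactly, where $\rho(F_I)=|F_I|+\sum_{i\in I}(\rho(F_i)-|F_i|)$ as in~\eqref{eq:Z_rho_construction}. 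Worst-case $Y$ for fixed $I$ is $Y=X\cap F_I$, so the deficiency equals $\max_{I\subseteq[m]}\bigl(|X\cap F_I|-\rho(F_I)\bigr)$, and therefore $\mu(X)=|X|-\max_{I}\bigl(|X\cap F_I|-\rho(F_I)\bigr)=\min_{I\subseteq[m]}\bigl(|X\setminus F_I|+\rho(F_I)\bigr)$.

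Finally I would compare this with the matroid $M(F_1,\ldots,F_m;k;\rho)$: by Proposition~\ref{pro:basic-Z}(i), its rank function is $\rho(X)=\min\{\rho(Z)+|X\setminus Z|:Z\in\mathcal{Z}\}$, and since every $Z\in\mathcal{Z}_{<k}$ is of the form $F_I$ while the contributions $F_I$ with $\rho(F_I)\geq k$ are captured by the $\min\{k,\cdot\}$ and the set $E=F_{[m]}$, one sees that $\rho_{M(F_1,\ldots,F_m;k;\rho)}(X)=\min\{k,\ \min_{I\subseteq[m]}(|X\setminus F_I|+\rho(F_I))\}$, which is exactly $\min\{k,\mu(X)\}=\rho_{M(G)}(X)$. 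One has to be slightly careful that not every $F_I$ is a cyclic flat, but since $\rho(F_J)\leq\rho(F_{J'})+|F_J\setminus F_{J'}|$ whenever the defining formula is used, taking the min over all $I$ rather than only over cyclic flats does not change the value; hypothesis (ii) ensures $E$ itself attains rank exactly $k$ and that $k$ is a genuine cap. I expect the main obstacle to be step two: pinning down $N(Y)$ precisely and proving $|N(Y)|=\rho(F_I)$ with no over- or under-counting of the intersection nodes $u_e$ when several indices coincide — this is exactly where the no-triangle / disjoint-intersections structure underlying hypothesis (iii) does the real work, and it must be handled carefully rather than waved through.
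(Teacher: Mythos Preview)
Your overall strategy is essentially the paper's: both reduce to the bipartite graph on $S\cup H$, observe that the complete bipartite layer $H\times T$ turns the gammoid rank into $\min\{k,\mu(X)\}$, and then invoke Hall. The paper phrases this as equality of independent sets ($X$ independent iff $|X\cap F_I|\le\rho(F_I)$ for all $I$ and $|X|\le k$), while you phrase it via the deficiency form of Hall; these are equivalent reformulations.

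There is, however, a genuine error in your key combinatorial step. The claim that $N(Y)=\{u\in H:h(u)\subseteq I\}$, and more fundamentally that $N(Y)$ depends only on $I=\bigcup_{e\in Y}s(e)$, is false in both directions. If $e\in Y$ has $s(e)=\{i\}$, then $e$ is adjacent to every $u$ with $i\in h(u)$, including intersection nodes $u_{e'}$ with $h(u_{e'})=\{i,j\}\not\subseteq I$; so $N(Y)\not\subseteq\{u:h(u)\subseteq I\}$. Conversely, if $e\in Y$ has $s(e)=\{i,j\}$ then $e$ is \emph{not} adjacent to the private nodes $v^i_\ell$ with $h(v^i_\ell)=\{i\}\subseteq I$; so $\{u:h(u)\subseteq I\}\not\subseteq N(Y)$ unless $Y$ also contains an element with label $\{i\}$. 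For a concrete witness that $N(Y)$ is not determined by $I$: with $m=2$ and $|F_1\cap F_2|=1$, compare $Y_1=\{e\}$ with $s(e)=\{1,2\}$ (so $N(Y_1)=\{u_e\}$) against $Y_2=\{e_1,e_2\}$ with $s(e_1)=\{1\}$, $s(e_2)=\{2\}$ (so $N(Y_2)$ is all of $H$); both have $I=\{1,2\}$.

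The correct inclusion, which is what the paper uses, goes the other way: for $Y=X\cap F_I$ one has $N(Y)\subseteq\{u:h(u)\cap I\neq\emptyset\}$, and an inclusion--exclusion count (using $|\{u:i\in h(u)\}|=\rho(F_i)$ and $|\{u:J\subseteq h(u)\}|=|\bigcap_{j\in J}F_j|$ for $|J|\ge 2$) gives $|\{u:h(u)\cap I\neq\emptyset\}|=|F_I|-\sum_{i\in I}\eta(F_i)=\rho(F_I)$. This yields only the inequality $\mu(X)\le |X\setminus F_I|+\rho(F_I)$. The reverse direction --- that some matching of this size exists --- is exactly the step you flagged as the obstacle, and it does not follow from your (incorrect) description of $N(Y)$. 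The paper handles it by verifying Hall's condition for each $A\subseteq X$, splitting $A$ according to whether $|s(e)|=1$ or $|s(e)|\ge 2$ and using the dedicated nodes $u_e$ for the latter; you will need an argument of that shape, not the claimed set identity.
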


\begin{proof}
The proof is given in the Appendix.
\end{proof}

%\begin{algorithm} [!htb]
%\caption{From $(F_1,\ldots,F_m;k;\rho)$ to $G(F_1,\ldots,F_m;k;\rho)= (V,D,S,T)$ }\label{euclid}
%\label{alg:gammoid}
%\begin{algorithmic}[1] 
%\State \textit{Get the graph }$ G'(F_1,\ldots,F_m;\rho)= (V',D',S', T')$ \textit{ from Algorithm \ref{alg:transversal}}  
%\State $H = T'$, $S = S'$, $D = D'$
%\State $T = \{t_1,\ldots, t_k \} $
%\State $V = V' \cup T$
%\ForAll {$(u,t) \in H \times T$}
%\State $D \gets D \cup (\overrightarrow{u,t})$
%\EndFor
%\State $G(F_1,\ldots,F_m;k;\rho) = (V,D,S,T)$
%\end{algorithmic}
%\end{algorithm}

%
%\begin{lemma} \label{lemma:gammoid_graph}
%Let $F_1,\ldots,F_m$ be a collection of subsets of a finite set $E$, $k$ a nonnegative integer and $\rho: \{F_i\}_{i \in [m]} \rightarrow \mathbb{Z}$ a function such that the statements (i)--(iv) \eqref{eq:set_construction_2} are satisfied. Then the gammoid $M(G(F_1,\ldots,F_m;k;\rho))$, that we get from Algorithm \ref{alg:gammoid}, is equal to the matroid $M(F_1,\ldots,F_m;k;\rho)$ that we get in Theorem \ref{thm:construction}. 
%\end{lemma}
%

%\begin{proof}
%The proof is given in the Appendix.
%\end{proof}

\begin{exam}
Consider the matroid $M_G$, associated to the storage system in Figure~\ref{fig:cloud} and the code in Example~\ref{exam:matroid}, and whose lattice of cyclic flats is written out in Example~\ref{exam:Z}. By Lemma~\ref{lemma:gammoid_graph}, this is the gammoid associated to the following graph, with $|T|=k=6$ and $|S|=n=12$. 
Note that in this particular setting, Line 15 in Algorithm~\ref{alg:transversal} is superfluous, could be replaced by assigning $H=T$, since $H$ already has only $6$ nodes. Indeed, the inclusion of the bipartite graph $(H,T)$ corresponds to truncating the gammoid at rank $k$.
\begin{figure}[htb]
    \centering
    \includegraphics[width=0.7\textwidth]{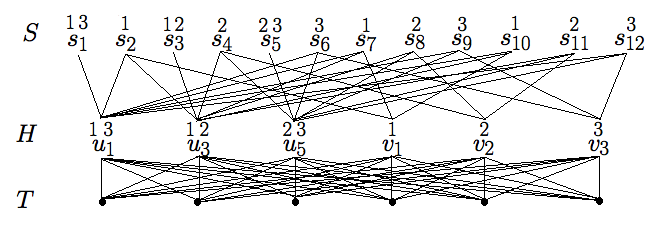}
    \caption{A downward directed graph supporting the matroid associated to a $(12,6,3,3,3)$-LRC.} 
    \label{fig:gammoid}
\end{figure}
\end{exam}

\subsection{Bounds on the parameters $(n,k,d,r,\delta)$ for LRCs}

In this section we will give results on the parameters $(n,k,d,r,\delta)$ for linear, and more generally almost affine LRCs. The results are direct consequences of the corresponding results for matroids, thanks to the representability results in Theorem \ref{thm:class_of_gammoids} and the matroid invariance of the parameters $(n,k,d,r,\delta)$, from Theorem \ref{thm:matroid_invariant}. We will therefore not give any further proofs in this section. Observe that this means that the same bounds are valid for matroids, almost affine codes, and linear codes. 

\begin{thm} \label{thm:singleton_almost_affine}
If $C$ is an almost affine LRCs with the parameters $(n,k,d,r,\delta)$, then
$$
d \leq n-k+1 -  \left ( \left \lceil \frac{k}{r} \right \rceil - 1 \right )(\delta-1).
$$
\end{thm}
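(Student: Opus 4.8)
The plan is to deduce Theorem~\ref{thm:singleton_almost_affine} directly from the matroid version already proved in Theorem~\ref{th:bound}, using the machinery set up in Section~\ref{sec:(n,k,d,r,delta)-matroids}. The point is that an almost affine LRC $C$ carries an associated matroid $M_C$, and Theorem~\ref{thm:matroid_invariant} tells us that the five parameters $(n,k,d,r,\delta)$ of the code coincide with the corresponding matroid parameters of $M_C$ in the sense of Definition~\ref{def:(n,k,d,r,delta)-matroid}. So the work is purely one of translation: check that the hypotheses of Theorem~\ref{th:bound} are met by $M_C$, apply it, and read the conclusion back as a statement about $C$.

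Concretely, I would proceed as follows. First, recall from Section~\ref{Sec:AlmostAffine} that $C\subseteq\mathbb{A}^n$ almost affine induces the matroid $M_C=(\rho_C,[n])$ with $\rho_C(X)=\log_s\lvert C_X\rvert$, and that $M_{C_X}=M_C\vert X$. Second, invoke Theorem~\ref{thm:matroid_invariant}: since $C$ is an $(n,k,d,r,\delta)$-LRC, we have $n=\lvert [n]\rvert$, $k=\rho_C([n])$, $d=\min\{\lvert X\rvert : \rho_C([n]\setminus X)<k\}$, and for each $j\in[n]$ a locality set $S_j$ with $j\in S_j$, $\lvert S_j\rvert\le r+\delta-1$, and $d(M_C\vert S_j)\ge\delta$. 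Thus $M_C$ is an $(n,k,d,r,\delta)$-matroid in the precise sense of Definition~\ref{def:(n,k,d,r,delta)-matroid}. Third, apply Theorem~\ref{th:bound} to $M_C$, which yields
$$
d \le n-k+1-\left(\left\lceil\frac{k}{r}\right\rceil-1\right)(\delta-1),
$$
and since these are exactly the parameters of $C$, the theorem follows.

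One minor point worth addressing explicitly is the degenerate situation: if $k=0$ then $d$ is not finite and the statement is vacuous (or trivially true under the convention that puts $d=\infty$), while the interesting case $k\ge 1$ gives a finite $d$; likewise one should note that the locality parameter $r$ satisfies $0<r\le k$ so that $\lceil k/r\rceil$ is well defined. These are all covered by the discussion following Definition~\ref{def:(n,k,d,r,delta)-matroid} and by Proposition~\ref{proposition:parameters_defined}, so no new argument is needed. In fact, as remarked in the paragraph preceding this theorem, no proof at all needs to be written here beyond this translation: the entire content has been proved at the level of matroids.

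Honestly, there is no real obstacle in this theorem — it is a corollary harvested from the hard work done earlier (the structure of the lattice of cyclic flats, Theorem~\ref{theorem:(n,k,d,r,delta)-Z}, Lemma~\ref{lemma:fundamental_chain}, and Theorem~\ref{th:bound}). If I had to name the ``main step'', it would be making sure that the definition of the code parameters $(r,\delta)$-locality via projections $C_{S_j}$ matches, after taking $\log_s$, the matroid-theoretic condition $d(M_C\vert S_j)\ge\delta$ on restrictions; but this matching is precisely the content of Theorem~\ref{thm:matroid_invariant}~(iii), so it can simply be cited. Hence my proof proposal is essentially a two-line invocation: pass to $M_C$, apply Theorem~\ref{th:bound}.
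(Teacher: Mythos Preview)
Your proposal is correct and matches the paper's approach exactly: the paper explicitly states that Theorem~\ref{thm:singleton_almost_affine} is a direct consequence of Theorem~\ref{th:bound} together with the matroid invariance of the parameters established in Theorem~\ref{thm:matroid_invariant}, and gives no further proof. Your two-line invocation---pass to $M_C$, apply Theorem~\ref{th:bound}---is precisely what the paper intends.
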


\begin{proposition} \label{prop:delta_k_rate_LRC}
Let $C$ be an almost affine LRC with parameters $(n,k,d,r,\delta)$. Then 
\begin{enumerate}[(i)]
\item $\delta \leq d$,
\item $k \leq n - \left \lceil \frac{k}{r} \right \rceil (\delta - 1)$,
\item $\frac{k}{n} \leq \frac{r}{r + \delta - 1}$.
\end{enumerate}
\end{proposition}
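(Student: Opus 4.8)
The plan is to deduce Proposition~\ref{prop:delta_k_rate_LRC} directly from its matroid analogue, Proposition~\ref{proposition:bound_delta_k_rate}, exactly as was done for Theorem~\ref{thm:singleton_almost_affine}. The logical chain is short: by Theorem~\ref{thm:matroid_invariant}, an almost affine $(n,k,d,r,\delta)$-LRC $C$ has an associated matroid $M_C = (\rho_C,[n])$ whose matroid parameters $(n,k,d,r,\delta)$ coincide with the code parameters; in particular $n = \lvert [n] \rvert$, $k = \rho_C([n])$, $d$ is the minimum distance as given in Theorem~\ref{thm:matroid_invariant}~(ii), and $M_C$ has $(r,\delta)$-locality in the sense of Definition~\ref{def:(n,k,d,r,delta)-matroid}~(iv). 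Since $k > 0$ (otherwise $C$ is trivial) and every element lies in a locality set, Proposition~\ref{proposition:parameters_defined} guarantees $M_C$ is a genuine $(n,k,d,r,\delta)$-matroid with finite parameters, so Proposition~\ref{proposition:bound_delta_k_rate} applies verbatim to $M_C$.

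First I would invoke Theorem~\ref{thm:matroid_invariant} to pass from $C$ to $M_C$ and record that the five parameters agree. Then I would apply Proposition~\ref{proposition:bound_delta_k_rate}~(i), (ii), (iii) to $M_C$, which immediately yields $\delta \leq d$, $k \leq n - \lceil k/r \rceil(\delta - 1)$, and $k/n \leq r/(r+\delta-1)$ for the code. As the excerpt itself notes (``The results are direct consequences of the corresponding results for matroids... We will therefore not give any further proofs in this section''), no further argument is needed; one could simply write ``The statements follow from Theorem~\ref{thm:matroid_invariant} and Proposition~\ref{proposition:bound_delta_k_rate}.''

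There is essentially no obstacle here: the entire content has been front-loaded into the matroid invariance theorem and the matroid-level bound. The only thing to be slightly careful about is the degenerate case $k = 0$, where $d$ is not finite — but an LRC with $\lvert C \rvert = \lvert \mathbb{A} \rvert^0 = 1$ is vacuous and can be excluded, or handled by noting all three inequalities hold trivially (with $d = \infty$, $\lceil k/r\rceil = 0$). The ``hard part'', such as it is, lives entirely in the appendix proof of Proposition~\ref{proposition:bound_delta_k_rate}, which this proposition simply transports to codes.

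\begin{proof}
Let $C$ be an almost affine $(n,k,d,r,\delta)$-LRC. If $k=0$ the three inequalities hold trivially, so assume $k>0$. By Theorem~\ref{thm:matroid_invariant}, the associated matroid $M_C=(\rho_C,[n])$ is an $(n,k,d,r,\delta)$-matroid with the same parameters. Since $0<\rho_C([n])=k$ and every element of $[n]$ lies in an $(r,\delta)$-locality set, Proposition~\ref{proposition:parameters_defined} shows $1_{\mathcal{Z}}=[n]$, so Proposition~\ref{proposition:bound_delta_k_rate} applies to $M_C$ and gives $\delta\leq d$, $k\leq n-\lceil\frac{k}{r}\rceil(\delta-1)$, and $\frac{k}{n}\leq\frac{r}{r+\delta-1}$.
\end{proof}
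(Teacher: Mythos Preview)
Your proposal is correct and matches the paper's approach exactly: the paper explicitly states that the results in this section ``are direct consequences of the corresponding results for matroids, thanks to \ldots\ the matroid invariance of the parameters $(n,k,d,r,\delta)$, from Theorem~\ref{thm:matroid_invariant}'' and gives no further proof. Your write-up is, if anything, slightly more careful than the paper in invoking Proposition~\ref{proposition:parameters_defined} and noting the degenerate $k=0$ case.
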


\begin{thm} \label{thm:non_existence_LRC}
Let $C$ be an almost affine LRC with parameters $(n,k,r,d,\delta)$, and let $a = \left \lceil \frac{k}{r} \right \rceil r - k$ and $b = \left \lceil \frac{n}{r+ \delta - 1} \right \rceil (r+ \delta - 1) - n$. Then the following hold.
\begin{enumerate}[(i)]
\item If $b>a$ and $a < \left \lfloor \left \lceil \frac{k}{r} \right \rceil / 2\right \rfloor$, then $d < n-k+1-\left (\left \lceil \frac{k}{r} \right \rceil - 1 \right) (\delta - 1)$;\\
\item If $b>a$ and $\left \lfloor \left \lceil \frac{k}{r} \right \rceil / 2\right \rfloor \leq a \leq \kr - 1$, then $d < n-k+1-\left (\left \lceil \frac{k}{r} \right \rceil - 1 \right) (\delta - 1)$.
\end{enumerate}
\end{thm}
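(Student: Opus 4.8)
The plan is to reduce everything to matroids and then invoke Theorem~\ref{th:structure-optimal-matroid} (and its Corollary~\ref{corollary:structure-optimal-matroid}) for a contradiction. Since every almost affine LRC $C$ has an associated matroid $M_C$ with the same parameters $(n,k,d,r,\delta)$ by Theorem~\ref{thm:matroid_invariant}, it suffices to prove that no \emph{perfect} $(n,k,d,r,\delta)$-matroid exists in the stated parameter ranges; equivalently, that $\dm(n,k,r,\delta) < n-k+1-(\lkr-1)(\delta-1)$ whenever $b>a$ and $a \leq \lkr - 1$. (Note $r<k$ here, since $r=k$ forces $\lkr=1$ and $a=0\geq b$, contradicting $b>a$; so we are in the $r<k$ regime where Theorem~\ref{th:structure-optimal-matroid} applies.) So assume for contradiction that $M=(\rho,E)$ is a perfect $(n,k,d,r,\delta)$-matroid.

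First I would apply Corollary~\ref{corollary:structure-optimal-matroid} to extract a family of cyclic flats $F_1,\dots,F_m$ with a non-trivial union, each of size $\leq r+\delta-1$ and nullity exactly $\delta-1$, covering $E$, and satisfying the intersection bounds (vi)--(vii). Counting nullities: since $\eta(E)=n-k$ and, for any $I$ with $|I|=\lkr$, the flats $\{F_i\}_{i\in I}$ have $\bigvee_{i\in I}F_i=E$ with $\eta(E)=n-k\geq\lkr(\delta-1)$ forced by the $j\geq\lkr$ branch of (c) — while for $|I|=\lkr-1$ the join is the \emph{union} with nullity exactly $(\lkr-1)(\delta-1)$ — one gets a rigid combinatorial picture. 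The key numeric inputs are $k = \lkr r - a$ and $n = \lceil\frac{n}{r+\delta-1}\rceil(r+\delta-1) - b$, together with $n - k = \lkr(\delta-1) + (\text{something nonnegative})$ controlled by Proposition~\ref{proposition:bound_frac_for_n}. The plan is to turn the structure into a graph/weight count: associate to the family an auxiliary graph on $[m]$ with an edge $\{i,j\}$ of weight $|F_i\cap F_j|$ whenever the intersection is nonempty, show (using (iv), i.e.\ uniformity of each $M|F_i$, plus the non-trivial-union and intersection bounds) that this graph has girth $>\lkr$, bound $n = \sum|F_i| - \sum_{\text{edges}}(\text{weights})$ from below, and compare with $n = \lceil\frac{n}{r+\delta-1}\rceil(r+\delta-1)-b$ to force $m$ (hence $\lceil\frac{n}{r+\delta-1}\rceil$) to be larger than the quantitative thresholds appearing in Theorem~\ref{thm:max_d}(iii)--(v) — except in exactly the ranges excluded there. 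For part~(i), the condition $a<\lfloor\lkr/2\rfloor$ is precisely the failure of the ``$\lfloor\lkr/2\rfloor\leq a$'' hypothesis in Theorem~\ref{thm:max_d}(iii), so a perfect matroid cannot exist regardless of $n$. For part~(ii), with $\lfloor\lkr/2\rfloor\leq a\leq\lkr-1$, I would split according to $a<\lkr-1$ versus $a=\lkr-1$ and $\lkr\geq 3$ versus $\lkr=2$, invoking the relevant case of Theorem~\ref{thm:max_d}, and argue the threshold on $\lceil\frac{n}{r+\delta-1}\rceil$ fails here — the point being that in this sub-regime $b>a$ already contradicts the required lower bound on $\lceil\frac{n}{r+\delta-1}\rceil$ once one plugs in the relationships among $a,b,r,\lkr$.

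The honest approach, given that Theorem~\ref{thm:max_d} is already proved in the Appendix, is simply: \emph{for part~(i)}, observe that $a<\lfloor\lkr/2\rfloor$ makes the biconditional in Theorem~\ref{thm:max_d}(iii) false on the left (when $a<\lkr-1$) and is vacuous otherwise since $\lfloor\lkr/2\rfloor\leq\lkr-1$ always, so $\dm$ is strictly below the Singleton bound; \emph{for part~(ii)}, the range $\lfloor\lkr/2\rfloor\leq a\leq\lkr-1$ with $b>a$ is exactly where Theorem~\ref{thm:max_d}(iii)--(v) give quantitative existence thresholds, and one checks that ``$b>a$'' is incompatible with those thresholds — i.e.\ in every case the needed $\left\lceil\frac{n}{r+\delta-1}\right\rceil$-bound is violated because $b>a$ makes the right-hand side exceed $\left\lceil\frac{n}{r+\delta-1}\right\rceil$, using $n=\left\lceil\frac{n}{r+\delta-1}\right\rceil(r+\delta-1)-b$ and the defining relation $n-k\geq\lkr(\delta-1)$. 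The main obstacle I anticipate is the bookkeeping in part~(ii): verifying that the ceiling/floor thresholds of Theorem~\ref{thm:max_d}(iii)--(v) are all violated under the single clean hypothesis $b>a$, $a\leq\lkr-1$, which requires carefully tracking the relation between $n$, $b$, $\lkr$, and the auxiliary quantities $s,t,u$; these are routine but delicate inequality manipulations rather than conceptual difficulties.
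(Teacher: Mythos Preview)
Your reduction to matroids via Theorem~\ref{thm:matroid_invariant} and the appeal to Theorem~\ref{thm:max_d} is exactly the paper's approach: the paper gives no separate proof in this section, stating that the results ``are direct consequences of the corresponding results for matroids''. Your argument for part~(i) is correct and matches this: when $b>a$ and $a<\lfloor\lkr/2\rfloor$, we have in particular $a<\lkr-1$ (since $r<k$ forces $\lkr\geq 2$), so Theorem~\ref{thm:max_d}(iii) applies and its biconditional fails on the right, giving $\dm<n-k+1-(\lkr-1)(\delta-1)$.

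Your argument for part~(ii), however, has a genuine gap, and the ``routine but delicate'' bookkeeping you anticipate cannot be completed. You assert that the hypothesis $b>a$ alone forces the threshold inequalities on $\lceil n/(r+\delta-1)\rceil$ in Theorem~\ref{thm:max_d}(iii)--(v) to fail. This is false. Fix any $(k,r,\delta)$ with $\lfloor\lkr/2\rfloor\leq a<\lkr-1$ and pick any $b>a$; the right-hand side of the inequality in Theorem~\ref{thm:max_d}(iii) depends only on $a$, $b$, and $\lkr$, while the left-hand side $\lceil n/(r+\delta-1)\rceil$ can be made arbitrarily large by increasing $n$ in steps of $r+\delta-1$ (which leaves $b$ unchanged). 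Concretely: take $r=5$, $k=18$, $\delta=2$, so $\lkr=4$, $a=2$, $t=2$; take $b=4>a$ and $n=32$, so $\lceil n/6\rceil=6\geq 3+(b-a)(1+1/t)=6$. Then Theorem~\ref{thm:max_d}(iii) gives $\dm$ equal to the Singleton bound, directly contradicting the conclusion of part~(ii). The same issue arises at $a=\lkr-1$, where Theorem~\ref{thm:max_d}(iv)--(v) supply existence for suitable $n$.

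In other words, part~(ii) as stated is inconsistent with Theorem~\ref{thm:max_d}(iii) and with Theorem~\ref{thm:max_d_LRC}(iii), which asserts existence of perfect linear LRCs in exactly this parameter range for sufficiently large $\lceil n/(r+\delta-1)\rceil$. The statement of part~(ii) appears to be missing an additional hypothesis on $n$ (namely the negation of the threshold inequality), and no argument along the lines you propose can establish it as written.
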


\begin{thm} \label{thm:max_d_LRC}
Let $(n,k,r,\delta)$ be integers such that $0 < r < k \leq n- \left \lceil \frac{k}{r} \right \rceil(\delta-1)$, $a = \left \lceil \frac{k}{r} \right \rceil r - k$ and $b = \left \lceil \frac{n}{r+ \delta - 1} \right \rceil (r+ \delta - 1) - n$. Let $\dm = \dm (n,k,r,\delta)$ be the largest $d$ such that there exists a linear LRC with parameters $(n,k,d,r,\delta)$. Then the following hold.
\begin{enumerate}[(i)]
\item If $a \geq b$, then $\dm = n-k+1-\left (\left \lceil \frac{k}{r} \right \rceil - 1 \right) (\delta - 1)$;
\item If $b > a$, then 
$$\dm \geq \left \{
		   \begin{array}{lcl}
			 n-k+1- \left \lceil \frac{k}{r} \right \rceil (\delta - 1) &\hbox{if}&  b \leq r-1,\\
			 n-k+1- \left \lceil \frac{k}{r} \right \rceil (\delta - 1) + (b - r) &\hbox{if}&  b \geq r;
			 \end{array}
		   \right .$$

\item If $b > a$, $\left \lfloor \left \lceil \frac{k}{r} \right \rceil / 2\right \rfloor \leq a < \left \lceil \frac{k}{r} \right \rceil - 1$ and $\left \lceil \frac{n}{r+\delta-1} \right \rceil \geq \left \lceil \frac{k}{r} \right \rceil -1 + \left(b-a\right)\left(1+\frac{1}{t}\right)$,
where $t= \left\lfloor a/\left(\left \lceil \frac{k}{r} \right \rceil -1-a\right)\right\rfloor$, then $$\dm = n-k+1-\left (\left \lceil \frac{k}{r} \right \rceil - 1 \right) (\delta - 1);$$

\item If $b > a \geq \left \lceil \frac{k}{r} \right \rceil - 1$, $\left \lceil \frac{k}{r} \right \rceil \geq 3$ and $\left \lceil \frac{n}{r+\delta-1} \right \rceil \geq \left \lfloor \frac{b}{stu} \right \rfloor \left ( t (u-1) + 2 \right) + y$, \\
where $s = \left \lfloor \frac{a}{\lceil \frac{k}{r} \rceil - 1} \right \rfloor$, $t = \left \lfloor \frac{r-1}{s} \right \rfloor$, $u = \left \lceil \frac{\kr + 1}{2}\right \rceil$, $x = \left \lceil \frac{b - \lfloor \frac{b}{stu} \rfloor stu}{s} \right \rceil$ and   
$$	  y = 
\left \{
\begin{array}{lcl}
0 &\hbox{if}& stu \mid b,\\
x - \left \lfloor \frac{x}{u} \right \rfloor + 1 +\min\{\left \lfloor \frac{x}{u} \right \rfloor, 1\} &\hbox{if}& stu \nmid b,\\
\end{array}
\right .$$
then $$\dm = n-k+1-\left (\left \lceil \frac{k}{r} \right \rceil - 1 \right) (\delta - 1);$$

\item If $b > a \geq \left \lceil \frac{k}{r} \right \rceil - 1$, $\left \lceil \frac{k}{r} \right \rceil = 2$ and $$\left \lceil \frac{n}{r+\delta- 1} \right \rceil \geq 
\left \{
\begin{array}{lcl}
\lceil \frac{b}{a}\rceil + 1 &\hbox{if}& 2a \leq r-1,\\
\left \lceil \frac{b}{\lfloor \frac{r-1}{2} \rfloor} \right \rceil + 1 &\hbox{if}& 2a > r-1,\\
\end{array}
\right . $$
then $$\dm = n-k+1-\left (\left \lceil \frac{k}{r} \right \rceil - 1 \right) (\delta - 1).$$
\end{enumerate}
\end{thm}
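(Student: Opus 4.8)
The plan is to deduce Theorem~\ref{thm:max_d_LRC} directly from its matroid counterpart Theorem~\ref{thm:max_d}, using the two bridges already established. First, by Theorem~\ref{thm:matroid_invariant} every linear LRC with parameters $(n,k,d,r,\delta)$ carries an $(n,k,d,r,\delta)$-matroid; hence any \emph{upper} bound on $d$ valid for all such matroids --- in particular the generalized Singleton bound of Theorem~\ref{th:bound} and the nonexistence parts implicit in Theorem~\ref{thm:max_d}(iii)--(v) --- transfers verbatim to linear LRCs. This delivers the ``$\leq$'' direction of the equalities in (i) and (iii)--(v), where in each case the ceiling is $n-k+1-(\kr-1)(\delta-1)$, already reached at the matroid level. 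Second, by Theorem~\ref{thm:class_of_gammoids} every matroid $M(F_1,\ldots,F_m;k;\rho)$ of Theorem~\ref{thm:construction} is a gammoid (Lemma~\ref{lemma:gammoid_graph}), hence representable over all large enough finite fields (Theorem~\ref{thm:gammoid_representable}); so any \emph{lower} bound on $\dm$ obtained for matroids by an explicit construction of this form is inherited by linear LRCs.

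With these two facts, I would match each assertion of Theorem~\ref{thm:max_d_LRC} to the corresponding line of Theorem~\ref{thm:max_d}. For (i), the perfect matroid exhibited in the proof of Theorem~\ref{thm:max_d}(i) (cf.\ Example~\ref{exam:graph_construction_1}) is an instance of Theorem~\ref{thm:construction}, so it yields a perfect linear LRC. For (iii), (iv) and (v), the perfect matroids built in the respective parts of Theorem~\ref{thm:max_d} out of the weighted-graph constructions of Corollary~\ref{theorem:graph_2} and Corollary~\ref{corollary:graph_2} (cf.\ Examples~\ref{exam:graph_construction_2a} and \ref{exam:graph_construction_2b}) are again of the form $M(F_1,\ldots,F_m;k;\rho)$, so they produce perfect linear LRCs under exactly the stated numerical hypotheses. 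For (ii) with $b\geq r$, the matroid of Theorem~\ref{thm:max_d}(ii) attaining $d=n-k+1-\kr(\delta-1)+(b-r)$ is of the same form, giving the claimed linear LRC.

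The only genuinely new ingredient is part (ii) in the range $b\leq r-1$, which falls outside the scope of Theorem~\ref{thm:max_d}. Here I would supply a direct instance of Theorem~\ref{thm:construction}: take $m=\lceil\frac{n}{r+\delta-1}\rceil$ pairwise disjoint atomic flats, with $F_1,\ldots,F_{m-1}$ of size $r+\delta-1$ and rank $r$, and $F_m$ of size $(r+\delta-1)-b$ and rank $r-b$; this last rank is positive precisely because $b\leq r-1$, and every $F_i$ then has nullity $\delta-1$. Conditions (i)--(iv) of Section~\ref{sec:general} are routine (disjointness trivializes (iv), and (iii) reduces to $(m-\kr)r\geq b-a$, which holds since $m\geq\kr+1$ by Proposition~\ref{proposition:bound_frac_for_n} and $b-a\leq b\leq r-1$), and Theorem~\ref{thm:construction}(iii) then gives $d=n-k+1-\kr(\delta-1)$: since all $\eta(F_i)=\delta-1$, the relevant maximum equals $(\delta-1)$ times the largest $|I|$ with $\rho(F_I)<k$, and this is $\kr$, attained by $\kr-1$ of the full flats together with $F_m$ (total rank $k+a-b<k$). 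Being a gammoid, this matroid yields the desired linear LRC.

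The main obstacle is therefore not conceptual but a matter of careful bookkeeping: one must verify that each extremal matroid occurring in the proof of Theorem~\ref{thm:max_d} is literally an $M(F_1,\ldots,F_m;k;\rho)$, so that Lemma~\ref{lemma:gammoid_graph} and Theorem~\ref{thm:gammoid_representable} genuinely apply, and one must check the numerics of the supplementary $b\leq r-1$ construction against Proposition~\ref{proposition:bound_frac_for_n}. Once these verifications are in place, Theorem~\ref{thm:max_d_LRC} follows with no further work.
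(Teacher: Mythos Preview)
Your proposal is correct and follows essentially the same approach as the paper, which simply states that Theorem~\ref{thm:max_d_LRC} is a direct consequence of the matroid results via Theorem~\ref{thm:matroid_invariant} (for upper bounds) and Theorem~\ref{thm:class_of_gammoids} (for representability of the constructions). Your ``genuinely new ingredient'' for the range $b\leq r-1$ in part~(ii) is in fact already in the paper as Theorem~\ref{thm:almosttight}: the matroid you build (disjoint $F_1,\ldots,F_{m-1}$ of size $r+\delta-1$ and rank $r$, together with $F_m$ of size $r+\delta-1-b$ and rank $r-b$) is exactly the one constructed there, so no additional argument is needed beyond noting that this too is an instance of Theorem~\ref{thm:construction}.
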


Just like in the remark below Theorem \ref{thm:max_d}, a simpler bound, but in general not as good, in Theorem \ref{thm:max_d_LRC}(iv) is $\left \lceil \frac{n}{r+\delta-1} \right \rceil \geq \left \lceil \frac{b}{stu} \right \rceil \left ( t (u-1) + 2 \right )$.

It was proven in \cite{kamath14} Corollary 2.3 that linear LRCs with all-symbol locality in the case when $r|k$ and $r+\delta - 1 \nmid n$ cannot achieve the Singleton-type bound given in Theorem \ref{thm:singleton_almost_affine}. This corresponds to the case $a = 0$ and $b>0$ in Theorem \ref{thm:non_existence_LRC}. Hence, by Theorem ~\ref{thm:max_d_LRC}~(ii), we obtain that 
$$
\dm=n-k+1 - \left (\left \lceil \frac{k}{r} \right \rceil - 1 \right )(\delta - 1) - 1,
$$
for linear $(n,k,d,r,\delta)$-LRCs when $r|k$ and $b=r+\delta-2$.

\section{Conclusions}
Recent progress in coding theory has proven matroid theory to be a valuable tool in many different contexts. This trend carries over to locally repairable codes. Especially the lattice of cyclic flats is a useful object to study, as its elements correspond to the local repair sets. 

We have thoroughly studied linear and more generally almost affine LRCs with all-symbol locality, as well as the connections of  these codes to matroid theory. We derived a generalized Singleton bound for matroids and nonexistence results for certain classes of $(n,k,d,r,\delta)$-matroids. These results can then be directly translated to nonexistence results for almost affine LRCs. 

Further, we have given several constructions of matroids with prescribed values of the parameters $(n,k,d,r,\delta)$. Using these matroid constructions, novel constructions of linear LRCs are given, using the representability of gammoids. Several classes of optimal linear LRCs then arise from these constructions. 

As future work, (non)existence results for matroids and linear and almost affine LRCs achieving the generalized Singleton bound remain open for certain classes of parameters $(n,k,r,\delta)$, when $\kr - 1 \leq a < b$. In addition, the size of the underlying finite field that our linear $(n,k,d,r,\delta)$-LRCs can be constructed over is left for future research. We expect that the upper bound $2^n$ arising from the related bound for all gammoids is loose for our class of matroids. We conjecture that all our matroids from Section~\ref{subsec:4_constructions_matroid} are representable over fields of size polynomial in $n$.

%\bibliographystyle{IEEE}	
%\bibliography{myrefs_special}	

\begin{thebibliography}{10}
\begin{comment}
\providecommand{\url}[1]{#1}
\csname url@rmstyle\endcsname
\providecommand{\newblock}{\relax}
\providecommand{\bibinfo}[2]{#2}
\providecommand\BIBentrySTDinterwordspacing{\spaceskip=0pt\relax}
\providecommand\BIBentryALTinterwordstretchfactor{4}
\providecommand\BIBentryALTinterwordspacing{\spaceskip=\fontdimen2\font plus
\BIBentryALTinterwordstretchfactor\fontdimen3\font minus
  \fontdimen4\font\relax}
\providecommand\BIBforeignlanguage[2]{{%
\expandafter\ifx\csname l@#1\endcsname\relax
\typeout{** WARNING: IEEEtran.bst: No hyphenation pattern has been}%
\typeout{** loaded for the language `#1'. Using the pattern for}%
\typeout{** the default language instead.}%
\else
\language=\csname l@#1\endcsname
\fi
#2}}
\end{comment}

\bibitem{tassie} T. Westerb\"ack, T. Ernvall, and C. Hollanti, ``Almost affine locally repairable codes and matroid theory,'' \emph{2014 IEEE Inf. Theory Workshop (ITW)}, 2014.

\bibitem{dimakis} A. G. Dimakis, P. B. Godfrey, Y. Wu, M. J. Wainwright, and K. Ramchandran,  ``Network coding for distributed storage systems,'' \emph{IEEE Trans. Inf. Theory}, 56(9), pp. 4539--4551, September 2010.

\bibitem{ghemawat03} S. Ghemawat, H. Gobioff, and S. T. Leung,
``The Google file system,`` In \emph{SOSP03, Proceedings of the nineteenth ACM symposium on Operating systems principles}, 2003.


\bibitem{me} T. Ernvall, S. El Rouayheb, C. Hollanti, and H. V. Poor, ``Capacity and security of heterogeneous distributed storage systems,'' \emph{IEEE J. Sel. Areas Comm.}, 31(12), pp. 2701--2709, Dec. 2013.

\bibitem{sasidharan} B. Sasidharan, and  P. V. Kumar,  ``High-rate regenerating codes through layering,'' \emph{2013 IEEE Int. Symp.  Inf. Theory (ISIT)}, pp. 1611--1615, 2013.

\bibitem{tian} C. Tian, V. Aggarwal, and V. A. Vaishampayan, ``Exact-repair regenerating codes via layered erasure correction and block designs,'' \emph{2013 IEEE Int. Symp.  Inf. Theory (ISIT)} ,  pp. 1431--1435, 2013.


\bibitem{toni} T. Ernvall, ``Codes between MBR and MSR points with exact repair property,'' \emph{IEEE Trans. Inf. Theory}, 60(11), pp. 6993--7005, Nov. 2014.

\bibitem{goparaju} S. Goparaju, S. El Rouayheb, and R. Calderbank, ``New codes and inner bounds for exact repair in distributed storage systems,''  \emph{2014 IEEE Int. Symp.  Inf. Theory (ISIT)}, pp. 1036--1040, 2014.

\bibitem{diskIO} I. Tamo, Z. Wang, and J. Bruck,  ``MDS array codes with optimal rebuilding,''  \emph{2011 IEEE Int. Symp.  Inf. Theory (ISIT)}, pp. 1240--1244, 2011.

\bibitem{Gopalan} P. Gopalan, C. Huang, H. Simitci, and S. Yekhanin,  ``On the locality of codeword symbols,'' \emph{IEEE Trans. Inf. Theory}, 58(11), pp. 6925--6934, 2012.

\bibitem{Oggier} F. Oggier and A. Datta,  ``Self-repairing homomorphic codes for distributed storage systems,'' \emph{2011 IEEE INFOCOM}, pp. 1215--1223.

\bibitem{Simple} D. S. Papailiopoulos, J. Luo, A. G. Dimakis, and C. Huang, J. Li  ``Simple regenerating codes: Network coding for cloud storage,'' \emph{2012 IEEE INFOCOM}, pp. 2801--2805.

\bibitem{LRCpapailiopoulos} D. S. Papailiopoulos, and A. G. Dimakis,  ``Locally repairable codes,'' \emph{2012 IEEE Int. Symp. Inf. Theory  (ISIT)}, pp. 2771--2775.

\bibitem{LRCmatroid} I. Tamo, D. S. Papailiopoulos, and A. G. Dimakis,  ``Optimal locally repairable codes and connections to matroid theory,'' \emph{2013 IEEE Int. Symp. Inf. Theory (ISIT)}, pp. 1814--1818.

\bibitem{simonis98} J. Simonis and A. Ashikhmin, ``Almost affine codes'',
\emph{Design, codes and cryptography}, 14, pp. 179--197, 1998.

\bibitem{whitney35} H. Whitney, `On the abstract properties of linear dependence'', \emph{Amer. J. Math}, 57, pp. 509--533, 1935.

\bibitem{greene76} C. Greene, `Weight enumeration and the geometry of linear codes'', \emph{Stud. Appl. Math}, 55, pp. 119--128, 1976.

\bibitem{macwilliams63} F. J. MacWilliams 'A theorem on the distribution of weights in a systematic code', \emph{Bell Syst. Tech J.}, 42, pp. 79--94, 1963.

\bibitem{barg97} A. Barg,
``The matroid supports of a linear code'', \emph{Appl. Algebra Engrg. Comm. Comput.}, 8, pp. 165--172, 1997.

\bibitem{britz10} T. Britz,
``Code enumerators and Tutte polynomials'', \emph{IEEE Trans. Inf. Theory}, 56, pp. 4350--4358, 2010.

\bibitem{kashyap08} N. Kashyap,
``A decomposition theory for binary linear codes'', \emph{IEEE Trans. Inf. Theory}, 54, pp. 3035--3058, 2008.

\bibitem{dougherty07} R. Dougherty, C. Freiling, and K. Zeger,
``Networks, matroids, and non-Shannon information inequalities'', \emph{IEEE Trans. Inf. Theory}, 53(6), pp. 1949--1969, 2007.

\bibitem{marti07} J. Mart\'i-Farr\'e and C. Padr\'o,
``On secret sharing schemes, matroids and polymatroids'', In S. Vadhan edt., \emph{4th Theory of Crypt. Conf. TCC 2007, Lecture Notes in Computer Science},  vol. 4392, pp. 253--272, 2007.

\bibitem{rouayheb10} S. El Rouayheb, A.Sprintson, and C. Georghiades, `On the index coding problem and its relation to network coding and matroid theory'', \emph{IEEE Trans. Inf. Theory}, 56(7), pp. 3187--3195, 2010.

\bibitem{Singleton} R. C. Singleton, ``Maximum distance $q$-nary codes'',
\emph{IEEE Trans. Inf. Theory}, 10, pp. 116--118, 1964.

\bibitem{prakash12} N. Prakash, G. M. Kamath, V. Lalitha, and P. V. Kumar,  
``Optimal linear codes with a local-error-correction property'',  \emph{2012 IEEE Int. Symp.  Inf. Theory (ISIT)}, pp. 2776 -- 2780.

\bibitem{prakash14} N. Prakash, V. Lalitha, and P. V. Kumar,  
``Codes with locality for two erasures'',  \emph{2014 IEEE Int. Symp.  Inf. Theory (ISIT)}, pp. 1962 -- 1966.

\bibitem{wang15} A. Wang, and Z. Zhang,  
``An integer programming-based bound for locally repairable codes''  \emph{IEEE Trans. Inf. Theory}, pp. 5280 -- 5294, 2015.

\bibitem{cadambe13} V. Cadambe and A. Mazumdar,
``An upper bound on the size of locally recoverable codes'', In \emph{Proc. IEEE Symp. Netw. Coding}, pp. 1--5, Jun. 2013.

\bibitem{rawat14} A. S. Rawat, O. O. Koyluoglu, N. Silberstein, and S. Vishwanath,
``Optimal locally repairable and secure codes for distributed storage systems'', \emph{IEEE Trans. Inf. Theory}, 60(1), pp. 212--236, 2014.

\bibitem{silberstein13} N. Silberstein, A. S. Rawat, O. O. Koyluoglu, and S. Vishwanath, 
``Optimal locally repairable codes via rank-metric codes,'' \emph{2013 IEEE Int. Symp. Inf. Theory  (ISIT)}, pp. 1819--1823.

\bibitem{song14} W. Song, S. H. Dau, C. Yuen, and T. J. Li,
``Optimal locally repairable linear codes'',\emph{IEEE J. Sel. Areas Comm.}, 32(5), pp. 1019--1036, 2014.

\bibitem{tamo14} I. Tamo and A. Barg, ``A family of optimal locally recoverable codes'' \emph{IEEE Trans. Inf. Theory}, 60(8), pp. 4661--4676, 2014


%\bibitem{britz05} T. Britz and C. G. Rutherford,
%``Covering radii are not matroid invariants'', \emph{Discrete Math.}, 296, pp. 117--120, 2005.

\bibitem{britz12} T. Britz, T. Johnsen, D. Mayhew, and K. Shiromoto,
``Wei-type duality theorems for matroids'', \emph{Designs, Codes and Cryptography}, 62, pp. 331--341, 2012.

\bibitem{eka-lrc} T. Ernvall, T. Westerb\"ack, C. Hollanti, and R. Freij, ``Constructions and properties of linear locally
repairable codes,'' submitted, arxiv:1410.6339.

\bibitem{lindstrom73} B. Lindstr{\"o}m,  ``On the vector representations of induced matroids'' \emph{Bull. London Math. Soc.}, 5, pp. 85--90, 1973.

\bibitem{stanley11} R. Stanley ``Enumerative combinatorics, vol 1''
\emph{2:ed Cambridge University Press}, 2011.

\bibitem{oxley92} J. Oxley, ``Matroid Theory''
\emph{Oxford Graduate Texts in Mathematics}, 3rd ed., Oxford University Press, 1992.

\bibitem{matus99} F. Mat\'u\^s,
``Matroid representation by partitions'', \emph{Discrete Math.}, 203, pp. 169--194, 1999.

\bibitem{bonin08} J. E. Bonin and A. de Mier,
``The lattice of cyclic flats of a matroid'', \emph{Annals of Combinatorics}, 12, pp. 155--170, 2008. 

\bibitem{shoda12} K. Shoda, ``Large families of matroids with the same Tutte polynomial'',
\emph{PhD thesis, The Geroge Washington University}, August 2012.

\bibitem{kamath14} Govinda M. Kamath, N. Prakash, V. Lalitha, and P. Vijay Kumar, ``Codes With Local Regeneration
and Erasure Correction'', \emph{IEEE Trans. Inf. Theory}, 60(8), pp. 4637--4660, 2014.

\bibitem{schrijver03} A. Schrijver, ``Combinatorial Optimization: Polyhedra and Efficiency. Vol B: Matroids, Trees, Stable Sets''
\emph{Algorithms and Combinatorics 24}, Springer-Verlag, 2003.

\bibitem{Hall} P. Hall, ``On Representatives of Subsets'',
\emph{J. London Math. Soc.},10, pp. 26--30, 1935. 

%\bibitem{brylawski86} T. H. Brylawski, 
%``Constructions'', \emph{Theory of Matroids} (edited by N. White), %Cambridge University Press, pp. 127--223, 1986.







%\bibitem{chan13} T. Chan, A. Grant and T. Britz,
%``Quasi-uniform codes and their applications'', \emph{IEEE Trans. Inf. Theory}, 59(12), pp. 7915--7926, December 2013.









%\bibitem{thomas13} E. K. Thomas and F. Oggier, 
%``Explicit constructions of quasi-uniform codes from groups'' \emph{2013 IEEE Int. Symp. Inf. Theory  (ISIT)}, pp. 489--493.




%\bibitem{TamoBarg} I. Tamo, A. Barg,  ``A family of optimal locally recoverable codes,'' \emph{arXiv:1311.3284}, 2013.

%\bibitem{Pyramid} C. Huang, M. Chen, J. Lin,  ``Pyramid codes: Flexible schemes to trade space for access efficiency in reliable data storage systems,'' \emph{IEEE Int. Symp. on Network Comp. and Appl.}, 2007, pp. 79-86.

%\bibitem{matroidbook} SOME BOOK


\end{thebibliography}

\newpage
\appendix

\section{Appendix}

\begin{proof}[\bf{Proof of Theorem \ref{theorem:(n,k,d,r,delta)-Z}}]
 For statement (i), we first claim that 
$$
\begin{array}{rcl}
d & = & \min \{|X| : X \subseteq E \hbox{ and } \rho(E \setminus X) < k\}\\
  & = & n - \max \{|Y| : Y \subseteq E \hbox{ and } \rho(Y) < k\}\\
  & = & n - \max \{|Y| : Y \in \mathcal{F} \setminus \{E\}\}\\
  & = & n - k + 1 - \max \{\eta(Y) : Y \in \mathcal{F} \setminus \{E\}\} \\
  & = & n - k + 1 - \max \{\eta(Z) : Z \in \coatom\} \\
\end{array}
$$
The first equality is the definition of $d$ from Definition \ref{def:(n,k,d,r,delta)-matroid}. The third equality claims that the maximum is obtained when $Y$ is a flat, which follows from Proposition \ref{prop:basic_facts} (ix) and the fact that $Y \subseteq \cl(Y)$. By maximality, $Y$ must have rank $\rho(Y) = k-1$, which gives the forth equality. The fifth equality now follows directly from Proposition \ref{prop:basic_facts} (iii).

For statement (ii), we first observe that $S_x$ in Definition \ref{def:(n,k,d,r,delta)-matroid} can be chosen to be cyclic, as we could otherwise find a smaller set with the same nullity and smaller size, by Proposition \ref{prop:basic_facts} (viii). Statements a) and b) in this Theorem follows directly from Definition \ref{def:(n,k,d,r,delta)-matroid}. Finally, statement c) in this theorem follows by applying (i) to $M \vert S_x$, observing that $1_{M \vert S_x} = S_x$. 
\end{proof}

\begin{proof}[\bf{Proof of Lemma \ref{lemma:fundamental_chain}}]

First, let  $\{S_x\}_{x \in E}$ be a collection of cyclic sets of $M$ for which the statements a) - c) in Theorem \ref{theorem:(n,k,d,r,delta)-Z}~(ii) are satisfied. 
We construct the chain $\{Y_j\}_{j=0}^m$ inductively by first letting $Y_0=\emptyset$. Given $Y_{j-1}\subsetneq E$, we choose $x_j\in E\setminus Y_{j-1}$ arbitrarily, and assign $Y_j=\mathrm{cl}(Y_{j-1}\cup S_x)$. If $Y_j=E$, we set $m=j$. 
 
 Let $j$ be any integer  in $\lbrack m \rbrack$ . We first observe that $\mathrm{cl}(S_j)$ is a cyclic flat, by Proposition \ref{prop:basic_facts} (iv). Hence, by Proposition \ref{pro:basic-Z} (iii), we see that inductively $Y_j$ is a cyclic flat with
\begin{equation} \label{eq:Y_j_join}
Y_{j} = \mathrm{cl}(Y_{j-1} \cup S_j) = \mathrm{cl}(Y_{j-1} \cup \mathrm{cl}(S_j)) = Y_{j-1} \vee \mathrm{cl}(S_j).
\end{equation} As $x_j\in Y_j\setminus Y_{j-1}$, we indeed have an increasing chain $$
C: 0_\mathcal{Z} = Y_0 \subsetneq Y_1 \subsetneq \ldots \subsetneq Y_m = E.
$$

We remark as in \eqref{eq:rank_S_x} that $\rho(S_j) \leq r$ for any $j \in \lbrack m \rbrack$. Hence, by axiom (R3) in \eqref{eq:rank_matroid}, we have
$$
\rho(Y_j) = \rho(Y_{j-1} \cup S_j) \leq \rho(Y_{j-1}) + \rho(S_j) \leq \rho(Y_{j-1}) + r.
$$

%For (ii), 
%the fact that $Y_{j-1} \subsetneq Y_j$ implies that $\rho(Y_{j-1}) < \rho(Y_j)$, as $Y_{j-1}$ and $Y_j$ are flats. Then, by Axiom (R3) in \eqref{eq:rank_matroid}, 
%$$
%\rho(S_j \cap Y_{j-1}) \leq \rho(S_j) + \rho(Y_{j-1}) - \rho(S_j \cup Y_{j-1}) = \rho(S_j) + \rho(Y_{j-1}) - \rho(Y_j) < \rho(S_j).
%$$
Moreover, by the statements (i) - (iii) in  Proposition~\ref{prop:basic_facts} and Theorem~\ref{theorem:(n,k,d,r,delta)-Z}~(ii)~c), we have
$$
\begin{array}{rcl}
\eta(Y_j) & = & \eta(\mathrm{cl}(Y_{j-1} \cup S_j))\\
          & \geq & \eta(Y_{j-1} \cup S_j) \\
					& \geq & \eta(Y_{j-1}) + \eta(S_j) - \eta(Y_{j-1} \cap S_j)\\
					& \geq & \eta(Y_{j-1}) + \eta(S_j) - \max \{ \eta(X) : X \in coA_{\mathcal{Z}(M \vert S_j)}\}\\
			  	& = & \eta(Y_{j-1}) + d(M \vert S_j) -1\\
					& \geq & \eta(Y_{j-1}) + \delta -1.
\end{array}
$$ This concludes the proof.
\end{proof}

\begin{proof}[\bf{Proof of Proposition \ref{proposition:bound_delta_k_rate}}]
For (i), let $Y$ be any subset of $E$ with $|Y| < \delta$. From Definition \ref{def:(n,k,d,r,delta)-matroid}, we conclude for every $x\in E$ that there is a subset $S_x \subseteq E$ with $x \in S_x$ and $\rho(S_x \setminus Y) = \rho(S_x)$. 

Hence $\rho(E \setminus Y) = \rho(E)$, since every flat containing $S_x\setminus Y$ must contain $S_x$, and $E = \bigcup_{x \in E} S_x$. Consequently, from the definition 
$
d = \min \{|X| : X \subseteq E \hbox{, } \rho(E \setminus X) < \rho(E)\}
$, it follows that $\delta \leq d$. 

For (ii), by (i) and Theorem \ref{th:bound}, 
$$
\delta \leq n - k + 1 - \left ( \left \lceil \frac{k}{r} \right \rceil  - 1\right ) (\delta - 1).
$$
Therefore
$$
k \leq n - \left \lceil \frac{k}{r} \right \rceil (\delta - 1).
$$

For (iii), by (ii), we have 
$$
\frac{n}{k} \geq 1 + \left \lceil \frac{k}{r} \right \rceil \frac {(\delta - 1)}{k} \geq 1 + \frac{\delta - 1}{r} =  \frac{r + \delta - 1}{r}.
$$

\end{proof}

%\begin{proof}[\bf{Proof of Lemma %\ref{lemma:fundamental_chain_bound_d}}]
%As $(\left \lceil \frac{k}{r} \right \rceil - 1) r < k = \rho(Y_m)$, %it follows from Lemma \ref{lemma:fundamental_chain} $(ii)$ that
%$$
%m - 1 \geq \left \lceil \frac{k}{r} \right \rceil - 1.
%$$
%Moreover, since $\rho(Y_{m-1}) < \rho(Y_m) = \rho(E)$, we have that
%$$
%\eta(Y_{m-1}) \leq \max \{\eta(Z) : Z \in coA_\mathcal{Z} \},
%$$
%by Proposition \ref{prop:basic_facts} (iii). Hence, by Theorem \ref{theorem:(n,k,d,r,delta)-Z}~(iii),
%$$
%d  =  n - k + 1 - \max \{ \eta(Z) : \eta(Z) \in coA_\mathcal{Z} \}  %\leq  n - k + 1 - \eta(Y_{m-1}).
%$$
%\end{proof}

\begin{proof}[\bf{Proof of Theorem \ref{th:structure-optimal-matroid}}]
Let 
\begin{equation} \label{eq:chain_to_singleton_bound}
C: 0_\mathcal{Z} = Y_0 \subsetneq Y_1 \subsetneq \ldots \subsetneq Y_m = E,
\end{equation}
be a chain of $(\mathcal{Z}(M),\subseteq)$ as given in Lemma \ref{lemma:fundamental_chain}~(i), from a subset $\{S_j\}_{j \in \lbrack m \rbrack}$ of $\{S_x\}_{x \in E}$. Since $d$ achieves the generalized Singleton bound in Theorem \ref{th:bound}, we get that $m = \left \lceil \frac{k}{r} \right \rceil$ and $\eta(Y_{m-1}) \leq (\left \lceil \frac{k}{r} \right \rceil - 1)(\delta - 1)$. Hence, 
\begin{equation} \label{eq:eta_Y_j}
\eta(Y_j) = j(\delta - 1) \hbox{ for } j = 0,1,\ldots,m-1, \quad \hbox{where} \quad m = \left \lceil \frac{k}{r} \right \rceil \geq 2,
\end{equation}
by Lemma \ref{lemma:fundamental_chain} (iii) and the proof given for Theorem \ref{th:bound}.

For (i), observe that $\eta(0_\mathcal{Z}) = \lvert 0_\mathcal{Z} \rvert$. Hence, if $0_\mathcal{Z} \neq \emptyset$, then $\eta(Y_0) = \eta(0_\mathcal{Z}) > 0$. This is a contradiction by \eqref{eq:eta_Y_j}. 

To prove (ii), first observe that for any $S_x$ we can select the chain in \eqref{eq:chain_to_singleton_bound}, such that $Y_1 =\cl(S_x)$. 

By \eqref{eq:eta_Y_j}, and since $\eta(X) \leq \eta(\mathrm{cl}(X))$ for any $X \subseteq E$, we get that
$$
\delta-1 = \eta(Y_1) \geq \eta(S_x).
$$
 Moreover, as we know from Theorem \ref{theorem:(n,k,d,r,delta)-Z}~(iv)~c),
$$
d(M | S_x) \leq \delta \iff \min \{ \vert X \rvert : X \subseteq S_x \hbox{, } \rho(S_x \setminus X) < \rho(S_x)\} \geq \delta,
$$
which implies that $\eta(S_x) \geq \delta - 1$, proving (ii) a). 

To prove  (ii) b), assume that $S_x$ is not a cyclic flat. Then 
$$
\eta(Y_1) = \eta(\mathrm{cl}(S_x)) > \eta(S_x) = \delta-1,
$$
which contradicts \eqref{eq:eta_Y_j}. Thus,
$$
0_{\mathcal{Z}(M \vert S_x)} = \emptyset  \hbox{, } 1_{\mathcal{Z}(M \vert S_x)} = S_x \hbox{ and } \mathcal{Z}(M \vert S_x) = \{X \in \mathcal{Z}(M) : X \subseteq S_x\}
$$
by Proposition \ref{prop:basic_facts} (vii). Now suppose there were a cyclic flat $Z \in \mathcal{Z}(M)$ such that $\emptyset \subsetneq Z \subsetneq S_x$. Then $\rho(Z) < \rho(S_x)$ and $\eta(Z) > \eta(\emptyset) = 0$ by axiom (Z2) in Theorem \ref{th:Z-axiom}. Consequently, by Proposition \ref{prop:basic_facts} (iii) and  Theorem \ref{theorem:(n,k,d,r,delta)-Z}~(iv)~(c),
$$
d(M \vert S_x) \leq \eta(S_x) + 1 - \eta(Z) \leq \delta - 1,
$$
contradicting the $(r,\delta)$-locality.

For (iii), we will first prove that any collection $F_1,\ldots,F_m$ of cyclic sets from $\{S_x : x \in E\}$ with a non-trivial union, and $m = \lceil \frac{k}{r} \rceil$, constitutes a chain as given in \eqref{eq:chain_to_singleton_bound}, with $Y_j = Y_{j-1} \vee F_j$ for $j=1,\ldots,m$.  
%Note by statement  (ii) b)  that $F_j$ is a cyclic flat for $j = 1, \ldots, m$.  Also, observe that
%\begin{equation} \label{eq:obserbation_Y_1}
%\emptyset = 0_\mathcal{Z} = Y_0 \subsetneq F_1 = Y_0 \vee F_1 = Y_1.
%\end{equation}
%Assume that for some $j \leq m-1$ we had 
%\begin{equation} \label{eq:assumption_vee}
%Y_l = F_1 \cup \ldots \cup F_l  \hbox{ and } Y_{l-1} \subsetneq Y_l \hbox{ for } l = 1,\ldots,j.
%\end{equation}
%Since $F_1, \ldots, F_m$ is a non trivial union of subsets it would follow that $F_{j+1} \nsubseteq Y_j$. 
Indeed, the chain in the proof of \eqref{eq:chain_to_singleton_bound} is obtained by sequentially choosing an arbitrary $S_x$ with $S_x\not\subseteq Y_j$, which can be chosen from $\{F_i\}$ as this is a collection with non-trivial union.

If $|Y_j \cap F_{j+1}| \geq \rho(F_{j+1})$, then we obtain that $
\mathrm{cl}(Y_j \cap F_{j+1}) = F_{j+1} \subseteq \mathrm{cl}(Y_j) = Y_j
$
by Proposition \ref{prop:basic_facts} (x). This is a contradiction, and consequently 
\begin{equation} \label{eq:Y_j_intersection_F_{j+1}}
|Y_j \cap F_{j+1}| < \rho(F_{j+1}).
\end{equation}
Now, by  (ii) b)  and Proposition \ref{pro:basic-Z} (ii), any subset $X \subseteq S_x$ contains a circuit if and only if $|X| > \rho(S_x)$, \emph{i.e.}, $\rho(X) = |X|$ if and only if $|X| \leq \rho(S_x)$. Consequently, $\eta(Y_j \cap F_{j+1}) = 0$. This implies, using Proposition \ref{prop:basic_facts} (ii), \eqref{eq:eta_Y_j} and statement (ii), that
\begin{equation} \label{eq:eta_Y_{j+1}}
\eta(Y_j \cup F_{j+1}) \geq \eta(Y_j) + \eta(F_{j+1}) - \eta(Y_j \cap F_{j+1}) = (j+1)(\delta - 1). 
\end{equation}
Furthermore by \eqref{eq:eta_Y_j}, if $j+1 \leq m-1$, then
$$
\eta(Y_j \cup F_{j+1}) \leq \eta(\mathrm{cl}(Y_j \cup F_{j+1})) = \eta(Y_{j+1}) = (j+1)(\delta - 1).
$$
Hence, $Y_{j+1} = F_1 \cup \ldots \cup F_{j+1}$ if $j+1 \leq m-1$. If $Y_m \neq E$, then $\rho(Y_m) < \rho(E)$ and $\eta(Y_m) > (\lceil \frac{k}{r} \rceil - 1)(\delta - 1)$. Then it follows, by Proposition \ref{prop:basic_facts} (iii) and  Theorem \ref{theorem:(n,k,d,r,delta)-Z}~(iii),  that 
$$
d < n - k + 1 - \left ( \left \lceil \frac{k}{r} \right \rceil - 1 \right )(\delta - 1).
$$
This is a contradiction. Consequently, $F_1,\ldots,F_m$ constitutes a chain as given in \eqref{eq:chain_to_singleton_bound}, with 
\begin{equation} \label{eq:Y_i_equals_union}
Y_j = \bigvee_{i=1}^j F_i = 
\left \{
\begin{array}{lll}
\bigcup_{i=1}^j F_i & \hbox{if} & j < \lceil \frac{k}{r} \rceil,\\
E & \hbox{if} & j = m = \lceil \frac{k}{r} \rceil.
\end{array}
\right .
\end{equation}

For statement (iii) c), we first notice that the statement follows directly from \eqref{eq:eta_Y_j} when $j < \lceil \frac{k}{r} \rceil$. When $j \geq m = \lceil \frac{k}{r} \rceil$ we conclude, using \eqref{eq:Y_i_equals_union}, that $\bigvee_{i=1}^j F_i = E$. Also, by \eqref{eq:eta_Y_j},  
$$
n - k \geq \eta(F_1 \cup \ldots \cup F_m) \geq \left \lceil \frac{k}{r} \right \rceil (\delta - 1).
$$ 
Statement (iii) d) follows directly from \eqref{eq:Y_i_equals_union}, and statement (iii) e) is a immediately consequence of (iii) c)--d). Statement (iii) f) follows from 
%fix number
\eqref{eq:Y_j_intersection_F_{j+1}} and (ii) a). 
\end{proof}

\begin{proof}[\bf{Proof of Theorem \ref{thm:construction}}] We will show that $\mathcal{Z}$ and $\rho$ define a matroid, by proving that the axioms (Z0)-(Z3) in Theorem \ref{th:Z-axiom} are satisfied by $\mathcal{Z}$ and $\rho.$  We let $I,J$ be two subsets of $[m]$.

(Z0) Since the collection of sets $F_1,\ldots,F_m$ has a non trivial union by 
assumption (iv), it follows that $F_I \subsetneq F_J$ if and only if $I \subsetneq J$. Hence, we immediately get that $\mathcal{Z}$ is a lattice under inclusion with 
$$
F_I \wedge F_J = F_{I \cap J} 
\quad \hbox{and} \quad
F_I \vee F_J =
\left \{
\begin{array}{lcl}
F_{I \cup J} & \hbox{if} & F_{I \cup J} \in \mathcal{Z}_{<k},\\
E & \hbox{if} & F_{I \cup J} \notin \mathcal{Z}_{<k},
\end{array}
\right .
$$
for $F_I, F_J \in \mathcal{Z}_{<k}$. Also, the bottom element in the lattice $0_\mathcal{Z}$ equals $\emptyset$ and by 
assumption (ii) in the top element $1_\mathcal{Z}$ equals $E$.

(Z1) Since $0_{\mathcal{Z}} = F_{\emptyset}$, we obtain that
$$
\rho(0_{\mathcal{Z}}) = \rho(F_{\emptyset}) = 0.
$$ 

(Z2) Since $F_I \subsetneq F_J$ if and only if $I \subsetneq J$, it is enough to prove that the axiom (Z2) holds for $F_I \subsetneq F_J$ in the following two cases:
$$
\begin{array}{rl}
(i) & F_J \in \mathcal{Z}_{<k} \hand J = I \cup \{j\} \hbox{ with } j \in [m]\setminus I,\\
(ii) & F_I \in \mathcal{Z}_{<k} \hand J = [m] \com \hbox{ i.e. }F_J = E. 
\end{array}
$$
In the first case, by 
the construction of $\rho$,  
$$
\begin{array}{rcl}
\rho(F_{I \cup \{j\}}) - \rho(F_I) & = & |F_{I \cup \{j\}}| - \sum_{l \in I \cup \{j\}} \eta(F_l) - (|F_I| - \sum_{i \in I} \eta(F_i))\\
&=& |F_j| - |F_j \cap F_I| - \eta(F_j)\\
&=& \rho(F_j) - |F_j \cap F_I|\\
&>& 0.
\end{array}
$$ 
Moreover, we have
$$
\begin{array}{l}
(|F_{I \cup \{j\}}| - \rho(F_{I \cup \{j\}})) - (|F_I| - \rho(F_I))  = \\
 \sum_{l \in {I \cup \{j\}}} \eta(F_l) -  \sum_{i \in I} \eta(F_i) =\\
\eta(F_j) > 0.
\end{array}
$$

For case (ii), we immediately get that $\rho(E) - \rho(F_I)=k-\rho(F_I) > 0$. Now, we claim that for any $j \in [m] \setminus \{I\}$ with $F_{I \cup \{j\}} \notin \mathcal{Z}_{<k}$, we have
\begin{equation} \label{eq:skrutt}
(|F_{I \cup \{j\}}| - k) - (|F_I| - \rho(F_I)) > 0. 
\end{equation}
By construction of $\mathcal{Z}_{<k}$,
$$
(|F_{I \cup \{j\}}| - k) - (|F_I| - \rho(F_I)) \geq \sum_{l \in I \cup \{j\}} \eta(F_l) - \sum_{i \in I}\eta(F_i) = \eta(F_j) > 0.
$$
Hence, by case (i) and \eqref{eq:skrutt}, it follows
$$
(|E| - \rho(E)) - (|F_I| - \rho(I)) > 0. 
$$

(Z3) Suppose that $F_I, F_J \in \mathcal{Z}_{<k}$. Then
$$
\begin{array}{l}
\rho(F_I) + \rho(F_J) - (\rho(F_I \vee F_J) + \rho(F_I \wedge F_J) + \lvert (F_I \cap F_J) \setminus  (F_I \wedge F_J)\rvert ) =\\
\rho(F_I) + \rho(F_J) - \rho(F_I \vee F_J) - \rho(F_{I \cap J}) - |F_I \cap F_J| +  |F_{I \cap J}|=\\
|F_I| - \sum_{i \in I} \eta(F_i) + |F_J| - \sum_{j \in J} \eta(F_j) - \rho(F_I \vee F_J) + \sum_{j \in I \cap J} \eta(F_j)   -  |F_I \cap F_J| =\\
|F_I \cup F_J| - \sum_{j \in I \cup J} \eta(F_j) - \rho(F_I \vee F_j) =\\
|F_{I \cup J}| - \sum_{j \in I \cup J} \eta(F_j) - \rho(F_I \vee F_j).
\end{array}
$$
If $F_{I \cup J} \in \mathcal{Z}_{<k}$, then
$$
|F_{I \cup J}| - \sum_{j \in I \cup J} \eta(F_j) - \rho(F_I \vee F_j) = |F_{I \cup J}| - \sum_{j \in I \cup J} \eta(F_j) - ( |F_{I \cup J}| - \sum_{j \in I \cup J} \eta(F_j) ) = 0.
$$
If $F_{I \cup J} \notin \mathcal{Z}_{<k}$, then $$
\begin{array}{rcl}
|F_{I \cup J}| - \sum_{j \in I \cup J} \eta(F_j) - \rho(F_I \vee F_j) &=& |F_{I \cup J}| - \sum_{j \in I \cup J} \eta(F_j) - k \\
&\geq& |F_{I \cup J}| - \sum_{j \in I \cup J} \eta(F_j) - ( |F_{I \cup J}| - \sum_{j \in I \cup J} \eta(F_j) )\\
&=& 0.
\end{array}
$$
Moreover, for $E$ and $F_I$ we have that
$$
\begin{array}{l}
\rho(F_I) + \rho(E) - (\rho(F_I \vee E) + \rho(F_I \wedge E) + \lvert (F_I \cap E) \setminus  (F_I \wedge E)\rvert ) =\\
\rho(F_I) + \rho(E) - \rho(E) - \rho(F_I) + |F_I| - |F_I| = 0.
\end{array}
$$

We have now proven that the axioms (Z0)--(Z3) in Theorem \ref{th:Z-axiom} are satisfied by $\mathcal{Z}$ and $\rho$. Hence, $\mathcal{Z}$ and $\rho$ define a matroid $M = M(F_1,\ldots,F_m;k;\rho)$ over $E$.

The parameters $(n,k,d,r,\delta)$ will now be investigated using Theorem \ref{theorem:(n,k,d,r,delta)-Z}. Firstly, the parameters $(n,k,d,r,\delta)$ are defined for $M$ with $n = |1_\mathcal{Z}| = |E|$ and $k = \rho(1_\mathcal{Z}) = \rho(E)$, since $E \in \mathcal{Z}$ and $\rho(E) = k > 0$. By Axiom $(Z2)$ in Theorem \ref{th:Z-axiom}, $\eta(Y) > \eta(X)$ for all $X,Y \in \mathcal{Z}$ when $X \subsetneq Y$. Hence, by Theorem \ref{theorem:(n,k,d,r,delta)-Z}~(iii), 
$$
\begin{array}{rcl}
d & = & n - k + 1 - \max \{\eta(F_I) : F_I \in \mathcal{Z}_{<k}\}\\
  & = & n - k + 1 - \max \{|F_I| - (|F_I | - \sum_{i \in I} \eta(F_i)) : F_I \in \mathcal{Z}_{<k}\}\\
	& = & n - k + 1 - \max \{\sum_{i \in I} \eta(F_i) : F_I \in \mathcal{Z}_{<k}\}.
\end{array}
$$
Let $\delta - 1 = \min_{i \in [m]} \{\eta(F_i)\}$ and $S$ be a subset of $F_i$ such that $|S| = \rho(F_i) + \delta - 1$. By construction and Proposition \ref{prop:basic_facts}~(vii),
$$
\mathcal{Z}(M | F_i) = \{Z \in \mathcal{Z}(M) : Z \subseteq F_i\} = \{\emptyset,F_i\}.
$$
Hence, from Proposition \ref{pro:basic-Z}~(i) and (ii),
$$
\rho(X) = \{|X|, \rho(F_i)\} \hbox{ for } X \subseteq F_i
$$
and
$$
\mathcal{C}(M) \cap F_i = \{X \subseteq F_i : |X| = \rho(F_i) + 1\}.
$$
This implies that $S$ is a cyclic set and that 
$$
d(M | S) = |S| - \rho(S) + 1 = \rho(F_i) + \delta - 1 - \rho(F_i) + 1 =  \delta.
$$ 
by Definition \ref{def:(n,k,d,r,delta)-matroid}~(iv)~c). Therefore, with $r = \max_{i \in [m]} \{\rho(F_i)\}$ and as $F_{[m]} = E$, statements (iv) a)--c) in Theorem \ref{theorem:(n,k,d,r,delta)-Z} are satisfied. Consequently, $M$ has $(r,\delta)$-locality and $S$ is a locality set. 

It remains to show that the independent sets $\mathcal{I}(M)$ equals $\mathcal{I}$.
We first point out that $$|F_J| - \sum_{j \in J} \eta(F_j) > |F_I| - \sum_{i \in I} \eta(F_i),$$ for $I \subsetneq J \subseteq [m]$.
Noting that $X \subseteq E$ is independent if and only if $X$ does not contain any circuits, and applying Proposition \ref{pro:basic-Z}~(ii), we get
$$
\begin{array}{lcl}
\mathcal{I}(M) & = & \{X \subseteq E : |X| \leq \rho(Y) \hbox{ for all } Y \in \mathcal{Z}\}\\
            & = & \{X \subseteq E : |F_I \cap X| \leq \min \{|F_I| - \sum_{i \in I} \eta(F_i), k \} \hbox{ for all } I \subseteq [m] \}\\
            &=& \mathcal{I}.
\end{array}
$$ 
%The right implication in statement (v) follows directly from statement (iv). For the proof of the left implication, assume that for some $I \subseteq [m]$ and $X \subseteq F_I$ that 
%$$
%|F_{I'} \cap X| \leq \min \{|F_{I'}| - \sum_{i \in I'} \eta(F_i), k \} \hbox{ for all } I' \subseteq I \}.
%$$
%Observe that, by statement (iii), $\min \{|F_{I'}| - \sum_{i \in I'} \eta(F_i), k \} = \rho(F_{I'})$. Now, choose an element $j \in [m] \setminus I$ and a subset $I' \subseteq I$. If $\rho(F_{I' \cup \{j\}}) = k$, then 
%$$
%|X \cap F_{I' \cup \{j\}}| \leq |X| = |X \cap F_{I}| \leq \rho(F_I) \leq k =  \rho(F_{I' \cup \{j\}}).
%$$
% Now suppose that $\rho(F_{I' \cup \{j\}}) < k$. Then, using statement \eqref{eq:set_construction_1}~(iv),
%$$
%\begin{array}{rcl}
%|X \cap F_{I' \cup \{j\}}| & \leq & |X \cap F_{I'}| + |F_I \cap F_j| - |F_{I'} \cap F_j|\\
%                           & \leq & \rho(F_{I'}) + \rho(F_j) - |F_{I'} \cap F_j|\\
%													 & = & |F_{I'}| - \sum_{i \in I'} \eta(F_i) + |F_j| - \eta(F_j) - |F_{I'} \cap F_j|\\
%													 & = & |F_{I' \cup \{j\}}| - \sum_{i \in (I \cup \{j\})} \eta(F_i)\\
%													 & = & \rho(F_{I' \cup \{j\}}).
%\end{array}
%$$ 
%The left implication of (v) now follows from (iv).  
\end{proof}

\begin{proof}[\bf{Proof of Theorem \ref{theorem:graph_1}}]
To prove the theorem, we will first show that the assumptions (i)--(iv) in Section \ref{sec:general} are satisfied by $(F_1,\ldots,F_m;k;\rho)$, obtained from the graph $(G,\gamma)$ in Section \ref{sec:special}. We will then show that the values of the parameters $(n,k,d,r,\delta)$ of $M(F_1,\ldots,F_m;k;\rho)$ are the ones requested in Theorem \ref{theorem:graph_1}.

Statement \ref{sec:general}~(i) follows directly from \ref{eq:graph_1}~(ii) and (iii).  \ref{sec:general}~(ii) is obvious. For \ref{sec:general}~(iii), we first notice that by \ref{eq:graph_1}~(iv) and (vi), we can define $\gamma$ as the size of a nonempty intersection of two sets $F_i$ and $F_j$ Hence, as $F_h\cap F_i\cap F_j=\emptyset$ for all $h,i,j\in [m]$, we know that

\begin{equation} \label{eq:graph_1_n} 
|F_{[m]}| = \sum_{i \in [m]} |F_i| - \sum_{w \in W} \gamma(w).
\end{equation}
Moreover, for $i \in [m]$, we have
$$
\eta(F_i) = |F_i| - \rho(F_i) = \delta - 1 + \beta(i).
$$
Consequently, 
$$
\begin{array}{rcl}
|F_{[m]}| - \sum_{i \in [m]} \eta(F_i) & = &\sum_{i \in [m]} |F_i| - m(\delta - 1) - \sum_{i \in [m]} \beta(i) - \sum_{w \in W} \gamma(w)\\
                                       & = & mr -  \sum_{i \in [m]} \alpha(i) -  \sum_{w \in W} \gamma(w).
\end{array}
$$
Therefore, by \ref{eq:graph_1}~(v),  \ref{sec:general}~(iii) holds. For \ref{sec:general}~(iv), we first remark that $$F_{[m] \setminus i} \cap F_i = \sum_{w = \{i,j\} \in W} \gamma(w)$$ and $\rho(F_i) = r - \alpha(i)$ for $i \in [m]$. Hence \ref{sec:general}~(iv) holds, by \ref{eq:graph_1}~(vi).

We will now determine the parameters $(n,k,d,r,\delta)$, proving that they agree for the graph and the matroid. The given parameters $(r,\delta)$ for the graphs also give $(r,\delta)$-locality of the matroid as $\rho(F_i) \leq r$ and $\eta(F_i) \geq \delta - 1$ by \eqref{eq:graph_1}~(ii) and (iii), and \ref{sec:special}~(i) and (ii). We have already proven that the parameter $k$ of the graph is the rank of the matroid. Moreover, by \eqref{eq:graph_1_n},
$$
n = |F_{[m]}| = | \sum_{i \in [m]} |F_i| - \sum_{w \in W} \gamma(w) = (r + \delta - 1)m - \sum_{i \in [m]} \alpha(i) + \sum_{i \in [m]} \beta(i) - \sum_{w \in W} \gamma(w).
$$  
The statement about $d$ in Theorem \ref{theorem:graph_1}~(ii) holds as a consequence of Theorem \ref{thm:construction}~(iii) and the properties that 
$$
\sum_{i \in I} \eta(F_i) = |I|(\delta - 1) + \sum_{i \in I} \beta(i)$$ and $$|F_I| - \sum_{i \in I} \eta(F_i) = r |I| - \sum_{i \in I} \alpha(i) - \sum_{w \subseteq I, w \in W \in I} \gamma(w).
$$
This concludes the proof.
\end{proof}

\noindent {\bf Proof of Theorem \ref{thm:max_d}:} We will divide the proof of Theorem \ref{thm:max_d} into the the parts (i)--(v). First, we recall that $a$ and $b$ are the integers where
$$
k = \left \lceil \frac{k}{r} \right \rceil r - a \hbox{ and } n = \left \lceil \frac{n}{r+\delta - 1} \right \rceil (r+ \delta -1) - b.
$$ 

%\begin{color}{red}
%MAYBE DELETE\\
%Moreover, for any two integers $i$ and $j$ where $i \leq j$, let 
%$$
%[i,j] = \{i,i+1,\ldots,j\}.
%$$
%

\begin{proof}[\bf{Proof of Theorem \ref{thm:max_d} (i)}]
We will mimic the proof of Theorem~\ref{thm:almosttight}, using the assumption that $a\geq b$ to tighten the bounds. Hence, let $m=\left\lceil\frac{n}{r+\delta-1}\right\rceil$, and let $F_1,\ldots F_{m-1}$ be disjoint sets with rank $r$ and size $r+\delta-1$. Let $F_m$ be disjoint from all of $F_1,\ldots F_{m-1}$, with size $$|F_m|=n-(m-1)(r+\delta-1)=r+\delta-1-b$$ and rank $\rho(F_m)=|F_m|-\delta+1=r-b$. Finally, let $M$ be defined by $\mathcal{Z}(M)=\{F_I\}$, where $F_I=\cup_{i\in I}F_i$, and $$\rho(F_I)=\min\{\sum_{i\in I}\rho(F_i), k\}.$$ 
Now, the union of any $\kr-1$ sets among $F_1,\ldots, F_{m-1}$, together with $F_m$, has rank $$r\kr -b=k+a-b\leq k.$$ Thus we have $\max\{|I|: \rho(F_I)<k\}= \kr$, so $M$ has minimum distance $$d= n-k+1-(\delta-1)\max\{|I|: \rho(F_I)<k\}= n-k- (\kr-1) (\delta - 1).$$\end{proof}

\begin{proof}[\bf{Proof of Theorem \ref{thm:max_d} (ii)}] We will use Theorem~\ref{thm:construction} to construct a $(n,k,d,r,\delta)$-matroid with $$d=n-k+1-\kr(\delta-1) +(b-r),$$ where $$b=(r+\delta-1)\left\lceil\frac{n}{r+\delta-1}\right\rceil-n>a=r\kr-k.$$ For this purpose,
let $m=\left\lceil\frac{n}{r+\delta-1}\right\rceil-1=\kr +t$, where $t\geq 0$ by Proposition~\ref{proposition:bound_frac_for_n}. Let $F_1,\ldots F_{m-1}$ be disjoint sets with rank $r$ and size $r+\delta-1$. Let $F_m$ be disjoint from all of $F_1,\ldots F_{m-1}$, with size $$|F_m|=n-(m-1)(r+\delta-1)=2(r+\delta-1)-b$$ and rank $\rho(F_m)=r$. Finally, let $M$ be defined by $\mathcal{Z}(M)=\{F_I\}$, where $F_I=\cup_{i\in I}F_i$, and $$\rho(F_I)=\min\{\sum_{i\in I}\rho(F_i), k\}.$$ 

Clearly, this has the desired values of $r$ and $n$. To guarantee that $M$ has rank $\rho(M)=\rho(F_{[m]})=k$ is the rank function of a matroid, we verify that
$$
k = \left \lceil \frac{k}{r} \right \rceil r - a = (m-t)r\leq mr = \sum_{i\in [m]}\rho(F_i).
$$
Statements \ref{eq:graph_1}~(vi) follows as $r-\alpha(i) \geq 1$ for $i \in [m]$ and as $G$ has no edges. Now, by Theorem~\ref{theorem:graph_1}, there is an $(n,k,d,r,\delta)$-matroid with 
$$
n = (r+ \delta - 1)m + (r + \delta - 1 - b) = (r+ \delta - 1)\left \lceil \frac{n}{r+ \delta - 1} \right \rceil  - b.
$$ 
Moreover, by Theorem \ref{thm:construction},
$$
d = n - k -1 - \left (\left \lceil \frac{k}{r} \right \rceil (\delta - 1) - (b - r) \right ),
$$
as 
$$
\max_{I \in V_{<k}} \{ (\delta - 1) |I| \} = \left (\left \lceil\frac{k}{r} \right \rceil - 1 \right ) + (r+\delta-1-b)
$$
for $$V_{<k} = \{ I \subseteq [m] : r |I| < k \}.$$
This concludes the proof.
\end{proof}

\begin{proof}[\bf{Proof of Theorem \ref{thm:max_d} (iii), right implication}]
By the structure theorem~\ref{th:structure-optimal-matroid}, we see that the existence of an $(n,k,d,r,\delta)$-matroid with $d= n-k + 1 - \left ( \left\lceil \frac{k}{r} \right\rceil - 1 \right ) (\delta - 1)$ implies the existence of subsets $F_1,\ldots, F_m$ of $[n]$ such that:
\begin{enumerate}[(i)]
\item $F_j \nsubseteq \bigcup_{i \in [m] \setminus \{j\}} F_i$ for  $j = 1,\ldots,m$,
\item $|F_i|\leq r+ \delta - 1$ for  $i = 1,\ldots,m$,
\item $|\bigcup_{i\in[m]} F_i|=n$,
\item $|F \cap (\bigcup_{i \in I} F_i) | \leq |F| - \delta$ for every $F \in \{F_i\}_{i \in [m] \setminus I}$  with $I \subseteq [m]$  and $|I| < \lceil \frac{k}{r} \rceil$,
\item $|\bigcup_{i\in I} F_i| - |I|(\delta - 1) \geq k = \left\lceil\frac{k}{r}\right\rceil r -a$ for every $I\subseteq [m]$ with $|I| \geq \left\lceil\frac{k}{r}\right\rceil$.
\end{enumerate}

For simplicity, denote $\lceil\frac{k}{r}\rceil=h$. For any set system $F_1,\dots, F_m$ where $|F_i|\leq r+\delta-1$ for every$i$, construct a graph $\mathcal{G}$ on vertex set $[m]$, with an edge between $i$ and $j$ if and only if $F_i\cap F_j \neq\emptyset$. Note that, when $I\subseteq[m]$ is such that the induced subgraph $\mathcal{G}[I]$ on $I$ is connected, then 
$$
|F_I|\leq (r + \delta - 2)|I|+1.
$$
If $\mathcal{G}[I]$ is connected and equality holds in the above inequality, then $I$ is said to be \emph{full}. Note that for every full component $I$ in $\mathcal{G}$ and integer $1 \leq u \leq |I|$, there is a subset $I' \subseteq I$ such that $|I'| = t$ and $I'$ is full. Denoting by $c(\mathcal{G}[I])$ the number of full components of $\mathcal{G}[I]$, we get 
$$
|F_I|\leq (r+\delta-2)|I|+c(\mathcal{G}[I]).
$$
Let $J$ be the union of the $h - a - 1$ largest full components of $\mathcal{G}$ together with all non-full components of $\mathcal{G}$. If $|J| \geq h$, then we have a subset of nodes $J' \subseteq J$ with $|J'| = h$, such that $c(\mathcal{G}[J']) \leq h - a - 1$. However, assuming $$|F_I|\geq h(r+\delta - 1)-a = h(r + \delta - 2) + h - a$$ for every subset $I\subseteq [m]$ with $|I|=h$, then $c(\mathcal{G}[J']) \geq h-a$. Hence, $|J| \leq h-1$ and  $\mathcal{G}[[m] \setminus J]$ is a union of full components $I_1,\ldots, I_s$ of $\mathcal{G}$, and these full components contain at most $\left\lfloor\frac{h-1}{h-1-a}\right\rfloor$ nodes each.

When bounding $\left \lceil \frac{n}{r+\delta - 1} \right \rceil$, we first notice that
\begin{enumerate}[(i)]
\item $|I|(r + \delta - 1) - |F_{I}| = |I| - 1$ if $I$ is connected and full,
\item $|I|(r + \delta - 1) - |F_{I}| \geq |I|$ if $I$ is connected and not full,
\item $h(r + \delta - 1) - |F_I| \leq a$ if $|I| \leq h$,
\item $|I_i|(r + \delta - 1) - |F_{I_i}| \leq \left\lfloor\frac{h-1}{h-1-a}\right\rfloor - 1$ for $1 \leq i \leq s$. 
\end{enumerate} 
Hence, 
$$ 
\begin{array}{rcl} 
b & = & m (r+\delta-1) - |F_{[m]}|\\   
  & = & |J|(r+\delta-1) - |F_J| + \sum_{i = 1}^s |I_i| (r+\delta-1) - |F_{I_i}|.	 
\end{array} 
$$ 
Also, as $|J| < h$, we get
$$ 
|J|(r+\delta-1) - |F_J| + \sum_{i = 1}^s |I_i| (r+\delta-1) - |F_{I_i}| \leq a + s \left ( \left\lfloor\frac{h-1}{h-1-a}\right\rfloor - 1 \right). 
$$ 
Hence, as $b > a$, we have $\left\lfloor\frac{h-1}{h-1-a}\right\rfloor \geq 2$, or equivalently $a \geq \left \lceil \frac{h}{2} \right \rceil$. Now, assume that $a \geq \left \lceil \frac{h}{2} \right \rceil$. For the cardinality of $F_{[m]}$ we have that 
$$ 
|F_{[m]}| = (|J| + \sum_{i = 1}^s |I_i|) (r + \delta - 1) -b. 
$$ 
Using (iii), (iv) and the property that $|J| < h$, we now obtain that
\begin{equation} \label{eq:bound_n_optimal}
\left \lceil \frac{|F_{[m]}|}{r+\delta-1} \right \rceil \geq h - 1 + \left \lfloor \frac{b-a}{\left\lfloor\frac{h-1}{h-1-a}\right\rfloor - 1} \right \rfloor\left\lfloor \frac{h-1}{h-1-a}\right\rfloor + t,
\end{equation}
where 
$$
t = \left \{
\begin{array}{l}
0  \hbox{ if } (\left\lfloor\frac{h-1}{h-1-a}\right\rfloor - 1) | (b-a)\\
(b-a) - \left \lfloor \frac{b-a}{\left\lfloor\frac{h-1}{h-1-a}\right\rfloor - 1} \right \rfloor \left ( \left\lfloor \frac{h-1}{h-1-a}\right\rfloor - 1 \right ) + 1 \hbox{ otherwise}.
\end{array}
\right.
$$
 Rearranging equation \eqref{eq:bound_n_optimal}, we find the bound 
\begin{equation} \label{eq:rearranging}
\left \lceil \frac{n}{r+\delta-1} \right \rceil \geq 
\left \lceil \frac{k}{r} \right \rceil -1 + \left(b-a\right)\left(1+\frac{1}{t}\right),
\end{equation}
where 
$$
t=\left\lfloor a/\left( h -1-a\right)\right\rfloor= \left\lfloor a/\left(\left \lceil \frac{k}{r} \right \rceil -1-a\right)\right\rfloor.
$$
\end{proof}

\emph{Construction 3:} To prove Theorem \ref{thm:max_d}~(iii), we will construct graphs $(G,\gamma)$ that satisfy the assumptions in \ref{sec:special} with $(k,r,\delta)$, and then use Theorem \ref{theorem:graph_2}. For simplicity, denote 
$$
s = \left \lfloor \frac{\kr  - 1}{\kr-1-a} \right \rfloor \hbox{, }  u = \lkr -1-a +\left \lceil \frac{b-a}{s-1} \right \rceil \hbox{ and } x = \lkr - 1 - s \left ( \lkr-1-a \right) 
$$
Let
\begin{equation} \label{eq:dmax_(iii)}
\begin{array}{rl}
(i) & \hbox{$m \geq \left \lceil \frac{k}{r} \right \rceil -1 + \left(b-a\right)\left(1+\frac{1}{t}\right)$,  where $t= \left\lfloor a/\left(\left \lceil \frac{k}{r} \right \rceil -1-a\right)\right\rfloor$,}\\
(ii) & \hbox{$G$ be the graph  consisting of vertex-disjoint paths $P_1,\ldots,P_u$ with}\\
&
     |P_i| = \left \{
		  \begin{array}{lcl}
			s + 1 &\hbox{if}& 1 \leq i \leq x,\\
			s &\hbox{if}& x+1 \leq i \leq u - 1,\\
			s &\hbox{if}& i = u \hbox{ and } s-1 \mid b-a,\\
			b - a - \lfloor \frac{b-a}{s-1} \rfloor (s-1) + 1 & \hbox{if}& i = u \hbox{ and } s-1 \nmid b-a,\\
			\end{array}
			\right . \\
(iii) & \hbox{$\gamma(w) = 1$ for each $w \in W$.}
\end{array}
\end{equation}

\begin{proof}[\bf{Proof of Theorem \ref{thm:max_d} (iii), left implication}]:
We first note that statement \ref{sec:special}~(i) follows directly as $G$ has no cycles. Statement \ref{sec:special}~(ii) is a consequence of \eqref{eq:dmax_(iii)}~(iii). For statement \ref{sec:special}~(iii), we first remark that by \eqref{eq:bound_n_optimal} and \eqref{eq:rearranging}, we get
$$
\sum_{w \in W} \gamma(w) = |W| =  (\sum_{i \in u} |P_i| ) - u = (s+1)x + (u-1-x)s + |P_u| - u = b.
$$
Statement \ref{sec:special}~(iv) follows directly from \eqref{eq:dmax_(iii)}~(i). Statement \ref{sec:special}~(v) follows from the fact that
$$
\sum_{i = 1}^{\kr - 1 - a} |P_i| = \lkr - 1$$ and $$\sum_{i,j \in P, w = \{i,j\} \in W} \gamma(w) = \lkr - 1 - (\lkr - 1 - a) = a,  
$$
where $P = \bigcup_{1 \leq i \leq \kr - 1 - a} P_i$. Finally, statement \ref{sec:special}~(vi) follows from the property that $\gamma(w) = 1$ for all $w \in W$. The result now follows using \eqref{eq:bound_n_optimal} and \eqref{eq:rearranging}, which imply that 
$$
|\bigcup_{i = 1}^u P_i| = \left \lceil \frac{k}{r} \right \rceil -1 + \left(b-a\right)\left(1+\frac{1}{t}\right),
$$
where $t= \left\lfloor a/\left(\left \lceil \frac{k}{r} \right \rceil -1-a\right)\right\rfloor$. 
\end{proof}

\emph{Construction 4:} To prove Theorem \ref{thm:max_d}~(iv), we will construct graphs $G =G(\gamma;k,r,\delta,a,b)$ that satisfy the statements in Corollary \ref{corollary:graph_2}, and then use Theorem \ref{theorem:graph_2}. For simplicity, denote 
$$
s = \left \lfloor \frac{a}{\lceil \frac{k}{r} \rceil - 1} \right \rfloor \hbox{, } t = \left \lfloor \frac{r-1}{s} \right \rfloor \hbox{, } u = \left \lceil \frac{\kr + 1}{2}\right \rceil \hbox{ and } x = \left \lceil \frac{b - \lfloor \frac{b}{stu} \rfloor stu}{s} \right \rceil.
$$
Before we are ready to construct $G$, we need some subgraphs that will be the building blocks of $G$. For $1 \leq i \leq t$, let $P_i$ denote a path containing $u + 1$ vertices, with $p_i$ as start vertex and $q_i$ as end vertex. Now, let $B$ denote the graph obtained from $\sqcup_i P_i$ by identifying all $p_i$ to the same vertex $p\in B$, all all the end vertices $q_i$ the same vertex $q\in B$ 

%and the internal vertices of the paths be distinct in $B$, \emph{i.e.},
%$$
%\begin{array}{rl}
%\bullet & p = p_1 = \ldots = p_t,\\
%\bullet & q = q_1 = \ldots = q_t,\\
%\bullet & v^{(i)} \in P_i \setminus \{p,q\} \hbox{, } v^{(j)} \in P_j \setminus \{p,q\} \hbox{ and } i \neq j \Rightarrow v^{(i)} \neq v^{(j)}.
%\end{array}
%$$ 
We will now define a subgraph of $B'(h)$ of $B$, where $h$ denotes the number of edges that the subgraph should have. First we remark that the number of edges in $B$ equals $tu$. Now, order the edges in $B$ from 1 to $tu$ by starting from the start vertex $p$ and ending in the end vertex $q$ for each path, ordering the edges path by path from $P_1$ to $P_t$. This is
$$
\begin{array}{rl}
\bullet & \hbox{the path $P_i$ is the sequence of vertices $p= v_1^{(i)},v_2^{(i)}, \ldots v_u^{(i)},  v_{u+1}^{(i)} = q$, then}\\
        & \hbox{edge $\{v_{j}^{(i)},v_{j+1}^{(i)}\}$ is ordered as edge number $(i-1)u + j$.}   
\end{array}
$$
The subgraph $B'(h)$ is now defined as the subgraph of $B$ that consists of the edges numbered from 1 to $x$ and the vertices associated to these edges. By $B'(0)$ we mean the graph with no vertices. 

%Before we are ready to construct our graph $G$, we need the number of vertices in $B$ and $B'(h)$. First we notice that t
The number of vertices of $B$ equals the number of internal nodes in paths $P_1,\ldots,P_t$ plus 2, \emph{i.e.}, 
$$
t(u-1) + 2.
$$
Moreover, the number of vertices in $B'(h)$, when $h \neq 0$, equals 
%the number of internal vertices in the full length $P_i$-paths contained in $B'(h)$ plus the number of edges not contained in these paths plus $|\{p,q\} \cap B'(h)|$, \emph{i.e.}, 
$$
\left \lfloor \frac{h}{u} \right \rfloor (u-1) + (h - \left \lfloor \frac{h}{u} \right \rfloor u) + (1 + \min\{\left \lfloor \frac{h}{u} \right \rfloor, 1\}) = h - \left \lfloor \frac{h}{u} \right \rfloor + 1 +\min\{\left \lfloor \frac{h}{u} \right \rfloor, 1\}.
$$  
Now, for the construction of $G$, let
\begin{equation} \label{eq:dmax_(iv)} 
\begin{array}{rl}
(i) & \hbox{$m \geq \left \lfloor \frac{b}{stu} \right \rfloor \left ( t (u-1) + 2 \right) + y,$ where}\\
&
y = 
\left \{
\begin{array}{lcl}
0 &\hbox{if}& stu \mid b,\\
x - \left \lfloor \frac{x}{u} \right \rfloor + 1 +\min\{\left \lfloor \frac{x}{u} \right \rfloor, 1\} &\hbox{if}& stu \nmid b;\\
\end{array}
\right .\\
(ii) & \hbox{$G$ be the graph with vertices $[m]$ and edges $W$, where $G$ consists of
   $\lfloor \frac{b}{stu} \rfloor$ copies}\\ 
   & \hbox{of $B$, one copy of $B'(x)$ and possibly some additional isolated vertices;}\\
(iii) & 
\begin{array}{rl}
\cdot & \hbox{If $s \mid b$ then $\gamma(w) = s$ for all $w \in W$,}\\
\cdot & \hbox{If $s \nmid b$ then}\\
& \gamma(w) =  \left \{
			\begin{array}{lcl}
			s &\hbox{if}& w \hbox{ is not the vertex number x in $B'(x)$},\\
			b - \lfloor \frac{b}{s} \rfloor s &\hbox{if}& w \hbox{ is the vertex number x in $B'(x)$};
			\end{array}
			\right .
\end{array}
\end{array}
\end{equation}

\begin{proof}[\bf{Proof of Theorem \ref{thm:max_d} (iv)}]
As $\lceil \frac{k}{r} \rceil \geq 3$, and the smallest size of a cycle in the graph is $2u \geq \kr +1 $, it follows that $G$ has no $l$-cycles for $l \leq \max \{3,\kr\}$. Also, the property that $1 \leq \gamma(w) \leq \left \lfloor \frac{a}{\kr-1} \right \rfloor$ for all edges $w$ in $G$ follows from \eqref{eq:dmax_(iv)}~(iii) as $s = \left \lfloor \frac{a}{\kr-1} \right \rfloor$.  For statement 

\ref{sec:special}~(iii), we remark that
for a copy of $B$ in $G$, the total sum of $\gamma(w)$ for all edges $w$ in $B$ equals $stu$. Moreover, the total sum of $\gamma(w)$ for all edges $w$ in $B'(x)$ equals $sx$ if $s | b$, and  $s(x-1) + b - \lfloor \frac{b}{s} \rfloor s$ if $s \nmid b$. Hence
$$
\sum_{w\in W} \gamma(w) = \left \lfloor \frac{b}{stu} \right \rfloor \sum_{edges \hbox{ } w \in B}  \gamma(w)+ \sum_{edges \hbox{ } w \in B'(x)} \gamma(w) = b.
$$ 
Statement 

\ref{sec:special}(vi) follows as
$
t s \leq r-1$ and $2s \leq 2 \left \lfloor \frac{a}{2} \right \rfloor \leq a \leq r-1$.
Hence, by Corollary \ref{corollary:graph_2} and Theorem \ref{theorem:graph_2}, the theorem is now proven.
\end{proof}

\emph{Construction 5 when $2a \leq r-1$:} We will construct graphs $(G,\gamma)$ that satisfy the statements in \ref{sec:special} and then use Theorem~\ref{theorem:graph_2}. To construct $G$, let 
\begin{equation} \label{eq:dmax_(v_1)}
\begin{array}{rl}
(a) & m \geq \left \lceil \frac{b}{a} \right \rceil + 1;\\
(b) & G \hbox{ be the graph with vertices $[m]$ and edges $W = \{ \{i,i+1\} : 1 \leq i \leq \left \lceil \frac{b}{a} \right \rceil\}$;}\\
(c) & \hbox{For $\{i,i+1\} \in W$ let,}\\
    & \gamma(\{i,i+1\}) = \left \{
		  \begin{array}{lcl}
			a &\hbox{if}& i < \lceil \frac{b}{a} \rceil,\\
			a  &\hbox{if}& i = \lceil \frac{b}{a} \rceil \hbox{ and } a | b,\\
			b - \lfloor \frac{b}{a} \rfloor a &\hbox{if}& i = \lceil \frac{b}{a} \rceil \hbox{ and } a \nmid b.
			\end{array}
		  \right .
\end{array}
\end{equation}

\begin{proof}[\bf{Proof of Theorem \ref{thm:max_d}(v) when $\mathbf{2a \leq r-1}$}]
That $G$ has no $l$-cycles for $l \leq \max \{3,\kr\}$ follows directly as $G$ has no cycles. Also, that $1 \leq \gamma(w) \leq \left \lfloor \frac{a}{\kr-1} \right \rfloor$ for all edges $w$ in $G$ follows directly from \eqref{eq:dmax_(v_1)}~(c). For statement \ref{sec:special}~(iii), we obtain from \eqref{eq:dmax_(v_1)}~(c) that
$$
\sum_{w \in W} \gamma(w) = \left (\frac{b}{a} - 1 \right )a + a = b \hbox{ if } a|b,
$$
and
$$
\sum_{w \in W} \gamma(w) = \left \lfloor \frac{b}{a} \right \rfloor a + b - \left \lfloor \frac{b}{a} \right \rfloor a = b \hbox{ if } a \nmid b.
$$
As the maximal number of neighbors of a vertex in $G$ is 2, we get that for any $i \in [m]$,
$$
\sum_{w = \{i,j\} \in W} \gamma(w) \leq 2a \leq r-1.
$$
Hence, by Corollary \ref{corollary:graph_2} and Theorem \ref{theorem:graph_2}, the theorem is now proven.
\end{proof}

\emph{Construction 5 when $2a > r-1$:} 
In order to prove Theorem \ref{thm:max_d}(v), we will construct graphs $(G,\gamma)$ that satisfy the statements in Corollary \ref{corollary:graph_2}, and then use Theorem~\ref{theorem:graph_2}. For simplicity, denote $h = \lfloor \frac{r-1}{2} \rfloor$. Now, to construct $G$, let 
%\begin{equation} %\label{eq:dmax_(v_2)}\end{equation}
\begin{enumerate}[(a)]
\item $m \geq \left \lceil \frac{b}{h} \right \rceil + 1$;
\item $G$ be the graph with vertices $[m]$ and edges $W = \{ \{i,i+1\} : 1 \leq i \leq \left \lceil \frac{b}{h} \right \rceil\}$;
\item For $\{i,i+1\} \in W$, let
$$ \gamma(\{i,i+1\}) = \left \{
		  \begin{array}{lcl}
			h &\hbox{if}& i < \lceil \frac{b}{h} \rceil,\\
			h  &\hbox{if}& i = \lceil \frac{b}{h} \rceil \hbox{ and } h| b,\\
			b - \lfloor \frac{b}{h} \rfloor h &\hbox{if}& i = \lceil \frac{b}{h} \rceil \hbox{ and } h \nmid b.
			\end{array}
		  \right .$$
\end{enumerate}

\begin{proof}[\bf{Proof of Theorem \ref{thm:max_d}(v) when $\mathbf{2a > r-1}$}]
The proof is completely analogous to the proof Theorem \ref{thm:max_d}(v) when $2a \leq r-1$, replacing $a$ by $h$.
\end{proof}

\begin{proof}[\bf{Proof of Lemma \ref{lemma:gammoid_graph}}]
We want to prove that the matroid $M(\bf{F})$ obtained from the set system $\bf F$ in Theorem~\ref{thm:construction} is isomorphic to the gammoid $M(G)$ associated to the graph $G$ in Algorithm 1. We will proceed by proving that the independent sets $\mathcal{I}(M(\bf F))$ and $\mathcal{I}(M(G))$ are equal.

The independent sets of $M(G)$ are
$$
\begin{array}{l}
\mathcal{I}(M(G)) = \{X \subseteq E : \exists \hbox{ a set of } \hbox{ $|X|$ vertex-disjoint paths from $X$ to $T$}\}.
\end{array}
$$
As each path from $E$ to $T$ goes through the complete bipartite graph between $H$ and $T$, with $|T|=k$, we can equivalently write 
$$
\mathcal{I}(M(G)) = \{X \subseteq E : \exists \hbox{ a matching of size } |X| \hbox{ between } X \hbox{ and } H, \hbox{ and }|X|\leq k\}.
$$

By Theorem \ref{thm:construction}, the independent sets of the matroid $M(\bf{F})$ are
$$
\begin{array}{lll}
\mathcal{I}(M({\bf F})) & = &  \{X \subseteq E: |X \cap F_I| \leq \min \{|F_I| - \sum_{i \in I} \eta(F_i), k \}\hbox{ for each } I \subseteq [m] \}\\
& = & \{X \subseteq E: |X \cap F_I| \leq |F_I| - \sum_{i \in I} \eta(F_i)\hbox{ for each } I \subseteq [m], \hbox{ and }|X|\leq k\}
\end{array}
$$

Hence, it remains to show that there is a matching of size $|X|$ in $G$ between $X$ and $H$, if and only if \begin{equation}\label{eq:sizes}|X \cap F_I| \leq |F_I| - \sum_{i \in I} \eta(F_i)\hbox{ for each } I \subseteq [m].\end{equation}
By Halls theorem~\cite{Hall}, a bipartite graph $(U,V,E)$ has a matching of size $U$ if and only if $|N(A)|\geq |A|$ for every $A\subseteq U$. Here, $U$ and $V$ are the two parts of the bipartition, and $$N(A)=\{x\in V : \exists u\in A \hbox{ with } ux\in E\}$$ is the neighborhood of $A$.

Assume that there is a matching of size $|X|$ in $G$ between $X$ and $H$. Then, in particular, $X \cap F_I$ has at least $|X \cap F_I|$ neighbors in $H$, for every $I\subseteq[m]$. But all these neighbors $v\in H$ must have $h(v)\cap I\neq \emptyset$, by construction of $G$. Now as $$|\{v\in H: i\in h(v)\}|=\rho(F_i)=|F_i|-\eta(F_i)$$ and $$|\{v\in H: \{i, j\}\subseteq h(v)\}|=|F_i\cap F_j|,$$ it follows by induction on $|I|$ that $$|\{u\in H : h(u)\cap I\neq \emptyset\}|=|F_I| - \sum_{i \in I} \eta(F_i).$$ This number of neighbors must be at least 
$|X \cap F_I|$, wherefore~\eqref{eq:sizes} holds.

Assume, on the other hand, that~\eqref{eq:sizes} holds, and let $A\subseteq X$ be an arbitrary subset of $X$. To apply Hall's Theorem, we need to prove that $|N(A)|\geq |A|$. 
%But $v\in N(A)$ if and only if there is $u\in A$ with $s(u)\subseteq h(v)$, so equivalently we can write $$N(A)=\cup_{u\in A}\{v\in H :s(u)\subseteq h(v)\}$$

Write $A = A' \cup A''$ where $A' = \{u \in A : |s(u)| = 1\}$ and  $A'' = \{u \in A : |s(u)| \geq 2\}$ respectively. For $x \in A''$, by (7--9) in Algorithm\ref{alg:transversal}, there is a node $u_x \in H$ for which $(\overrightarrow{x,u_x}) \in D$. Consequently, for $H'' = \{u_x \in H : x \in X''\}$, we have
\begin{equation} \label{eq:H''}
|H''| = |\{u_x \in H : x \in X''\}| = |X''|.
\end{equation}
Moreover, for $x \in A$, let $H_x = \{ u \in H : (\overrightarrow{x,u}) \in D\}$. By construction,
$$
|H_x| = |H_{s(x)}| = |F_{s(x)}| - \eta(F_{s(x)}).
$$
Hence, for $H' = \{u : \exists x \in A' \hbox{ such that } (\overrightarrow{x,u}) \in D\}$ and $I' = \{s(x) : x \in A'\}$, we get
\begin{equation} \label{eq:H'}
|H'| = |H_{I'}| = |F_{I'}| - \sum_{i \in I'} \eta(F_i).
\end{equation}
Since $A\subset X$, and $X$ satisfies ~\eqref{eq:sizes}, we know by \eqref{eq:H''} and \eqref{eq:H'}, we now obtain that
$
|A| \leq |H' \cup H''| \leq |N(A)|$. As $A\subset X$ was chosen arbitrarily, we can apply Hall's theorem to the effect that there is a matching between $X$ and $H$ of size $|X|$. This concludes the proof.
\end{proof}

\end{document}